\def\BibTeX{{\rm B\kern-.05em{\sc i\kern-.025em b}\kern-.08em
    T\kern-.1667em\lower.7ex\hbox{E}\kern-.125emX}}
\newtheorem{theorem}{Theorem}[section]
\newtheorem{corollary}[theorem]{Corollary}
\newtheorem{lemma}[theorem]{Lemma}
\newtheorem{remark}[theorem]{Remark}
\newtheorem{definition}[theorem]{Definition}
\DeclareMathOperator{\Prob}{Prob}
\DeclareMathOperator{\id}{id}
\DeclareMathOperator{\Loc}{Loc}
\DeclareMathOperator{\Spec}{Spec}
\DeclareMathOperator{\AVP}{AVP}
\DeclareMathOperator{\Dec}{\Gamma^{\mathcal{D}}}
\DeclareMathOperator{\DecI}{Dec}
\DeclareMathOperator{\Enc}{\Gamma^{\mathcal{E}}}
\DeclareMathOperator{\EncI}{Enc}
\DeclareMathOperator{\EC}{EC}
\DeclareMathOperator{\Dist}{Dist}
\DeclareMathOperator{\Dil}{Dil}
\DeclareRobustCommand
\begin{document}

\title{Fault-tolerant Coding for Entanglement-Assisted Communication}

\author{Paula Belzig, Matthias Christandl, Alexander Müller-Hermes
\thanks{PB (paulabelzig@gmail.com) and MC (christandl@math.ku.dk) are with the Department of Mathematical Sciences,
University of Copenhagen, Universitetsparken 5, 2100 Copenhagen,
Denmark.
}
\thanks{AMH (muellerh@math.uio.no) is with the Department of
Mathematics, University of Oslo, P.O. box 1053, Blindern, 0316 Oslo,
Norway.}
\thanks{PB and MC acknowledge financial support from the European Research Council (ERC Grant Agreement no.\  81876), VILLUM FONDEN via the QMATH Centre of Excellence (grant no.\ 10059) and the Novo Nordisk Foundation (grant NNF20OC0059939 ‘Quantum for Life’). PB acknowledges funding from the European Union’s Horizon 2020 research and innovation programme under the Marie Skłodowska-Curie TALENT Doctoral fellowship (grant no.\ 801199). AMH acknowledges funding from the European Union’s Horizon 2020 
research and innovation programme under the Marie
Skłodowska-Curie Action TIPTOP (grant no.\ 843414) and by The Research 
Council of Norway (project\ 324944).

A part of this work has been presented at ISIT 2023 \cite{BCMH23SHORT}.}
}
% \author{
%   Paula Belzig,
%   %\\
% %  Department of Mathematical Sciences, University of Copenhagen,\\
% % Universitetsparken 5, 2100 Copenhagen, Denmark\\
% %   \texttt{pb@math.ku.dk} \\
% %   \And
% Matthias Christandl,
% % \\
% %   Department of Mathematical Sciences, University of Copenhagen,\\
% % Universitetsparken 5, 2100 Copenhagen, Denmark\\
% %   \texttt{christandl@math.ku.dk} \\
% %   \And
%   Alexander Müller-Hermes
% %   \\
% %     Department of Mathematics, University of Oslo, \\
% %     Moltke Moes vei 35,
% % 0851 Oslo, Norway\\
% %   \texttt{ muellerh@math.uio.no}\\
% }

%\title{How to Use the IEEEtran \LaTeX \ Templates}
%\author{IEEE Publication Technology Department
% \thanks{Manuscript created October, 2020; This work was developed by the IEEE Publication Technology Department. This work is distributed under the \LaTeX \ Project Public License (LPPL) ( http://www.latex-project.org/ ) version 1.3. A copy of the LPPL, version 1.3, is included in the base \LaTeX \ documentation of all distributions of \LaTeX \ released 2003/12/01 or later. The opinions expressed here are entirely that of the author. No warranty is expressed or implied. User assumes all risk.}}

% \markboth{Journal of \LaTeX\ Class Files,~Vol.~18, No.~9, September~2020}%
% {How to Use the IEEEtran \LaTeX \ Templates}

\maketitle

\begin{abstract}
Channel capacities quantify the optimal rates of sending information reliably over noisy channels. Usually, the study of capacities assumes that the circuits which the sender and receiver use for encoding and decoding consist of perfectly noiseless gates. In the case of communication over quantum channels, however, this assumption is widely believed to be unrealistic, even in the long-term, due to the fragility of quantum information, which is affected by the process of decoherence. Christandl and Müller-Hermes have therefore initiated the study of fault-tolerant channel coding for quantum channels, i.e. coding schemes where encoder and decoder circuits are affected by noise, and have used techniques from fault-tolerant quantum computing to establish coding theorems for sending classical and quantum information in this scenario. Here, we extend these methods to the case of entanglement-assisted communication, in particular proving that the fault-tolerant capacity approaches the usual capacity when the gate error approaches zero. A main tool, which might be of independent interest, is the introduction of fault-tolerant entanglement distillation. We furthermore focus on the modularity of the techniques used, so that they can be easily adopted in other fault-tolerant communication scenarios.
\end{abstract}

\begin{IEEEkeywords}
Fault-tolerance, channel capacity, entanglement distillation, quantum information theory, quantum computation
\end{IEEEkeywords}

\section{Introduction}

\label{sec-intro}

\IEEEPARstart{T}{he} successful transfer of information via a communication infrastructure is of crucial importance for our modern, highly-connected world. This process of information transfer, e.g., by wire, cable or broadcast, can be modelled by a communication channel $T$ which captures the noise affecting individual symbols. Instead of sending symbols individually, the sender and receiver typically agree to send messages using codewords made up from many symbols. With a well-suited code, the probability of receiving a wrong message can be made arbitrarily small. How well a given channel $T$ is able to transmit information can be quantified by the asymptotic rate of how many message bits can be transmitted per channel use with vanishing error using the best possible encoding and decoding procedure. This asymptotic rate is a characteristic of the channel, called its \textit{capacity} $C(T)$.

In \cite{Shannon48}, Shannon introduced this model for communication and derived a formula for $C(T)$ in terms of the mutual information between the input and output of the channel:
\[C(T)=\sup_{p_X} I(X:Y).\]
Here, $I(X:Y)=H(X)+H(Y)-H(XY)$ denotes the mutual information between the random variable $X$ and the output $Y=T(X)$, where $H(X)=-\sum_x p_X(x)\log(p_X(x) )$ is the Shannon entropy of the discrete random variable $X$ with a set of possible values $x$ that is distributed according to a probability distribution $p_X$.

Various generalizations of this communication scenario to a quantum channel $T:\mathcal{M}_{d_A} \rightarrow \mathcal{M}_{d_B}$, where $\mathcal{M}_d$ denotes the matrix algebra of complex $d\times d$-matrices, lead to different notions of capacity. Two important examples are the classical capacity of a quantum channel \cite{Holevo96,SW97}, which quantifies how well a quantum channel can transmit classical information encoded in quantum states, and the quantum capacity \cite{Lloyd97,Shor02,Devetak03}, where quantum information itself is to be transmitted through the channel. Both of these notions of capacity have entropic formulas. However, they are not known to admit a characterization which is independent of the number of channel copies, a so-called single-letter characterization, which would simplify their calculation.

The \textit{entanglement-assisted capacity}, where the encoding and decoding machines have access to arbitrary amounts of entanglement, does not only admit such a single-letter characterization, but it can in fact be regarded as the only direct formal analogue of Shannon's original formula, since the classical mutual information is simply replaced by its quantum counterpart \cite{BSST02}:
\[C^{ea}(T)= \sup_{\substack{\varphi \in \mathcal{M}_{d_A}\otimes\mathcal{M}_{d_{A'}}  \\  \varphi \text{ a pure quantum state} }} I(A':B)_{(T\otimes \id_{{A'}})(\varphi)} .\]
Here, $I(A:B)_{\rho}=H(A)_{\rho}+H(B)_{\rho}-H(AB)_{\rho}$ denotes the quantum mutual information with the von Neumann entropy $H(A)_{\rho}=-\Tr\left[\rho\log(\rho)\right]$ for a quantum state $\rho\in\mathcal{M}_{d_A}$.

In order to communicate with a given channel $T$, the encoding and decoding procedures need to be decomposed into quantum circuits as a sequence of quantum gates. The next step in a real-world scenario would be to implement these circuits on a quantum device so that we can realize an actual quantum communication system. However, this scenario generally does not consider one of the major obstacles of quantum computation: the high susceptibility of quantum circuits to noise and faults. In classical computers, the error rates of individual logical gates are known to be effectively zero in standard settings and at the time-scales relevant for communication \cite{Nicolaidis11}. The assumption of noiseless gates implementing the encoder and decoder circuit is therefore realistic in many scenarios. Real-life quantum gates, however, are affected by non-negligible amounts of noise. This is certainly a problem in near-term quantum devices, and it is generally assumed that it will continue to be a problem in the longer term \cite{Preskill18}.

Considering the encoder and decoder circuits as specific quantum circuits affected by noise therefore leads to potentially more realistic measures of how well information can be transferred via a quantum channel: \textit{fault-tolerant capacities}, which quantify the optimal asymptotic rates of transmitting information per channel use in the presence of noise on the individual gates. To construct suitable encoders and decoders for this scenario, we build on Christandl and Müller-Hermes' work \cite{CMH20}, which has introduced and analyzed fault-tolerant versions of the classical and quantum capacity, combining techniques from fault-tolerant quantum computing \cite{AB99,KLZ98,Kitaev03,AGP05} and quantum communication theory \cite{Wilde13}.

% More precisely, we extend their work to entanglement-assisted communication. In particular, we show that entanglement-assisted communication is still possible under the assumption of noisy quantum devices at almost the same achievable rates as the usual, faultless scheme: For any $\eta>0$, there exists a threshold such that, for any gate error probability $p$ below the threshold, we have 
% \[C^{ea}_{\mathcal{F}(p)}(T) \geq C^{ea}(T)-\eta,\]
% where $C^{ea}_{\mathcal{F}(p)}(T)$ denotes the fault-tolerant entanglement-assisted capacity, and $\lim_{p\rightarrow 0} C^{ea}_{\mathcal{F}(p)}(T) = C^{ea}(T)$.

More precisely, we extend their work to entanglement-assisted communication. In particular, we show that entanglement-assisted communication is still possible under the assumption of noisy quantum devices, with achievable rates given by \[C^{ea}_{\mathcal{F}(p)}(T) \geq C^{ea}(T)-f(p)\]
where $C^{ea}_{\mathcal{F}(p)}(T)$ denotes the fault-tolerant entanglement-assisted capacity for gate error probability $p$ below a threshold, and with $\lim_{p\rightarrow 0} f(p) \rightarrow 0$.

%More precisely, we extend their work to entanglement-assisted communication. In particular, we show that entanglement-assisted communication is still possible under the assumption of noisy quantum devices, with achievable rates given by \[C^{ea}_{\mathcal{F}(p)}(T) \geq C^{ea}(T)-f(p)\]
%where $C^{ea}_{\mathcal{F}(p)}(T)$ denotes the fault-tolerant entanglement-assisted capacity for gate error probability $p$ below a threshold, and with $\lim_{p\rightarrow 0} f(p) \rightarrow 0$.

In other words, the achievable rates for entanglement-assisted communication with noise-affected gates can be bounded from below in terms of the quantum mutual information reduced by a continuous function in the single gate error $p$. The usual faultless entanglement-assisted capacity is recovered for small probabilities of local gate error, which confirms and substantiates the practical relevance of quantum Shannon theory. 
This is not only relevant for communication between spatially separated quantum computers, but also for communication between distant parts of a single quantum computing chip, where the communication line may be subject to higher levels of noise than the local gates. In particular, the noise level for the communication line does not have to be below the threshold of the gate error.

It is important to note that many of the existing techniques from quantum fault-tolerance cannot directly be applied to the problem of communication, or will only allow for weaker results. Naive strategies with one (large) fault-tolerant implementation, where the communication channel is considered as part of the circuit noise, will only give rates approaching zero due to their high overhead implementations, and they will only work for channels which are very close to the identity, i.e. with noise below the threshold. In this work and for the results above, we are not only interested in transmitting with vanishing error, but also at communication rates that are as high as possible and for comparatively noisy channels.
%Using this strategy, the amount of maximally entangled states required would also scale badly as an effect of the high overhead of the computation as well as the fact that the states become mixed states due to the noise. Here, we therefore choose a different strategy in order to obtain the result above.

The manuscript is structured around the building blocks needed to achieve this result. In Section~\ref{sec-ft}, we briefly review concepts from fault-tolerance of quantum circuits used for communication. In Section~\ref{sec-ea-cap}, we outline how the fault-tolerant communication setup can be reduced to an information-theoretic problem which generalizes the usual, faultless entanglement-assisted capacity. In Section~\ref{sec-avp}, we prove a coding theorem for this information-theoretic problem. One important facet of communication with entanglement-assistance in our scenario comes in the form of noise affecting the entangled resource states, for which we introduce a scheme of fault-tolerant entanglement distillation in Section~\ref{sec-ft-ent-dist}. Finally, these techniques will be combined to obtain a threshold-type coding theorem for fault-tolerant entanglement-assisted capacity in Section~\ref{sec-coding-thm}.

\section{Fault-tolerant encoder and decoder circuits for communication}
\label{sec-ft}

\noindent Here, we review some aspects of common techniques for fault-tolerance, but for a detailed overview of the relevant concepts, we refer to \cite{AGP05} and \cite{CMH20}.

Note that our notation for mathematical objects from quantum theory is the same as in \cite[Section~II.A]{CMH20}. We define quantum channels as completely positive and trace preserving maps $T: \mathcal{M}_{d_A}\rightarrow \mathcal{M}_{d_B}$ where $\mathcal{M}_d$ denotes the matrix algebra of complex $d\times d$-matrices.
Probability distributions of $d$ elements are vectors in $\mathbbm{C}^d$ where each entry is positive and the sum of all entries equals $1$. Channels with classical input are defined as linear maps from $\mathbbm{C}^d$ that yield unit-trace positive semi-definite Hermitian matrices, and channels with classical output map unit-trace positive semi-definite Hermitian matrices to elements of $\mathbbm{C}^d$.

\subsection{Fault-tolerance for quantum circuits}
\label{sec-ft-circuits}

\noindent Quantum circuits are the dense subset of quantum channels which can be written as a composition of the following elementary gate operations: identity gate, Pauli gates, Hadamard gate H, T-gate, CNOT gate, discarding of a qubit (i.e. performing a trace of a subsystem), and measurements and preparations in the computational basis. It should be emphasized that the linear map realized by a quantum circuit might be written in different ways as a composition of elementary gates. As in \cite{CMH20}, we will assume that each circuit is specified by a particular circuit diagram detailing which elementary gates are to be executed at which time and place in the quantum circuit. The set of elementary operations in the circuit diagram of a circuit $\Gamma$ is the circuit's \emph{set of locations} $\Loc(\Gamma)$, and the number of elementary operations in the decomposition is denoted by $| \Loc(\Gamma)|$.

Given such a circuit diagram, we can model the noise affecting the resulting quantum circuit. For simplicity, we will always consider the \textit{i.i.d. Pauli noise model}, where one of the Pauli channels (with single Kraus operator $\sigma_x,\sigma_y$ or $\sigma_z$) is applied with probability $\frac{p}{3}$ in between the gates. Specifically, we use the following convention (as in \cite{AGP05}): For operations acting on a single qubit, a single Pauli channel is applied before (in case of a measurement or trace gate) or after (in case of single qubit gates and preparation gates) the gate itself; in case of the CNOT gate, a tensor product of two Pauli channels is applied after the CNOT gate (see also \cite[Definition~II.1]{CMH20}). This is a common and well-motivated noise model \cite{NC00}, further supplemented by comparison between experiment and classical simulations \cite{Google21}. Here, we choose to limit our work to the Pauli i.i.d. noise model in order to simplify the presentation, but we see no obstacles in extending our results to stochastic i.i.d. noise models. For more general noise models, different techniques may be required.

%, but we believe that similar results can be obtained.

%Although we choose to limit our work to the Pauli noise model, we emphasize that we see no obstacles in extending our results to other i.i.d. noise models or models with exponentially decaying correlated noise. %\textcolor{red}{In fact, for any i.i.d. noise model, we expect the main ingredients to our proofs (such as the effective interface theorems and the fault-tolerant entanglement distillation) and therefore also our main results to work with minimal modifications.}

The pattern in which noise occurs (i.e. the location where a Pauli channel is inserted, and which Pauli channel) is specified by a \textit{fault pattern} $F$ and a quantum circuit $\Gamma$ which is affected by noise according to a fault pattern $F$ is denoted by $[\Gamma]_F$. We can expand the linear map represented by a fault-affected quantum circuit in our model as a sum over Pauli-fault patterns: $[\Gamma]_{\mathcal{F}(p)} =\sum_{F\in\mathcal{F}(p)} P(F)[\Gamma ]_F $, where a Pauli fault pattern $F$ occurs with a probability $P(F)$ according to the probability distribution specified by $\mathcal{F}(p)$. In case of the i.i.d. Pauli noise model, each fault pattern $F$ occurs with a classical probability $P(F)=(1-p)^{l_{\id}}(p/3)^{l_x+l_y+l_z}$ where $l_{k}$ is the number of locations where a fault appears with each index corresponding to a type of Pauli-fault $k=\{ \id, x, y,z\}$.

To protect against noise, a quantum circuit can be implemented in a stabilizer error correcting code, where single, potentially fault-affected qubits (\textit{logical qubits}) can be encoded in a quantum state of $K$ physical qubits for each logical qubit. Let $\Pi_K$ be the group of all $K$-fold tensor products of the Pauli matrices $\sigma_x,\sigma_y,\sigma_z$ and $\mathbbm{1}$. A stabilizer code is obtained by selecting a commuting subgroup of $\Pi_n$ that does not contain $-\mathbbm{1}^{\otimes K}$ (called the \textit{stabilizer group}), and has an associated simultaneous $+1$-eigenspace $\mathcal{C}\in(\mathbbm{C}^{2})^{\otimes K}$, which is called the \textit{code space}. Here, we will assume this subspace to have dimension $2$, i.e., we encode a single qubit, where the stabilizer subgroup is generated by $K-1$ elements. We will denote these elements by $g_1,\ldots ,g_{K-1}$. 

Any product Pauli operator $E:(\mathbbm{C}^{2})^{\otimes K}\rightarrow (\mathbbm{C}^{2})^{\otimes K}$ either commutes or anti-commutes with elements of this stabilizer group and can therefore be associated to a vector
\[
s = (s_1,\ldots , s_{K-1})\in\mathbbm{F}^{K-1}_2 ,
\]
where $s_i=0$ if $E$ commutes with $g_i$ or $s_i=1$ if it anti-commutes with $g_i$. We will call the vector $s$ the syndrome associated to the Pauli operator $E$ and it is essentially the quantity that is measured when performing error correction with the stabilizer code.

A general quantum state can be decomposed in terms of eigenspaces associated to the syndromes, as
\[
(\mathbbm{C}^{2})^{\otimes K} = \bigoplus_{s\in\mathbbm{F}^{K-1}_2} W_s ,
\]
where $W_s$ is the common eigenspace of the operators $g_1,\ldots ,g_{K-1}$ where we have an eigenvalue $(-1)^{s_i}$ for $g_i$ for each $i$. Each $W_s$ is $2$-dimensional and can be associated to some Pauli operator $E_s$ such that 
\[
W_s = \text{span}\left\lbrace E_s\ket{\overline{0}},E_s\ket{\overline{1}}\right\rbrace,
\]
where $\left\lbrace\ket{\overline{0}},\ket{\overline{1}}\right\rbrace$ are the logical $\ket{0}$ and $\ket{1}$. 
%Then, we can choose a convenient basis for decoupling noise and data and define a unitary transformation 
Then, we can choose a decoder by defining a unitary transformation $D:(\mathbbm{C}^{2})^{\otimes K}\rightarrow \mathbbm{C}^2\otimes (\mathbbm{C}^{2})^{\otimes (K-1)}$ such that 
\[
D\left( E_s\ket{\overline{b}}\right) = \ket{b}\otimes \ket{s},
\]
for any $b\in \left\lbrace 0,1\right\rbrace$ and any $s\in\mathbbm{F}^{K-1}_2$. In principle, several choices of $E_s$ can be associated to a syndrome $s$, and the choice of the basis change $D$ singles out specific Pauli-errors which constitute the set of correctable errors of our code.

With this unitary, we define the \emph{ideal decoder} $\text{Dec}^*:\mathcal{M}^{\otimes K}_2\rightarrow \mathcal{M}_2\otimes \mathcal{M}^{\otimes (K-1)}_2$ given by $\text{Dec}^*(X) = DXD^\dagger$. We also define its inverse, the \emph{ideal encoder} $\text{Enc}^*:\mathcal{M}_2\otimes \mathcal{M}^{\otimes (K-1)}_2\rightarrow \mathcal{M}^{\otimes K}_2$. Finally, we define the ideal error correcting channel as the following object:
\[\EC^*=\EncI^*\circ (\id_2 \otimes \ketbra{0}{0}^{\otimes K-1} \Tr )\circ \DecI^*\]
where the second system corresponds to the syndrome state on the syndrome space $\mathcal{M}_2^{\otimes K-1}$. %
The syndrome state $\ketbra{0}{0}^{\otimes K-1}$ corresponds to the zero syndrome where $E_0 = \mathbbm{1}^{\otimes K}_2$. See also \cite[Section~II.C]{CMH20} for a detailed discussion of these ideal quantum channels.

It should be emphasized that the ideal encoder and decoder are not physical operations. They only appear as mathematical tools when analyzing noisy quantum circuits encoded in the stabilizer code. The output space of the ideal decoder is written as $\mathcal{M}_2\otimes \mathcal{M}^{\otimes (K-1)}_2$ to emphasize that we think differently about these two tensor factors, and we sometimes refer to the first one as the \emph{logical space}, and to the second one as the \emph{syndrome space}.

Throughout this work, we will frequently use notation where an operation marked with a star should be considered an ideal operation that is useful for circuit analysis, and not a fault-location. In particular, we will sometimes write $\id_2^*$ to denote an identity map between qubits, which should not be taken to be a fault-affected storage.

Gates on logical qubits are implemented as so-called gadgets on physical qubits, using the operations $\EncI^*$ and $\DecI^*$ to map between the spaces (see also \cite[Definition~II.3]{CMH20}). For a circuit $\Gamma:\mathcal{M}_2^{\otimes n}\rightarrow \mathcal{M}_2^{\otimes m}$, its implementation in a code $\mathcal{C}$, $\Gamma_{\mathcal{C}}:\mathcal{M}_2^{\otimes nK}\rightarrow\mathcal{M}_2^{\otimes mK}$, is obtained by replacing each gate by its corresponding gadget and inserting error correction gadgets in between the gadgets.
If the physical qubits of this implementation are subject to the noise model $\mathcal{F}(p)$, then we denote this fault-affected implementation of the circuit $\Gamma$ by $[\Gamma_{\mathcal{C}}]_{\mathcal{F}(p)}:\mathcal{M}^{\otimes nK}\rightarrow\mathcal{M}^{\otimes mK}$. 

In this work, like in \cite{CMH20}, we will consider implementations in the concatenated 7-qubit Steane code. The 7-qubit Steane code introduced in \cite{Steane96} is an error correcting code that can correct all single-qubit errors, and that can be concatenated to improve protection against errors \cite{KL96}. For the concatenated 7-qubit Steane code, as shown in \cite{AGP05}, an implementation where the error correction's gadget is performed between each operation minimizes the accumulation of errors. Under this implementation, the concatenated 7-qubit Steane code fulfills a threshold theorem for computation. More precisely, it has been shown that the difference between a quantum circuit $\Gamma$ with classical input and output and its fault-affected implementation $[\Gamma_{\mathcal{C}}]_{\mathcal{F}(p)}$ in the concatenated 7-qubit Steane code is bounded by $p_0 \Big(\frac{p}{p_0}\Big)^{2^l} |\Loc(\Gamma)|$ for any $p$ below a threshold $p_0$ and concatenation level $l$ \cite{AGP05, CMH20}. By choosing the level $l$ large enough, this error can be made arbitrarily small.

Fault-tolerance can in principle be achieved by other quantum error correcting codes \cite{AB99,KLZ98,Kitaev03,AGP05}. One could also consider using two different quantum error correcting codes for the encoder and decoder circuit in our setup. For simplicity, we restrict ourselves to using the concatenated 7-qubit Steane code \cite{Steane96,KL96} with the same level of concatenation for both circuits, but our definitions can straightforwardly be extended to the more general case.

\subsection{Fault-tolerance for communication}
\label{sec-ft-interfaces}

\noindent By performing error correction, a quantum circuit with classical input and output that is affected by faults at a low rate can thus be implemented in a way such that it behaves like an ideal circuit (i.e. a circuit without faults) by threshold-type theorems. These code implementations cannot, however, be directly used in the encoding and decoding circuits for communication, as they require classical input and output, whereas the encoder's output in our communication setup, for instance, serves as input into the noisy quantum channel. The fault-tolerant implementation of an encoder and decoder in a communication setting therefore leads to the message being encoded in the corresponding code space. In the case of the concatenated 7-qubit Steane code, the number of physical qubits increases by a factor of $7$ for each level of concatenation. To obtain our results for communication rates, we therefore perform an additional circuit mapping information in the code space to the physical system where the quantum channel acts. This circuit will be referred to as \textit{decoding interface} $\DecI$. Similarly, another circuit can be performed to transfer the channel's output into the code space where it can be processed by the fault-tolerantly implemented decoder. This circuit is called \textit{encoding interface} $\EncI$. These circuits, introduced in Definition~\ref{def-interfaces}, are also affected by faults.

\begin{definition}[{Interfaces, \cite[Definition III.1]{CMH20}}] \label{def-interfaces}
 Let $\mathcal{C}\in (\mathbbm{C}^2)^{\otimes K}$ be a stabilizer code with $\dim(\mathcal{C})=2$, and let $\ketbra{0}{0}\in \mathcal{M}_2^{\otimes K-1}$ denote the state corresponding to the zero-syndrome. Let $\EncI^*:\mathcal{M}_2^{\otimes K}\rightarrow\mathcal{M}_2^{\otimes K}$ and $\DecI^*:\mathcal{M}_2^{\otimes K}\rightarrow\mathcal{M}_2^{\otimes K}$ be the ideal encoding and decoding operations. Then, we have:
\begin{enumerate}
    \item An encoding interface $\EncI:\mathcal{M}_2\rightarrow\mathcal{M}_2^{\otimes K}$ for a code $\mathcal{C}$ is a quantum circuit with an error correction as a final step, and fulfilling \[\DecI^*\circ \EncI =\id_2 \otimes \ketbra{0}{0} \]
    \item A decoding interface $\DecI:\mathcal{M}_2^{\otimes K}\rightarrow\mathcal{M}_2$ is a quantum circuit fulfilling \[\DecI \circ \EncI^* (\cdot \otimes \ketbra{0}{0}) =\id_2 (\cdot) \]
    \end{enumerate}
\end{definition}

In contrast to $\DecI^*$ and $\EncI^*$, which are objects used for the mathematical analysis of the circuits and not implemented in practice, the interfaces $\EncI$ and $\DecI$ are quantum circuits consisting of gates that can be affected by faults. Since this can lead to faulty inputs to a quantum channel, we will need interfaces that are tolerant against such faults. Unfortunately, it is impossible to make the overall failure probability of interfaces arbitrarily small, since they will always have a first (or last) gate that is executed on the physical level and not protected by an error correcting code, resulting in a failure with a probability of at least gate error $p$. Fortunately, it is possible to construct qubit interfaces for concatenated codes which fail with a probability of at most $2cp$ for some constant $c$, which are good enough for our purposes~\cite{MGHHLPP14,CMH20}.

\begin{theorem}[{Correctness of interfaces for the concatenated 7-qubit Steane code, \cite[Theorem III.3]{CMH20}}] \label{thm-correct-interfaces}
%Let $P$ denote the probability distribution over Pauli fault patterns induced by the fault model $\mathcal{F}(p)$. 
For each $l\in \mathbbm{N}$, let ${\mathcal{C}_l}$ denote the $l$-th level of the concatenated 7-qubit Steane code with threshold $p_0$. Then,
there exist interface circuits $\EncI_l:\mathcal{M}_2 \rightarrow \mathcal{M}_2^{\otimes 7^l}$ and $\DecI_l:\mathcal{M}_2^{\otimes 7^l}\rightarrow\mathcal{M}_2$  for the $l$-th level of this code such that for any $0\leq p \leq \frac{p_0}{2}$, we have
\begin{enumerate}
    \item \[\Prob( \big[\EncI_l \big]_F \text{ is not correct}) \leq 2cp,\] where $\EncI_l$ is correct under a Pauli fault pattern $F$ if there exists a quantum state $\sigma_S(F)$ on the syndrome space such that $\DecI^*\circ \big[\EncI \big]_F = \id_2 \otimes \sigma_S(F)$. The probability is taken over the distribution of $F$ according to the fault model $\mathcal{F}(p)$.
    \item \[\Prob( \big[\DecI_l \big]_F \text{ is not correct}) \leq 2cp,\] where $\DecI$ is correct under a Pauli fault pattern $F$ if $  \big[\DecI_l\circ \EC_l \big]_F = (\id_2 \otimes \Tr_S)\circ \DecI_l^*\circ  \big[\EC_l \big]_F $ where $\Tr_S$ traces out the syndrome space.
\end{enumerate}
Here, $c=p_0 \max \{|\Loc(\EncI_1)|,|\Loc(\DecI_1\circ \EC)|\}$ is a constant that does not depend on $l$ or $p$.
\end{theorem}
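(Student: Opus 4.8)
The statement coincides with \cite[Theorem~III.3]{CMH20}, so I would follow that argument; I sketch it here. The interfaces are built recursively in the concatenation level. At level $1$, let $\EncI_1$ and $\DecI_1$ be explicit constant-size Steane encoding and decoding circuits, each terminated by an error-correction gadget as required by Definition~\ref{def-interfaces}; their location counts $|\Loc(\EncI_1)|$ and $|\Loc(\DecI_1\circ\EC)|$ are the only circuit-size quantities that enter the final bound. For the step $l-1\to l$, since the level-$l$ code is the $7$-qubit Steane code of seven level-$(l-1)$ blocks, obtain $\EncI_l$ by first applying $\EncI_{l-1}$ to carry the bare qubit into the level-$(l-1)$ code and then running the Steane encoding circuit $\EncI_1$ with each of its locations replaced by the corresponding level-$(l-1)$ gadget (acting on level-$(l-1)$ blocks, with the six extra blocks prepared fault-tolerantly) and terminating with a level-$l$ error-correction gadget; build $\DecI_l$ symmetrically as a level-lowering circuit followed by $\DecI_{l-1}$. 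The essential structural point is that, apart from the single physical block attached to the bare port, every component is protected by at least the level-$1$ code, which is exactly why the interface failure probability is $\Theta(p)$ --- not smaller, by the remark preceding the theorem, and, as we now argue, not larger.

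Correctness is proven by induction on $l$. Decompose a fault pattern $F$ into its restriction to the unprotected innermost block --- a copy of $\EncI_1$ on physical qubits --- and its restriction to the level-raising step at each stage $k\in\{1,\dots,l-1\}$, which is a copy of $\EncI_1$ realized with level-$k$ gadgets. A union bound over the $|\Loc(\EncI_1)|$ locations of the innermost block bounds its contribution to the failure probability by $|\Loc(\EncI_1)|\,p$; for stage $k$, the extended-rectangle analysis of the concatenated Steane code \cite{AGP05,CMH20} bounds the probability that $F$ prevents this step from implementing the ideal level-raising map up to some residual syndrome state by $|\Loc(\EncI_1)|\,p_0(p/p_0)^{2^k}$. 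Summing over $k$ and using the hypothesis $p\le p_0/2$ makes the series $\sum_{k\ge1}p_0(p/p_0)^{2^k}$ dominated by its first term and hence $O(p)$, so the total failure probability is $O(p)$; carrying the implied constants through this accounting --- and noting that for $\DecI_l$ it is $\DecI_1\circ\EC$ whose locations are relevant, since correctness of $\DecI_l$ is defined via $[\DecI_l\circ\EC_l]_F$ --- yields the claimed bound $2cp$ with $c=p_0\max\{|\Loc(\EncI_1)|,|\Loc(\DecI_1\circ\EC)|\}$.

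I expect the main obstacle to be precisely the bookkeeping forced by the bare, non-encoded port: the threshold theorem recalled in Section~\ref{sec-ft-circuits} is stated only for circuits with classical input and output and cannot be applied directly. One must instead work throughout with the weaker notion of correctness in the statement --- that the ideal decoder composed with the fault-affected interface factors as the ideal operation tensored with some syndrome state $\sigma_S(F)$, so that all residual damage is confined to the syndrome register --- and show that this property is preserved by the inductive step, i.e.\ that a correct level-$(l-1)$ interface followed by a level-raising step with a non-malignant fault pattern is again correct at level $l$. Because the interface circuits are defined purely as quantum channels on qubits, with no reference to the kind of information being transmitted, the analysis of \cite{MGHHLPP14} and \cite[Theorem~III.3]{CMH20} applies verbatim in the entanglement-assisted setting, so in practice I would invoke it rather than reproduce it in full.
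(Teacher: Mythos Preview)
Your proposal is appropriate: the paper does not prove this theorem at all but simply cites it as \cite[Theorem~III.3]{CMH20}, and you correctly recognize this and sketch the recursive-interface construction and inductive correctness argument from that reference. There is nothing to compare against in the present paper, and your outline of the CMH20 argument---recursive level-raising, union bound over the unprotected innermost block, geometric suppression of the protected stages via the threshold theorem, and the bookkeeping that isolates residual errors to the syndrome register---matches the cited proof.
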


In combination with the threshold theorem from \cite{AGP05}, this can be used to prove extensions of Lemma III.8 from \cite{CMH20} for the combination of circuit and interface with additional quantum input (cf. Figure~\ref{fig-eff-channel}). 
Here, $\id_{cl}:\mathbbm{C}^2\rightarrow \mathbbm{C}^2$ denotes the identity map on a classical bit, and $\| T-S \|_{1 \rightarrow 1} := \sup \{ \| (T-S)(\rho) \|_{\Tr}   | \rho\in \mathcal{M}_{d_A}\text{a quantum state}\}$ denotes the 1-to-1 distance of two quantum channels $T:\mathcal{M}_{d_A}\rightarrow \mathcal{M}_{d_B} $ and $S:\mathcal{M}_{d_A}\rightarrow \mathcal{M}_{d_B} $, where $ \| \rho - \sigma \|_{\Tr} $ denotes the trace distance induced by the trace norm $\| \rho \|_{\Tr} := \frac{1}{2} \| \rho\|_{1} =\frac{1}{2}  \Tr(\sqrt{ \rho ^{\dagger} \rho}) $. 

These lemmas, and the subsequent effective channel model in Theorem \ref{thm-eff-channel} are key ingredients to the analysis of fault-tolerant communication because they allow us to connect the faulty encoder and decoder circuits of the communication scheme with effective, fault-less versions of the encoder and decoder circuits in an effective communication problem.

\begin{lemma}[Effective encoding interface]
\label{thm-effective-encoder}
Let $m,n,k\in \mathbbm{N}$ and let $\Gamma:\mathcal{M}_2^{\otimes n+k} \rightarrow \mathbbm{C}^{2^m}$ be a quantum circuit with quantum input and classical output.
For each $l\in \mathbbm{N}$, let ${\mathcal{C}_l}$ denote the $l$-th level of the concatenated 7-qubit Steane code with threshold $p_0$. Moreover, let $\EncI_l: \mathcal{M}_2 \rightarrow  \mathcal{M}_2^{\otimes 7^l}$ 
be the encoding interface circuit for the $l$-th level of the concatenated 7-qubit Steane code with threshold $p_0$.

Then, for any $0\leq p \leq \frac{p_0}{2}$ and any $l\in\mathbbm{N}$, there exists a quantum channel $N_l:\mathcal{M}_2\rightarrow \mathcal{M}_2$, which only depends on $l$ and the interface circuit $\EncI_l$, such that:
\begin{IEEEeqnarray*}{Cl}
 &\|  \big[\Gamma_{\mathcal{C}_l} \circ (\EncI_l^{\otimes n} \otimes \EC_l^{\otimes k}) \big]_{\mathcal{F}(p)}  \\&\hspace{0.25cm}-(\Gamma\otimes \Tr_S) \circ \left(N_{enc,p,l}^{\otimes n} \otimes (\DecI_l^*\circ [\EC_l]_{\mathcal{F}(p)} )^{\otimes k})\right) \|_{1 \rightarrow 1}  \\ &\leq 2 p_0 \Big(\frac{p}{p_0}\Big)^{2^l} |\Loc(\Gamma)| 
\end{IEEEeqnarray*}
with
\[N_{enc,p,l} = (1-2cp) \id_2 +2cpN_l\]
where $c=p_0 \max \{|\Loc(\EncI_1)|,|\Loc(\DecI_1\circ \EC)|\}$.
\end{lemma}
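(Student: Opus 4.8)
The plan is to decompose the fault-affected implementation $[\Gamma_{\mathcal{C}_l}\circ(\EncI_l^{\otimes n}\otimes\EC_l^{\otimes k})]_{\mathcal{F}(p)}$ into the part coming from the $n$ encoding interfaces, the part coming from the $k$ error-correction gadgets on the quantum-input wires, and the part coming from the bulk circuit $\Gamma_{\mathcal{C}_l}$, and then to handle each piece by invoking the results already available: Theorem~\ref{thm-correct-interfaces} for the interfaces and the concatenated Steane-code threshold theorem of \cite{AGP05,CMH20} for the bulk. Since the total fault distribution $\mathcal{F}(p)$ factorizes over circuit locations, I would first write $[\Gamma_{\mathcal{C}_l}\circ(\EncI_l^{\otimes n}\otimes\EC_l^{\otimes k})]_{\mathcal{F}(p)} = \sum_F P(F)\,[\Gamma_{\mathcal{C}_l}]_{F_\Gamma}\circ\big([\EncI_l]_{F_1}\otimes\cdots\otimes[\EncI_l]_{F_n}\otimes[\EC_l]_{F'_1}\otimes\cdots\otimes[\EC_l]_{F'_k}\big)$, where $F$ restricts to independent fault patterns $F_\Gamma,F_1,\dots,F_n,F'_1,\dots,F'_k$ on the disjoint location sets.

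Next I would deal with the encoding interfaces. For each $i$, Theorem~\ref{thm-correct-interfaces}(1) tells us that with probability at least $1-2cp$ over $F_i$, there is a syndrome state $\sigma_S(F_i)$ with $\DecI^*\circ[\EncI_l]_{F_i}=\id_2\otimes\sigma_S(F_i)$. Averaging the "correct" and "incorrect" branches separately, one obtains $\mathbb{E}_{F_i}\,\DecI^*\circ[\EncI_l]_{F_i} = (1-2cp)(\id_2\otimes\bar\sigma_S) + 2cp\,(\text{something})$, which after composing back with $\EncI^*$ (absorbing $\bar\sigma_S$) can be packaged as $\EncI^*\circ\big((1-2cp)\id_2 + 2cp\,N_l\big)\otimes(\text{fixed syndrome})$ for a suitable channel $N_l$ depending only on $l$ and $\EncI_l$; this is exactly the claimed $N_{enc,p,l}$. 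Crucially, when $[\EncI_l]_{F_i}$ is correct its output lies in the code space with a fixed (fault-pattern-dependent) syndrome, so feeding it into the subsequent error-correction gadget at the start of $\Gamma_{\mathcal{C}_l}$ resets the syndrome and the circuit sees a clean logical input — this is the standard "interface + EC" composition argument from \cite{CMH20}, Lemma III.8, which I would cite and adapt. The error-correction gadgets $\EC_l$ on the $k$ quantum-input wires are simply absorbed into the statement as $(\DecI_l^*\circ[\EC_l]_{\mathcal{F}(p)})^{\otimes k}$ on the right-hand side — no simplification is performed there since the quantum input is arbitrary and these gadgets are genuine fault-locations.

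For the bulk circuit $\Gamma_{\mathcal{C}_l}$, once the inputs have been reduced to logical-level inputs (clean logical qubits from the interfaces, plus the error-corrected quantum wires), I would invoke the threshold theorem: the difference between $[\Gamma_{\mathcal{C}_l}]_{\mathcal{F}(p)}$ acting on logical inputs and the ideal $\Gamma\otimes\Tr_S$ is bounded in $\|\cdot\|_{1\to1}$ by $p_0(p/p_0)^{2^l}|\Loc(\Gamma)|$. Collecting the interface error (already exactly accounted for by the $N_{enc,p,l}$ definition, hence contributing nothing to the error bound) and the bulk error, and using the triangle inequality together with the fact that composing with channels and tensoring with identities does not increase the diamond/$1\to1$ norm distance, yields a total bound of the form $(\text{const})\cdot p_0(p/p_0)^{2^l}|\Loc(\Gamma)|$; tracking the constant gives the stated factor $2$. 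I expect the main obstacle to be the bookkeeping in the first reduction: showing cleanly that the "correct" branch of each interface, after the following error-correction gadget, is genuinely indistinguishable from a clean logical input fed into the ideal circuit — i.e. that the syndrome introduced by a correct-but-faulty interface is harmless. This is where the precise definition of "correct interface" in Theorem~\ref{thm-correct-interfaces} and the placement of an error correction as the final step of $\EncI_l$ do the real work, and it must be combined carefully with the level-reduction structure of the concatenated code so that the interface failure probability $2cp$ (rather than the doubly-exponentially small bulk error) is the honest bound.
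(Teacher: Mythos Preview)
The paper does not give a self-contained proof of this lemma; it presents it as an extension of \cite[Lemma~III.8]{CMH20} combined with the threshold theorem of \cite{AGP05}, and your proposal follows precisely that route --- factorizing the fault distribution over disjoint location sets, packaging the averaged interface behaviour via Theorem~\ref{thm-correct-interfaces} into the convex combination $N_{enc,p,l}=(1-2cp)\id_2+2cpN_l$, leaving the $k$ error-correction gadgets on the quantum-input wires as $(\DecI_l^*\circ[\EC_l]_{\mathcal{F}(p)})^{\otimes k}$, and then invoking the threshold bound for the bulk of $\Gamma_{\mathcal{C}_l}$. This is exactly the intended argument, and your identification of the delicate step (that a ``correct'' interface followed by the initial error correction in $\Gamma_{\mathcal{C}_l}$ yields a clean logical input) is also the crux of the proof of \cite[Lemma~III.8]{CMH20} that you correctly plan to cite and adapt.
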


\begin{lemma}[Effective decoding interface]
\label{thm-eff-decoder}
Let $m,n,k\in \mathbbm{N}$ and $\Gamma: \mathbbm{C}^{2^m} \otimes \mathcal{M}_2^{\otimes k}\rightarrow \mathcal{M}_2^{\otimes n} $ be a quantum circuit with quantum and classical input and quantum output.
For each $l\in \mathbbm{N}$, let ${\mathcal{C}_l}$ denote the $l$-th level of the concatenated 7-qubit Steane code with threshold $p_0$.  Moreover, let %$\EncI_l: \mathcal{M}_2 \rightarrow  \mathcal{M}_2^{\otimes 7^l}$ and 
$\DecI_l: \mathcal{M}_2^{\otimes 7^l} \rightarrow \mathcal{M}_2$ be the decoding interface circuit for the $l$-th level of the concatenated 7-qubit Steane code with threshold $p_0$.

Then, for any $0\leq p \leq \frac{p_0}{2}$ and any $l\in\mathbbm{N}$, there exists a quantum channel $N_l:\mathcal{M}_2 \otimes \mathcal{M}_2^{\otimes(7^l-1)} \rightarrow \mathcal{M}_2$, which only depends on $l$ and the interface circuit $\DecI_l$, such that:
    \begin{IEEEeqnarray*}{Cl}
     \IEEEeqnarraymulticol{2}{l}{  \| \big[ \DecI_l^{\otimes n}  \circ \Gamma_{\mathcal{C}_l} \circ (\id_{cl}^{\otimes m} \otimes \EC_l^{\otimes k}) \big]_{\mathcal{F}(p)} } \\& -N_{dec,p,l}^{\otimes n} \circ \left(\Gamma\otimes S_{S} \right)\circ (\id_{cl}^{\otimes m} \otimes (\DecI_l^* \circ [\EC_l]_{\mathcal{F}(p)})^{\otimes k}) \|_{1 \rightarrow 1} \\ \leq& 2 p_0 \Big(\frac{p}{p_0}\Big)^{2^l} |\Loc(\Gamma)| + 2np_0|\Loc(\EncI_1)| \Big(\frac{p}{p_0}\Big)^{2^l-1}  ,
    \end{IEEEeqnarray*}
where $S_S: \mathcal{M}_2^{\otimes k(7^l-1) }\rightarrow  \mathcal{M}_2^{\otimes n(7^l-1) }$ is some quantum channel on the syndrome space, and with
 %where $\id_S$ denotes an identity on the entire syndrome space, and with
\[N_{dec,p,l} = (1-2cp) \id_2 \otimes \Tr_{S} +2cp N_l\]
where $c=p_0 \max \{|\Loc(\EncI_1)|,|\Loc(\DecI_1\circ \EC)|\}$.
\end{lemma}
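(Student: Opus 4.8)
The plan is to adapt the proof of Lemma~III.8 in \cite{CMH20}, of which this lemma is an extension to circuits carrying an additional quantum register on which error correction is performed and whose output lives in the code space and must therefore be brought out by the decoding interfaces; the argument runs in parallel to that of Lemma~\ref{thm-effective-encoder} with the roles of input and output exchanged. First I would expand both sides over fault patterns. In the i.i.d.\ Pauli model the faults occurring inside $\Gamma_{\mathcal{C}_l}$ together with the $k$ input error-correction gadgets $\EC_l^{\otimes k}$ are independent of the faults in the $n$ decoding interfaces $\DecI_l$, and the latter are mutually independent; so I may condition on a single fault pattern for $\Gamma_{\mathcal{C}_l}$ and its input gadgets, on separate fault patterns $F_1,\dots,F_n$ for the interfaces, and average at the end.

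Next I would apply the threshold theorem of \cite{AGP05} to $\Gamma_{\mathcal{C}_l}\circ(\id_{cl}^{\otimes m}\otimes\EC_l^{\otimes k})$. Since that theorem is quoted above only for circuits with classical input and output, I would use the standard device of composing the code-implemented circuit with faultless ideal decoders $\DecI^*$ on each of its code blocks; these are mathematical objects, not fault locations, and may be inserted freely in the analysis. After this idealisation the logical content evolves, up to an error bounded by $p_0(p/p_0)^{2^l}|\Loc(\Gamma)|$, exactly under the ideal circuit $\Gamma$, while all induced action on the syndrome sectors---arising from the faulty input gadgets $[\EC_l]_{\mathcal{F}(p)}$ and from the intermediate error-correction gadgets---is collected into a single syndrome-space channel $S_S$. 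The $k$ input gadgets must be left physical and faulty (they are not idealised), which is exactly why the right-hand side of the claim carries $(\DecI^*\circ[\EC_l]_{\mathcal{F}(p)})^{\otimes k}$ rather than a clean encoder. That $S_S$ is independent of the logical data and of the classical register follows from the stabiliser code decoupling the logical and syndrome sectors, a property respected by all gate gadgets, error-correction gadgets and ideal decoders involved.

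Then I would handle the decoding interfaces. By Theorem~\ref{thm-correct-interfaces}(2), for a fraction at least $1-2cp$ of the interface fault patterns one has $[\DecI_l\circ\EC_l]_F=(\id_2\otimes\Tr_S)\circ\DecI_l^*\circ[\EC_l]_F$, i.e.\ the faulty interface, grouped with the trailing error-correction gadget of the corresponding output block of $\Gamma_{\mathcal{C}_l}$, acts like the ideal decoder followed by discarding the syndrome; on the remaining fraction at most $2cp$ I would bound the interface by an arbitrary channel and call the resulting conditional average channel $N_l$, the same for every leg by symmetry and depending only on $l$ and $\DecI_l$. Averaging over the $n$ independent interface fault patterns turns the $i$-th output leg into $(1-2cp)(\id_2\otimes\Tr_S)+2cp\,N_l$, so the $n$ legs together produce $N_{dec,l,p}^{\otimes n}$ with $N_{dec,l,p}=(1-2cp)\,\id_2\otimes\Tr_S+2cp\,N_l$, while the syndrome registers are threaded through $S_S$. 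All replacements above being contractions in the $1\!\rightarrow\!1$ norm, the triangle inequality yields the stated bound: a contribution $2p_0(p/p_0)^{2^l}|\Loc(\Gamma)|$ from the level-reduction of $\Gamma_{\mathcal{C}_l}$ (the factor $2$ appearing as in Lemma~\ref{thm-effective-encoder}), plus a contribution $2np_0|\Loc(\EncI_1)|(p/p_0)^{2^l-1}$, one summand per output block, from reconciling the trailing error-correction gadget of that block with its decoding interface.

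I expect this last reconciliation to be the main obstacle. Theorem~\ref{thm-correct-interfaces}(2) is phrased in terms of $\DecI_l\circ\EC_l$, so the level-reduction analysis of $\Gamma_{\mathcal{C}_l}$ must be run in a form that genuinely exposes a usable level-$l$ error-correction gadget immediately before each output block, to be matched with the corresponding $\DecI_l$; this re-organisation weakens the protection by one concatenation step at the boundary, which is the origin of the second error term and of its exponent $2^l-1$ (rather than $2^l$), and the accompanying bookkeeping is the technically delicate point. A secondary point requiring care is to verify that $S_S$ and $N_l$ are genuinely input-independent quantum channels, which again rests on the logical/syndrome decoupling of the stabiliser code together with the Pauli structure of the noise model.
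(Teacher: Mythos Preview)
The paper does not give an explicit proof of this lemma: it is stated immediately after Theorem~\ref{thm-correct-interfaces} as one of two ``extensions of Lemma~III.8 from \cite{CMH20} for the combination of circuit and interface with additional quantum input,'' with no further argument. Your proposal is therefore exactly in line with what the paper intends---you adapt the proof of \cite[Lemma~III.8]{CMH20} by carrying the extra quantum input register through the level-reduction of $\Gamma_{\mathcal{C}_l}$ and then handling the output decoding interfaces via Theorem~\ref{thm-correct-interfaces}(2), with the trailing $\EC_l$ of each output block grouped with the corresponding $\DecI_l$. Your identification of the boundary reconciliation as the source of the second error term with exponent $2^l-1$ is correct and matches the structure of the analogous argument in \cite{CMH20}.
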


% \begin{IEEEproof}[Proof sketch]
%     See the proof sketch of Lemma~\ref{thm-effective-encoder} above.
% \end{IEEEproof}

\begin{IEEEproof}[Proof sketch for Lemma~\ref{thm-effective-encoder} and \ref{thm-eff-decoder}]
  When the circuit $\Gamma_{\mathcal{C}_l}$ receives quantum input in the code space, the transformation rules of the circuit elements that ensure fault-tolerance refer to the circuit element as well as the preceding error correction. Therefore, an $\EC$ gadget is included in the statement for the parts of the circuit that receive quantum input. Then, like the proof of  \cite[Lemma~IIII.8]{CMH20}, this adapted statement follows from a repeated application of the transformation rules from \cite[Lemma~4]{AGP05} and \cite[Lemma~II.6]{CMH20}. 
\end{IEEEproof}

\begin{figure*}[!t]
  \centering
       \includegraphics[width=\textwidth]{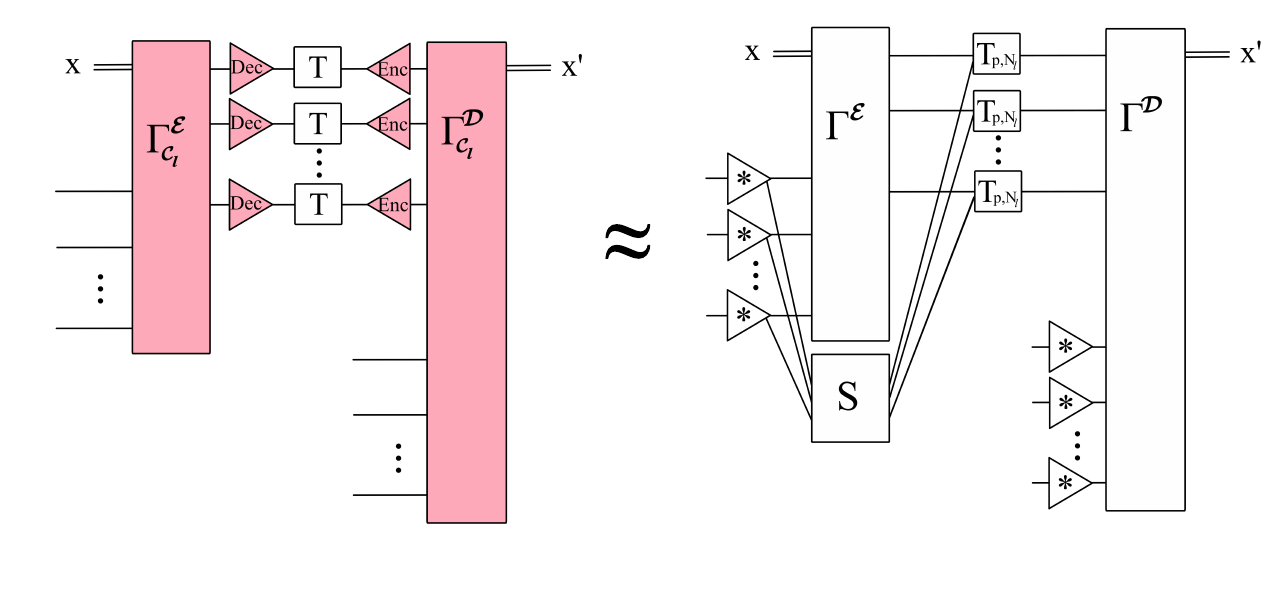}
\caption{\textbf{Sketch of the setup for the effective channel.} The fault-tolerantly implemented encoder $\Gamma_{\mathcal{C}_l}^{\mathcal{E}}$ takes input in the form of $m$ classical bits $x$ and $r$ physical qubits which are encoded in the code space. The resulting codewords are sent through $n$ copies of the quantum channel $T$, preceeded by the decoding interface. The output of the channel is fed into the encoding interface, whose output serves as input to the fault-tolerantly implemented decoder $\Gamma_{\mathcal{C}_l}^{\mathcal{D}}$, which also receives additional quantum input in the form of $s$ qubits in the code space. Theorem~\ref{thm-eff-channel} shows that this setup is very close to a faultless setup with an effective channel, where the quantum systems are transformed by the perfect decoding operation $\DecI^*$, represented by a triangle marked with a star. The effective channel $T_{p,N_l}$ receives input in the form of data qubits and a potentially correlated syndrome state.
}
\label{fig-eff-channel}
\end{figure*}

Lemma~\ref{thm-effective-encoder} and Lemma~\ref{thm-eff-decoder} can be combined to obtain Theorem~\ref{thm-eff-channel}, which is a modified version of Theorem III.9 from \cite{CMH20} that we will use in our analysis of entanglement-assisted capacity.
This theorem links the fault-affected scenario to a communication problem with faultless encoder and decoder circuits connected by an effective noisy channel of a special form, as illustrated in Figure~\ref{fig-eff-channel}.

It is important to note that our setup for fault-tolerant communication considers the operation of encoding information into a quantum channel $T$ and subsequent decoding of this information as two fault-affected circuits connected by $T$. The channel $T$ itself can be taken to model a noisy communication channel, however, we do not consider it as a noise-affected circuit with well-defined fault-locations (hence its white color in the figures). In particular, the noise affecting $T$ can be very different from the noise affecting the encoding and decoding circuits, and does not have to be below threshold. 

\begin{theorem}[Effective channel with quantum input] \label{thm-eff-channel}
Let $T:\mathcal{M}_2^{\otimes j_1} \rightarrow \mathcal{M}_2^{\otimes j_2}$ be a quantum channel, and let $\Enc: \mathbbm{C}^{2^m} \otimes \mathcal{M}_2^{\otimes r}\rightarrow \mathcal{M}_2^{\otimes nj_1}$ be a quantum circuit with $m$ bits of classical input and $r$ qubits of quantum input and let $\Dec:\mathcal{M}_2^{\otimes nj_2} \otimes \mathcal{M}_2^{\otimes s} \rightarrow \mathbbm{C}^{2^m}$ be a quantum circuit with classical output of $m$ bits. For each $l\in\mathbbm{N}$, let $\mathcal{C}_l$ denote the $l$-th level of the concatenated 7-qubit Steane code with threshold $0<p_0\leq 1$. Let $\EncI_l: \mathcal{M}_2 \rightarrow  \mathcal{M}_2^{\otimes 7^l}$ and $\DecI_l: \mathcal{M}_2^{\otimes 7^l} \rightarrow \mathcal{M}_2$ be the interface circuits for the $l$-th level of the concatenated 7-qubit Steane code with threshold $p_0$.

Then, for any $l\in\mathbbm{N}$ and any $0\leq p \leq \min\{p_0/2,1/4c\}$, there exists a quantum channel $N_l: \mathcal{M}_2^{\otimes j_1 7^l }\rightarrow \mathcal{M}_2^{\otimes j_2}$ and a quantum channel $S_S: \mathcal{M}_2^{\otimes j_1(7^l-1) }\rightarrow  \mathcal{M}_2^{\otimes j_1(7^l-1) }$ on the syndrome space such that 
\begin{IEEEeqnarray*}{lCl}
\IEEEeqnarraymulticol{3}{l}{\big\| \big[\Gamma_{\mathcal{C}_l}^{\mathcal{D}} \circ \bigg( \Big( \big( \EncI_l^{\otimes nj_2} \circ T^{\otimes n} \circ \DecI_l^{\otimes nj_1} \circ\Gamma_{\mathcal{C}_l}^{\mathcal{E}} \big)\circ \Compactcdots} \\&& \Compactcdots \circ    (\id_{cl}^{\otimes m} \otimes\EC_l^{\otimes r}) \Big)\otimes\EC_l^{\otimes s}
\bigg) \big]_{\mathcal{F}(p)}  \\&  -&({\Dec}\otimes \Tr_S) \circ \Big(\big( T_{p,N_l}^{\otimes n} \circ ({\Enc}\otimes  S_S )\big)\otimes \id_2^{\otimes s} \otimes \Tr_{S'} \Big)  \circ \Compactcdots \\&& \Compactcdots \circ \Big(\id_{cl}^{\otimes m} \otimes (\DecI_l^*\circ [\EC_l]_{\mathcal{F}(p)})^{\otimes (r+s)}\Big)\big\|_{1 \rightarrow 1}  \\&
\leq& 2 p_0 \Big(\frac{p}{p_0}\Big)^{2^l} (|\Loc(\Enc)|+|\Loc(\Dec)|)  \\&&+2p_0|\Loc(\EncI_1)| \Big(\frac{p}{p_0}\Big)^{2^l-1} j_1n
  \end{IEEEeqnarray*}
with
\[T_{p,N_l}=(1-2(j_1+j_2) cp) (T\otimes \Tr_S) + 2(j_1+j_2)cpN_l\]
with $c=p_0 \max \{|\Loc(\EncI_1)|,|\Loc(\DecI_1\circ EC)|\}$.
$S_S$ may depend on $l, \Enc$ and $\DecI_l$, while $N_l$ may depend on $ l, \EncI_l$ and $\DecI_l$.
\end{theorem}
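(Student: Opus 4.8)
\section*{Proof proposal}

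The plan is to assemble Theorem~\ref{thm-eff-channel} out of Lemma~\ref{thm-eff-decoder} and Lemma~\ref{thm-effective-encoder}, glued along the fixed, fault-free channel $T^{\otimes n}$. First I would regroup the circuit inside the $1\to 1$ norm as $\Gamma_{\mathrm{out}}\circ(T^{\otimes n}\otimes\id)\circ\Gamma_{\mathrm{in}}$, where $\Gamma_{\mathrm{in}}:=\DecI_l^{\otimes nj_1}\circ\Gamma_{\mathcal{C}_l}^{\mathcal{E}}\circ(\id_{cl}^{\otimes m}\otimes\EC_l^{\otimes r})$ is the input side and $\Gamma_{\mathrm{out}}:=\Gamma_{\mathcal{C}_l}^{\mathcal{D}}\circ(\EncI_l^{\otimes nj_2}\otimes\EC_l^{\otimes s})$ is the output side (the $s$ encoded resource qubits entering $\EC_l^{\otimes s}$ bypass $T^{\otimes n}$ as the identity factor); this is just bilinearity of $\otimes$ and associativity of $\circ$. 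Since the $\mathcal{F}(p)$ noise is i.i.d.\ over fault locations and $T$ carries no fault locations, the fault-affected circuit splits exactly as $[\Gamma_{\mathrm{out}}]_{\mathcal{F}(p)}\circ(T^{\otimes n}\otimes\id)\circ[\Gamma_{\mathrm{in}}]_{\mathcal{F}(p)}$, with independent fault patterns on the two sides.

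Next I would apply the two lemmas. Lemma~\ref{thm-eff-decoder}, used on $[\Gamma_{\mathrm{in}}]_{\mathcal{F}(p)}$ with the circuit $\Gamma^{\mathcal{E}}$ and parameters $n\mapsto nj_1$, $k\mapsto r$, replaces it up to $1\to1$ error $2p_0(p/p_0)^{2^l}|\Loc(\Enc)|+2nj_1\,p_0|\Loc(\EncI_1)|(p/p_0)^{2^l-1}$ by $N_{dec,l,p}^{\otimes nj_1}\circ(\Enc\otimes S_S)\circ(\id_{cl}^{\otimes m}\otimes(\DecI_l^*\circ[\EC_l]_{\mathcal{F}(p)})^{\otimes r})$. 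Lemma~\ref{thm-effective-encoder}, used on $[\Gamma_{\mathrm{out}}]_{\mathcal{F}(p)}$ with the circuit $\Gamma^{\mathcal{D}}$ and parameters $n\mapsto nj_2$, $k\mapsto s$, replaces it up to $1\to1$ error $2p_0(p/p_0)^{2^l}|\Loc(\Dec)|$ by $(\Dec\otimes\Tr_S)\circ(N_{enc,p,l}^{\otimes nj_2}\otimes(\DecI_l^*\circ[\EC_l]_{\mathcal{F}(p)})^{\otimes s})$. Because $\|\cdot\|_{1\to1}$ is contractive under pre- and post-composition with channels (the fixed $T^{\otimes n}\otimes\id$, the surviving effective circuits, and the preprocessing maps are all channels), the triangle inequality makes the two errors add, which is precisely the bound in the statement.

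It then remains to identify the resulting faultless map with the stated effective-channel form. The two preprocessing factors combine into $(\DecI_l^*\circ[\EC_l]_{\mathcal{F}(p)})^{\otimes(r+s)}$; the $S_S$ from Lemma~\ref{thm-eff-decoder} is the syndrome channel feeding the effective channel; the $s$ decoder-side data qubits pass through as $\id_2^{\otimes s}$; and the surviving partial traces amalgamate into the single $\Tr_S$ of $(\Dec\otimes\Tr_S)$ together with the syndrome discard built into the effective channel. The one genuine computation is, per channel use, the identity $N_{enc,p,l}^{\otimes j_2}\circ T\circ N_{dec,l,p}^{\otimes j_1}=T_{p,N_l}$: expanding the tensor powers of $N_{dec,l,p}=(1-2cp)(\id_2\otimes\Tr_S)+2cp\,N_l^{\mathrm{dec}}$ and $N_{enc,p,l}=(1-2cp)\id_2+2cp\,N_l^{\mathrm{enc}}$, the unique term containing no $N$-factor is $(1-2cp)^{j_1+j_2}(T\otimes\Tr_S)$; the hypothesis $p\le 1/(4c)$ ensures $2cp\le\tfrac12$, so Bernoulli's inequality gives $(1-2cp)^{j_1+j_2}\ge 1-2(j_1+j_2)cp$, and one can split off the summand $(1-2(j_1+j_2)cp)(T\otimes\Tr_S)$ and absorb the nonnegative remainder together with all remaining terms (total weight $1-(1-2cp)^{j_1+j_2}\le 2(j_1+j_2)cp$) into a genuine convex combination. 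This defines the channel $N_l:\mathcal{M}_2^{\otimes j_17^l}\to\mathcal{M}_2^{\otimes j_2}$ (built from $T$, $N_l^{\mathrm{dec}}$ and $N_l^{\mathrm{enc}}$, hence depending only on $l$, the interface circuits and $T$).

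I expect this final repackaging to be the main obstacle: one must produce \emph{exactly} the coefficient $2(j_1+j_2)cp$ in front of $N_l$ (not merely something of that order) while keeping every piece completely positive, and one must keep the several syndrome registers straight so that $S_S$, the ideal decoders, the $\Tr_S$'s, and $N_l$ in the final expression act on precisely the registers claimed. Everything else — the factorisation across $T^{\otimes n}$, the two lemma applications, and the additivity of the errors via contractivity of $\|\cdot\|_{1\to1}$ — is routine once the bookkeeping is fixed.
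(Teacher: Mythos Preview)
Your proposal is correct and follows exactly the approach the paper indicates: the paper does not give a detailed proof of Theorem~\ref{thm-eff-channel} but simply states that it is obtained by combining Lemma~\ref{thm-effective-encoder} and Lemma~\ref{thm-eff-decoder}, and your write-up is the natural execution of that combination. The factorisation across $T^{\otimes n}$, the two lemma applications with the indicated parameter substitutions, the additivity of the errors via contractivity, and the Bernoulli-based repackaging into $T_{p,N_l}$ are precisely what is needed; the only cosmetic remark is that Bernoulli's inequality already holds for $2cp\le 1$, so the hypothesis $p\le 1/(4c)$ is a bit stronger than you need for that step (it is there to keep $N_{enc,p,l}$ and $N_{dec,l,p}$ genuine convex combinations).
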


 This theorem is formulated for quantum channels which map from a quantum system composed of $j_1$ qubits to a quantum system composed of $j_2$ qubits because we consider interfaces between qubits. However, any quantum channel can always be embedded into a quantum channel between systems composed of qubits, such that Theorem~\ref{thm-eff-channel} and subsequent results apply to general quantum channels.

\section{Entanglement-assisted communication with faultless or faulty devices}
\label{sec-ea-cap}

\noindent When a sender and a receiver are connected by many copies of a quantum channel $T$ and have access to entanglement, they can use this setup to transmit a classical message via entanglement-assisted communication. Then, one can identify the best possible operations for the sender and receiver to perform in order to maximize their transmission rate. This section includes a short introduction into entanglement-assisted communication in Section~\ref{sec-ea-cap-normally}, which will serve as a basis for the coding scheme in our main result, followed by a description of the setup for fault-tolerant entanglement-assisted communication and our strategy for its analysis in Section~\ref{sec-ft-cap-setup}. 

\subsection{The entanglement-assisted capacity}
\label{sec-ea-cap-normally}

\noindent Using the superdense coding protocol \cite{BW92}, two classical bits can be communicated by sending only one qubit over a noiseless quantum channel assisted by entanglement. It is therefore natural to study a noisy channel's classical capacity with entanglement assistance \cite{BSST02}. 

To model entanglement-assisted classical communication, we therefore consider a scheme with classical input and output, where quantum entanglement is available to the sender and the receiver.
As sketched in Figure~\ref{fig-ea-cap}, the sender encodes a classical message of $m$ bits into a quantum state of $n$ qudits by performing an encoding map $\mathcal{E}$. The resulting quantum state serves as input into the tensor product of $n$ copies of a quantum channel $T$, which is equivalent to $n$ independent uses of a quantum wire modelled by $T$. Then, the transformed quantum state is decoded by the receiver applying a decoding map $\mathcal{D}$ which converts the channel's output back into a bit string of length $m$. The performance of such a scheme can be quantified by the probability that this resulting bit string and the original message are identical, as formalized in Definition~\ref{def-ea-coding-scheme}. 
Because of superdense coding \cite{BW92} and teleportation \cite{BBCJPW93}, the classical entanglement-assisted capacity of a channel is exactly double its quantum entanglement-assisted capacity.

\begin{figure}[!t]
\centering
       \includegraphics[width=0.5\textwidth]{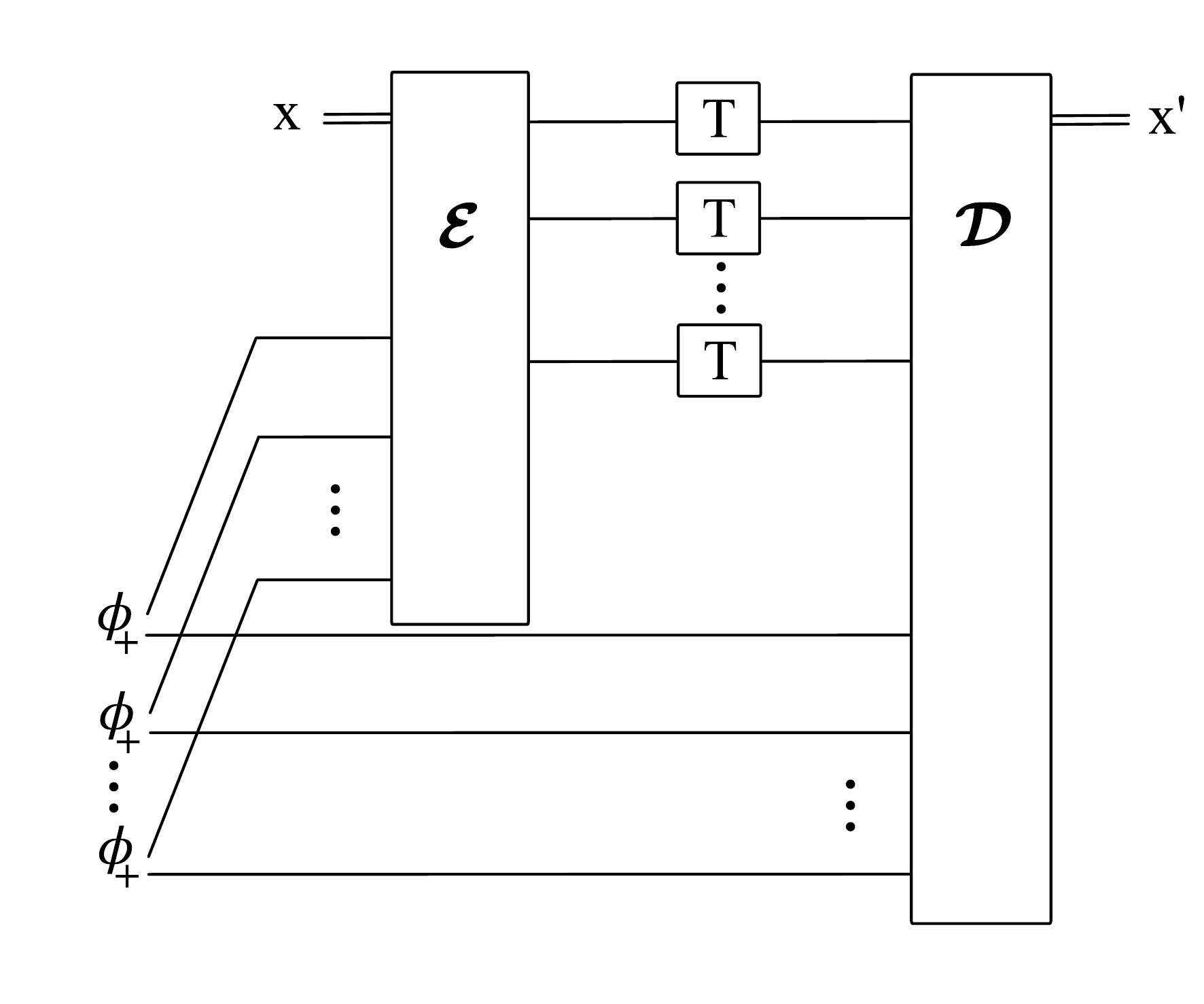}
\caption{\textbf{Basic setup for entanglement-assisted communication.} The encoding map $\mathcal{E}$ maps a bit string $x$ of length $m$ and one part of each entangled state $\varphi$ to a quantum state in $\mathcal{M}_{d_A}^{\otimes n}$. The quantum channel $T$ acts on each of the $n$ subsystems, and the decoder $\mathcal{D}$ uses the other part of each maximally entangled state to decode the received quantum state to a bit string $x'$, which should be identical to the input bit string $x$. Note that classical information transfer is indicated by double lines (input into encoder, output of decoder), while the transfer of quantum states is indicated by single lines.}
\label{fig-ea-cap}
\end{figure}

\begin{definition}[Entanglement-assisted coding scheme]\label{def-ea-coding-scheme}
Let $T:\mathcal{M}_{d_A} \rightarrow \mathcal{M}_{d_B}$ be a quantum channel, and let $n,m \in \mathbbm{N}$, $R_{ea}\in\mathbbm{R}^+$ and $\epsilon>0$.

Then, an $(n,m,\epsilon,R_{ea})$-coding scheme for entanglement-assisted communication consists of quantum channels $\mathcal{E}:\mathbbm{C}^{2^m}\otimes\mathcal{M}_2^{ \otimes \lfloor nR_{ea} \rfloor}\rightarrow\mathcal{M}_{d_A}^{\otimes n}$ and $\mathcal{D}:\mathcal{M}_{d_B}^{\otimes n}\otimes \mathcal{M}_2^{ \otimes \lfloor nR_{ea} \rfloor} \rightarrow \mathbbm{C}^{2^m}$ such that
\begin{align*}
    F\Big(X,\mathcal{D} \circ  \big( (T^{\otimes n } \circ \mathcal{E} )\otimes \id_2^{\otimes \lfloor nR_{ea} \rfloor} \big) (X \otimes \phi_+^{\otimes \lfloor nR_{ea} \rfloor })\Big) \\\geq 1-\epsilon
\end{align*}
where $X=\ketbra{x}$, for all bit strings  $x\in \{0,1\}^m$.
\end{definition}

\begin{remark}
Here, $\phi_+=\ketbra{\phi_+}{\phi_+}$, where $\ket{\phi_+}=\frac{1}{\sqrt{2}}( \ket{00}+\ket{11})$ denotes a maximally entangled state of two qubits. Like in \cite{BSST02}, we define entanglement-assistance with respect to copies of the maximally entangled state $\phi_+$. Without loss of generality, we could allow assistance by copies of arbitrary pure entangled states, since they can be prepared efficiently from maximally entangled states by the process of entanglement dilution \cite{LP99,BSST02}  using of a sublinear amount of classical communication from one party to the other \cite[Theorem~1]{HL04}. It turns out that even entirely arbitrary entangled states (not of product form) cannot increase the communication rate \cite{BDHSW14} and it is therefore sufficient to use maximally entangled states as the entanglement resource.

For a rate of entanglement-assistance $R_{ea}=0$ in the above definition, the scenario reduces to the scheme for classical communication with no entanglement-assistance as introduced in \cite{SW97,Holevo96}. For $R_{ea} \geq  \sup_{\varphi} H(A)_{\varphi}$, where the supremum goes over pure bipartite states $\varphi$, the entanglement-assisted capacity does not increase with more entangled states \cite{HL04}, and we will henceforth focus on this scenario.
Here, $H(A)_{\rho}=- \Tr (\rho \log \rho )$ denotes the von Neumann entropy of a quantum state $\rho \in \mathcal{M}_{A}$.
\end{remark}

 To quantify the difference between the map $\mathcal{D} \circ  \big( (T^{\otimes n } \circ \mathcal{E} )\otimes \id_2^{\otimes nR_{ea}} \big) (\cdot \otimes \phi_+^{\otimes nR_{ea}} )$, corresponding to the coding scheme, and an identity map on $m$ classical bits, corresponding to perfect communication, different measures of distance may be used; here, we use the fidelity $F(\rho,\sigma):=(\Tr (\sqrt{\sqrt{\rho}\sigma \sqrt{\rho}}))^2$ of quantum states $\rho$ and $\sigma$. If the fidelity approaches $1$ in the asymptotic limit, we can communicate with vanishing error, and the best possible rate of communication defines the channel's entanglement-assisted capacity.

\begin{definition}[Entanglement-assisted capacity]
Let $T:\mathcal{M}_{d_A} \rightarrow \mathcal{M}_{d_B}$ be a quantum channel and let $R_{ea}\in \mathbbm{R}_+$ be the rate of entanglement-assistance.

If, for some $R_{ea}$ and for every $n\in \mathbbm{N}$, there exists an $(n,m(n),\epsilon(n),R_{ea})$-coding scheme for entanglement-assisted communication, then a rate $R\geq 0$ is called achievable for entanglement-assisted communication via the quantum channel $T$ if

\begin{IEEEeqnarray*}{ll}
   R & \leq %\lim_
   \liminf_{n\rightarrow \infty} \Big\{ 
    \frac{m(n)}{n} \Big\}
    \end{IEEEeqnarray*}
and
\begin{IEEEeqnarray*}{ll}
   \lim_{n\rightarrow \infty} \epsilon(n) \rightarrow 0
    \end{IEEEeqnarray*}
The entanglement-assisted capacity of $T$ is given by
\begin{IEEEeqnarray*}{ll}
  C^{ea}(T) = \sup \{ R & |  R \text{ achievable for entanglement-assisted} \\ &\text{communication via $T$} \}.
    \end{IEEEeqnarray*}
% \begin{IEEEeqnarray*}{ll}
%   C^{ea}(T) = \sup_{R} \{ R | & R \text{ achievable} \}.
%     \end{IEEEeqnarray*}
\end{definition}
We will need a more explicit characterization of the communication error that can be reached by using certain entanglement-assisted coding schemes achieving rates close to capacity. The specific bound can be obtained from \cite[Section~20.4]{Wilde13}, and its error term follows from the packing lemma \cite{HDW08}, notions from weak typicality, and Hoeffding's bound \cite{Hoeffding63}.

\subsection{The fault-tolerant entanglement-assisted capacity}
\label{sec-ft-cap-setup}

\noindent In Section~\ref{sec-ea-cap-normally}, the encoder and decoder are assumed to be ideal quantum channels. In order to perform these channels on some given quantum device, they have to be implemented by quantum circuits, i.e., compositions of finitely many elementary gates.
It is well known that quantum devices (unlike classical computers) are notoriously susceptible to faults at the single-gate level which can have devastating effects on the whole computation. This is also true for the circuits encoding and decoding the information that we want to send between different devices or computers. Through clever and protective implementation, the computation within the encoding and decoding devices can be made robust against such faults, raising the question of a channel's fault-tolerant entanglement-assisted capacity.

A coding scheme for a setup affected by noise is defined as follows:

\begin{definition}[Fault-tolerant entanglement-assisted coding scheme]\label{defn:FTEACS}
Let $T:\mathcal{M}_{d_A} \rightarrow \mathcal{M}_{d_B}$ be a quantum channel, and let $n,m \in \mathbbm{N}$, $R_{ea}\in\mathbbm{R}^+$ and $\epsilon>0$.
For $0\leq p \leq 1$, let $\mathcal{F}(p)$ denote the i.i.d. Pauli noise model.

Then, an $(n,m,\epsilon,R_{ea})$-coding scheme for fault-tolerant entanglement-assisted communication consists of quantum circuits $\mathcal{E}:\mathbbm{C}^{2^m}\otimes\mathcal{M}_2^{ \otimes  \lfloor nR_{ea}\rfloor }\rightarrow\mathcal{M}_{d_A}^{\otimes n}$ and $\mathcal{D}:\mathcal{M}_{d_A}^{\otimes n}\otimes \mathcal{M}_2^{ \otimes \lfloor nR_{ea}\rfloor } \rightarrow\mathbbm{C}^{2^m}$ such that
\begin{align*}
F\big(X,\big[\mathcal{D} \circ  \big( (T^{\otimes n} \circ \mathcal{E}\big) \otimes \id_2^{*\otimes  \lfloor nR_{ea}\rfloor }  \big)  \big]_{\mathcal{F}(p)} (X \otimes \phi_+^{\otimes  \lfloor nR_{ea}\rfloor })\big) \\ \geq 1-\epsilon\end{align*}
where $X=\ketbra{x}$, for all bit strings  $x\in \{0,1\}^m$.
\end{definition}

\begin{remark} We emphasize that we assume that the entangled states become subject to faults at the moment where the first gate acts on them. If the entanglement resource in our setup was arbitrary, we could consider a setup assisted by an arbitrary pure entangled state in the code space.  In this scenario, the entanglement resource would directly be available to the fault-tolerantly implemented encoding and decoding circuits, without being corrupted by noisy encoding interfaces, and without necessitating the additional step of entanglement distillation. Achievable rates for such a scenario can be inferred from the expression from our main result (Theorem~\ref{thm-final-coding-thm}) with $f_1(p)=0$.
Under this model, we would assume that the state was prepared and stored (for the duration of the encoding and decoding circuits) within the code space without incurring any faults through the preparation gates and time steps. Here, we choose to consider the more practically relevant scenario where the entanglement resource becomes subject to noise as soon as it enters the code space of the encoding and decoding circuits. The form and amount of entanglement-assistance we consider is as in the standard setup, given by $nR_{ea}$ copies of physical maximally entangled qubits. This scenario also covers situations where $\phi_+$ may be prepared and stored in some highly noise-tolerant and well-suited way until it is needed for computation.
An alternative assumption would be that the maximally entangled states are prepared first and brought into the code space by the encoding interface. Under this assumption, one part of the maximally entangled state (that is to be input into the decoding circuit) waits in the code space for the duration of the encoding circuit and the data transfer via $T$. Instead of $\id_2^*$, a large number identity gates would be performed and add to the overall count of fault-locations. This would not significantly alter our results in Theorem~\ref{thm-final-coding-thm}, but would require a higher choice of the concatenation level $l$ in Eq.~\eqref{eq-level-choice}. Here, we do not consider this scenario in order to simplify the presentation.
\end{remark}

The asymptotically best possible fault-tolerantly achievable rate defines the channel's fault-tolerant entanglement-assisted capacity, fundamentally characterizing how much information the channel can transmit under this noise model.

\begin{definition}[Fault-tolerant entanglement-assisted capacity]
Let $T:\mathcal{M}_{d_A} \rightarrow \mathcal{M}_{d_B}$ be a quantum channel, and let $R_{ea}\in \mathbbm{R}_+$ be the rate of entanglement-assistance. %\sup_{\varphi  } H(A)_{\varphi}$ be some rate of entanglement-assistance where the supremum runs over pure states $\varphi$.
For $0\leq p \leq 1$, let $\mathcal{F}(p)$ denote the i.i.d. Pauli noise model.

If, for some $R_{ea}$ and for every $n\in \mathbbm{N}$, there exists an $(n,m(n),\epsilon(n),R_{ea})$-coding scheme for fault-tolerant entanglement-assisted communication under the noise model $\mathcal{F}(p)$, then a rate $R\geq 0$ is called achievable for fault-tolerant entanglement-assisted communication via the quantum channel $T$ if

\begin{IEEEeqnarray*}{ll}
   R & \leq %\lim_
   \liminf_{n\rightarrow \infty} \Big\{ 
    \frac{m(n)}{n} \Big\}
    \end{IEEEeqnarray*}
and
\begin{IEEEeqnarray*}{ll}
   \lim_{n\rightarrow \infty} \epsilon(n) \rightarrow 0
    \end{IEEEeqnarray*}
The fault-tolerant entanglement-assisted capacity of $T$ is given by
 \begin{IEEEeqnarray*}{ll}
   C^{ea}_{\mathcal{F}(p)}(T) =\sup &\{R  | R \text{ achievable for fault-tolerant} \\ &\text{entanglement-assisted communication via $T$}\}.
   \end{IEEEeqnarray*}
\end{definition}

\begin{figure}[htbp]
  \centering
       \includegraphics[width=0.5\textwidth]{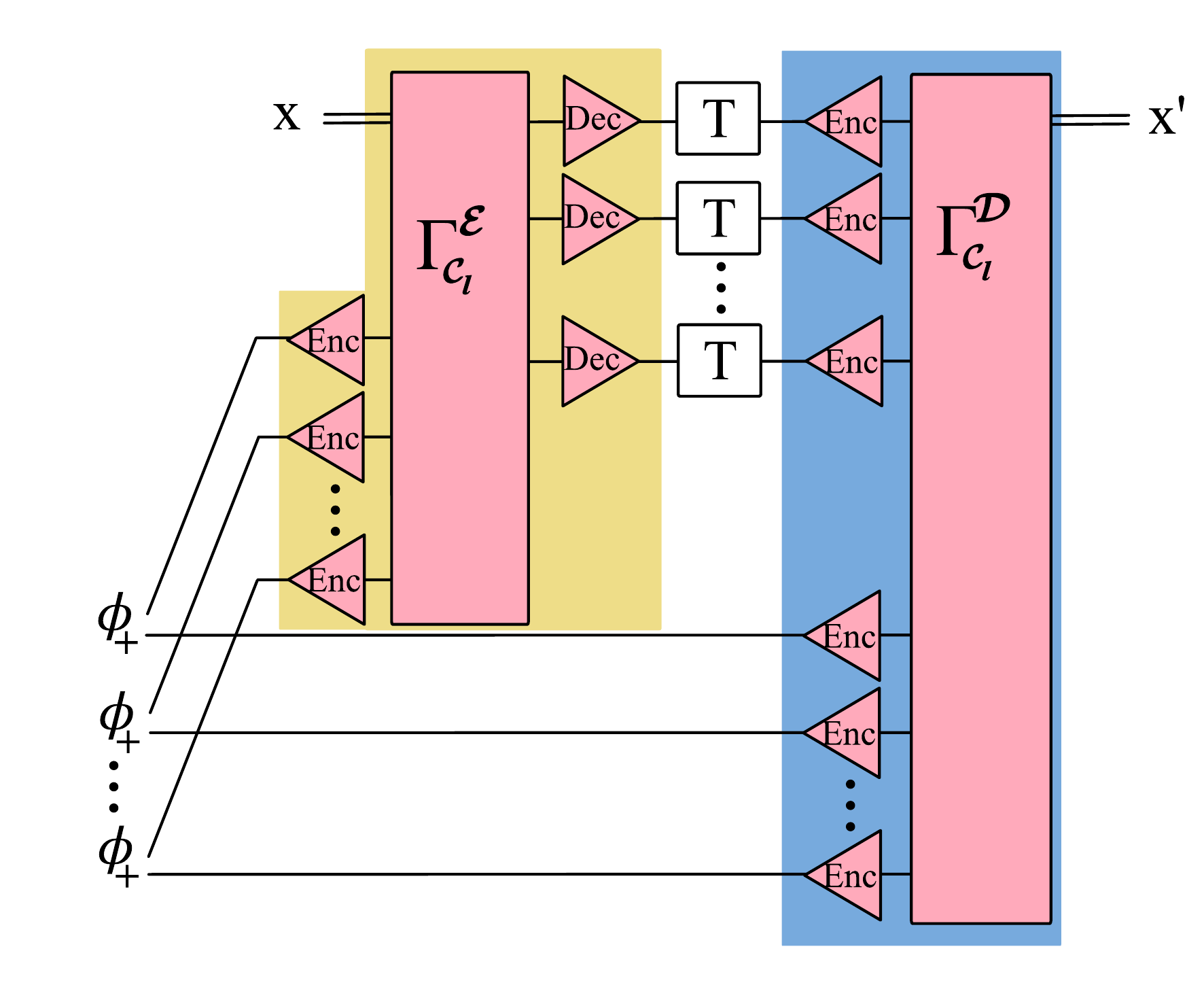}
\caption{\textbf{Basic setup for our coding scheme for fault-tolerant entanglement-assisted communication, see Definition~\ref{defn:FTEACS}.} The encoding map $\mathcal{E}$ (yellow) encodes a bit string of length $m$ into a quantum state that serves as input into $n$ copies of the quantum channel $T$, and the decoding map $\mathcal{D}$ (blue) decodes the received quantum state back to a bit string. In the entanglement-assisted scenario, $\mathcal{E}$ and $\mathcal{D}$ are connected by $\sim nR_{ea}$ maximally entangled states. To make the communication fault-tolerant, the encoding and decoding circuits are implemented fault-tolerantly in an error correcting code $\mathcal{C}_l$ as $\Gamma_{\mathcal{C}_l}^{\mathcal{E}}$ and $\Gamma_{\mathcal{C}_l}^{\mathcal{D}}$, and combined with interfaces $\EncI$ and $\DecI$ mapping between the quantum states serving as input and output for $T$, and the quantum states being transformed in the fault-tolerantly implemented encoding and decoding circuits.
}
\label{fig-ft-ea-cap-setup}
\end{figure}

Formally, any quantum circuits $\mathcal{E}$ and $\mathcal{D}$ may be chosen in Definition~\ref{defn:FTEACS}, leading to a coding scheme for fault-tolerant entanglement-assisted classical communication. To prove lower bounds to the fault-tolerant capacity $C^{ea}_{\mathcal{F}(p)}(T)$ for a quantum channel $T$ in terms of the capacity $C^{ea}(T)$, we will use a particular construction that is similar to constructions in \cite{CMH20}. 

Consider some coding scheme not affected by noise for entanglement-assisted classical communication over the channel $T$. We can turn this coding scheme into a fault-tolerant coding scheme by first approximating it by quantum circuits, and then implementing these quantum circuits in a high level of the concatenated $7$-qubit Steane code. Crucially, we will use the interface circuits from \cite{MGHHLPP14,CMH20} to convert between physical qubits and logical qubits in the code space, e.g., when qubits from the output of the channel $T$ are brought into the code space. Unfortunately, these interfaces fail with a probability $2cp$, where $p$ is the gate error probability of the noise model and $c$ some interface-dependent constant (from Theorem~\ref{thm-correct-interfaces}) and the fault-tolerant implementation of the coding scheme affected by faults will not be equivalent to the original coding scheme for the quantum channel $T$. Instead it will be equivalent to the coding scheme for a certain effective quantum channel $T_{p,N}$ as in Theorem~\ref{thm-eff-channel}. 

Our strategy starts by considering a coding scheme for entanglement-assisted classical communication for channels that include our effective communication channels $T_{p,N}$. We refer to this channel model as \emph{arbitrarily varying perturbation} (AVP) and we will discuss it in detail in Section~\ref{sec-avp}. This model has been introduced in \cite{CMH20} in the cases of unassisted classical and quantum communication, and it is closely related to the fully-quantum arbitrarily varying channels studied in \cite{BDNW18}. As described in the preceeding paragraphs, we then obtain a fault-tolerant coding scheme by implementing the coding scheme under AVP in a high level of the concatenated $7$-qubit Steane code. For the fault-tolerant entanglement-assisted capacity, the setup crucially includes a supply of maximally entangled states that are connected to the fault-tolerantly implemented encoder and decoder circuit via additional interfaces, as illustrated in Figure~\ref{fig-ft-ea-cap-setup}. Because of the effective probability of failure of these interfaces, when transferring the maximally entangled states into the code space, they are only correctly transmitted with a probability of approximately $1-4cp$ (since there is one interface for each qubit). Subsequently, the entanglement inserted into the code space is noisy and in a mixed state. To counteract this, we show that this entanglement can still be made usable by transforming it back into pure state entanglement in the code space by performing (fault-tolerant) entanglement distillation in Section~\ref{sec-ft-ent-dist}. Since entanglement distillation requires classical communication, we will need to use a subset of the channels $T$ to run the fault-tolerant protocol from \cite{CMH20} to send classical information. Thereafter, with slightly fewer copies of $T$ remaining, an analysis similar to \cite{CMH20} is carried out in Section~\ref{sec-coding-thm} to arrive at a coding theorem describing rates of fault-tolerant entanglement-assisted communication that are achievable with vanishing error.

\section{Entanglement-assisted communication under arbitrarily varying perturbation}
\label{sec-avp}

\noindent As described in Section~\ref{sec-ft-cap-setup}, we find a correspondence between the capacity of a fault-affected setup and an information-theoretic communication setup under non-i.i.d. perturbations which we outline in Section~\ref{sec-avp-model}. Based on similar channel models in \cite{CMH20}, we introduce a generalized version of an entanglement-assisted capacity which allows for arbitrarily varying syndrome input and prove a coding theorem for this model in Section~\ref{sec-avp-coding-thm}.

\subsection{The entanglement-assisted capacity under arbitrarily varying perturbation}\label{sec-avp-model}

\noindent One key feature of the communication problem emerging from Theorem~\ref{thm-eff-channel} is that the effective channel takes input from the space of channel coding symbols as well as the syndrome space. Since the syndrome state can be correlated across different channel uses, the effective communication problem is not covered by standard i.i.d. communication scenarios, but instead defines a communication problem of its own as introduced in \cite{CMH20} and with similarities to communication scenarios studied in \cite{BDNW18}.
 
For $T:\mathcal{M}_{d_A}\rightarrow \mathcal{M}_{d_B}$ and any quantum channel $N:\mathcal{M}_{d_A}\otimes \mathcal{M}_{d_S} \rightarrow \mathcal{M}_{d_B}$ let  $T_{p,N}:\mathcal{M}_{d_A}\otimes \mathcal{M}_{d_S} \rightarrow \mathcal{M}_{d_B}$ denote the quantum channel $T_{p,N}=(1-p) (T\otimes \Tr_S) + pN $. Here, we consider the problem of communicating via a channel of the form
$T_{p,N}^{\otimes n} (\cdot\otimes \sigma_S):\mathcal{M}_{d_A}^{\otimes n} \rightarrow \mathcal{M}_{d_B}^{\otimes n}$ for arbitrary syndrome states $\sigma_S$ and arbitrary quantum channels $N$. We refer to this model as \textit{communication under arbitrarily varying perturbation} (AVP). 

Note that the effective channel model emerging from Theorem~\ref{thm-eff-channel} takes the form $T_{p,N}^{\otimes n} (\cdot\otimes \sigma_S)$ for some syndrome state $\sigma_S$ and some quantum channels $N$, where the dimension $d_S$ depends on the level of concatenation. Since the level of concatenation has to increase with the number $n$ of channel uses if we want the overall communication error to vanish in a fault-tolerant communication scenario, we will allow $d_S$ to be arbitrary and possibly dependent on $n$ in the definition of the capacity under AVP. Previous results in \cite{BDNW18} for models with fixed syndrome state dimension are therefore not directly applicable in this setting. Notably, $\sigma_S$ can be arbitrarily entangled between the $n$ spaces. If $\sigma_S$ is a separable state, this communication model is a special case of a channel model studied in \cite{ABBN12}. Our definitions here consider general syndrome states and can thereby be taken to apply to the worst-case scenario with arbitrarily correlated syndrome states.

\begin{definition}[Entanglement-assisted coding scheme under arbitrarily varying perturbation]
\label{def-avp-coding-scheme}
Let $T:\mathcal{M}_{d_A} \rightarrow \mathcal{M}_{d_B}$ be a quantum channel, and let $n,m \in \mathbbm{N}$, $R_{ea}\in\mathbbm{R}^+$ and $\epsilon>0$.

Then, an $(n,m,\epsilon,R_{ea})$-coding scheme for entanglement-assisted communication under AVP of strength $p$ consists of
a pure bipartite quantum state $\varphi\in \mathcal{M}_2^{\otimes 2}$ and the quantum channels $\mathcal{E}:\mathbbm{C}^{2^m}\otimes\mathcal{M}_2^{ \otimes nR_{ea}}\rightarrow\mathcal{M}_{d_A}^{\otimes n}$ and $\mathcal{D}:\mathcal{M}_{d_A}^{\otimes n}\otimes \mathcal{M}_2^{ \otimes nR_{ea}} \rightarrow\mathbbm{C}^{2^m}$ such that
\begin{align*} \inf F\Big(X, \mathcal{D} \circ  \big( (T_{p,N}^{\otimes n}  \circ (\mathcal{E} \otimes \sigma_S) ) \otimes \id_2^{\otimes nR_{ea}}  \big) (X \otimes \varphi^{\otimes nR_{ea}}) \Big) \\\geq 1-\epsilon\end{align*}
where $X=\ketbra{x}$, for all bit strings  $x\in \{0,1\}^m$, where
\[T_{p,N}= (1-p) (T\otimes \Tr_S ) + pN . \]
The infimum goes over the dimension $d_S\in \mathbbm{N}$, quantum states $\sigma_S\in\mathcal{M}_{d_S}^{\otimes n}$, and quantum channels $N:\mathcal{M}_{d_A}\otimes \mathcal{M}_{d_S} \rightarrow \mathcal{M}_{d_B}$.
\end{definition}

\begin{remark}
Here, we consider copies of arbitrary bipartite pure entangled states instead of maximally entangled states as entanglement resource, as the latter would require an extra step of entanglement dilution for the coding scheme from \cite[Theorem~21.4.1]{Wilde13}. In order to ensure exponential decay of the entanglement dilution error that we require in our proof, the communication rate would be reduced by some function linear in perturbation strength $p$. However, this is not necessary in the context of fault-tolerant coding because the extra classical communication can be performed together with the entanglement distillation, leading to an overall better bound on the achievable rate in our main result, Theorem \ref{thm-final-coding-thm}.
\end{remark}

Naturally, the best possible rate of information transfer using such a coding scheme is a channel's entanglement-assisted capacity under AVP:

\begin{definition}[Entanglement-assisted capacity under arbitrarily varying perturbation]
Let $T:\mathcal{M}_{d_A} \rightarrow \mathcal{M}_{d_B}$ be a quantum channel, and let $R_{ea}\in\mathbbm{R}^+$.

If, for some $R_{ea}$ and for every $n\in \mathbbm{N}$, there exists an $(n,m(n),\epsilon(n),R_{ea})$-coding scheme for entanglement-assisted communication under AVP of strength $p$, then a rate $R\geq 0$ is called achievable for entanglement-assisted communication under AVP via the quantum channel $T$ if 
\begin{IEEEeqnarray*}{ll}
   R & \leq %\lim_
   \liminf_{n\rightarrow \infty} \Big\{ 
    \frac{m(n)}{n} \Big\}
    \end{IEEEeqnarray*}
and
\begin{IEEEeqnarray*}{ll}
   \lim_{n\rightarrow \infty} \epsilon(n) \rightarrow 0
    \end{IEEEeqnarray*}
The entanglement-assisted capacity of $T$ under AVP is given by
\begin{IEEEeqnarray*}{ll}
   C^{ea}_{AVP}(p,T) =\sup \{R &| R \text{ achievable for entanglement-assisted} \\ &\text{communication under AVP via $T$}\}.
   \end{IEEEeqnarray*}
\end{definition}

This version of entanglement-assisted capacity thus characterizes how well one can communicate with non-i.i.d. channel input, which may be interesting for various communication problems, and appears in particular in our study of fault-tolerant communication in Section~\ref{sec-coding-thm}.

\subsection{A coding theorem for entanglement-assisted communication under arbitrarily varying perturbation} \label{sec-avp-coding-thm}

\noindent The information-theoretic model outlined in Section~\ref{sec-avp-model} naturally raises the question of how much such AVP can hinder communication. Here, we show that communication is still possible in this scenario, and that achievable rates are given by the following theorem:

\begin{theorem}[Lower bound on the entanglement-assisted capacity under AVP] \label{thm-avp-coding-thm}
For any quantum channel $T:\mathcal{M}_{d_A} \rightarrow \mathcal{M}_{d_B}$, and for any $0\leq p \leq 1$, we have an entanglement-assisted capacity under AVP with 
\[C^{ea}_{AVP}(p,T) \geq  C^{ea}(T)-g(p) \]
where
\begin{IEEEeqnarray*}{lCl}
g(p)& =&
2(d_Ad_B\log(d_Ad_B) +1)\sqrt{2{\log(d_B)p}} |\log( \frac{p^2}{d_Ad_B} )|\\&& +  2h\Big(d_Ad_B\sqrt{2{\log(d_B)p}} |\log(\frac{p^2}{d_Ad_B})|\Big) \\&& 
+5p\log(d_B)+ 2(1+2p)h\Big(\frac{2p}{1+2p}\Big) 
\\&=&\mathcal{O}(p\log(p)).
\end{IEEEeqnarray*}
\end{theorem}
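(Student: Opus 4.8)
The plan is to reduce the arbitrarily-varying problem over $T$ to an ordinary i.i.d.\ coding problem over a nearby channel, invoke a finite-blocklength entanglement-assisted coding theorem with an explicit error term for that channel, and then pay for the reduction using continuity of the (single-letter) entanglement-assisted capacity. The structural point that makes this work is that in the AVP model the entanglement resource $\varphi^{\otimes nR_{ea}}$ is \emph{not} perturbed --- only the $n$ uses of $T$ are --- so, beyond replacing the Holevo quantity by $I(A':B)$, the argument runs essentially in parallel to the treatment of the unassisted classical capacity under AVP in \cite{CMH20}.

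\textbf{Step 1 (rewriting the perturbation and reducing to a single fault pattern).} Since $p\le\tfrac12$, write $T_{p,N}=(1-p)(T\otimes\Tr_S)+pN=(1-2p)(T\otimes\Tr_S)+2p\,M$ with $M:=\tfrac12\big((T\otimes\Tr_S)+N\big)$, again an arbitrary channel with syndrome input. Expanding the tensor power over perturbation patterns $\mathbf b\in\{0,1\}^n$ gives $T_{p,N}^{\otimes n}(\,\cdot\otimes\sigma_S)=\sum_{\mathbf b}(2p)^{|\mathbf b|}(1-2p)^{n-|\mathbf b|}\big(\bigotimes_{i\notin\mathbf b}(T\otimes\Tr_S)_i\otimes\bigotimes_{i\in\mathbf b}M_i\big)(\,\cdot\otimes\sigma_S)$. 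Because $F(\ketbra{x}{x},\rho)=\bra{x}\rho\ket{x}$ is \emph{linear} in $\rho$, the communication error of any coding scheme over this channel is the $(2p,1-2p)$-binomial average of its per-pattern errors, and by Hoeffding's bound \cite{Hoeffding63} all patterns with $|\mathbf b|>w_0:=\lceil 2pn+n^{2/3}\rceil$ carry only an exponentially small total weight. Hence it suffices to exhibit an entanglement-assisted coding scheme (with uncorrupted entanglement $\varphi^{\otimes nR_{ea}}$) whose error stays $o(1)$ on \emph{every} pattern in which $\ge n-w_0$ of the output systems are genuine copies of $T$ and the remaining $\le w_0$ systems are acted on by an arbitrary channel; note that after conditioning on the pattern the unbounded-dimensional, possibly maximally correlated syndrome state $\sigma_S$ is harmless, as it only feeds these already-arbitrary corrupted coordinates.

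\textbf{Step 2 (a robust code and the source of the $\sqrt p$ terms --- the main obstacle).} For the scheme I would take the entanglement-assisted construction for the auxiliary channel $\overline T_p:=(1-2p)T+2p\,\mathcal R$, where $\mathcal R$ is the replacer channel onto the maximally mixed state, in the finite-blocklength form of \cite[Section~20.4]{Wilde13} (packing lemma \cite{HDW08}, weak typicality, Hoeffding's bound), whose decoder is a square-root measurement built from conditionally typical projectors and therefore essentially acts only on the typical output subspace. On a fixed pattern with $\le w_0$ corrupted coordinates the corrupted output differs from the honest one only on a subsystem of dimension $\le d_B^{w_0}$; I would bound the resulting change in the decoder's success probability by a gentle-measurement/smoothing argument, which turns the $O(p)$-scale state perturbation into an $O(\sqrt p)$-scale probability perturbation and, after optimising the typical-set parameters, produces precisely the factors $\sqrt{2\log(d_B)p}$, the dimensional prefactor $d_Ad_B$, and $|\log(p^2/(d_Ad_B))|$ in $f$; sacrificing a matching amount of rate of this order keeps the per-pattern error $o(1)$ uniformly, and rounding the $\le w_0$ corrupted outputs to the typical subspace contributes the additive $5p\log(d_B)$. \emph{This robustness step is the crux}: the naive hybrid/telescoping estimate of $\|(\text{corrupted output})-(\text{honest output})\|_1$ is linear in the number of corrupted coordinates and hence worthless at blocklength $n$, so one genuinely has to use that the decoder reads only the typical subspace rather than the full output state.

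\textbf{Step 3 (continuity and assembly).} It remains to undo the replacement of $T$ by $\overline T_p$ at the level of the capacity. For a $\varphi$ that is ($\eta$-)optimal for $C^{ea}(T)$ one has $\|(\overline T_p\otimes\id)(\varphi)-(T\otimes\id)(\varphi)\|_1=2p\,\|(\mathcal R\otimes\id)(\varphi)-(T\otimes\id)(\varphi)\|_1\le 4p$, so the Alicki--Fannes--Winter continuity estimate for the quantum mutual information gives $I(A':B)_{(\overline T_p\otimes\id)(\varphi)}\ge I(A':B)_{(T\otimes\id)(\varphi)}-\big(4p\log(d_Ad_B)+2(1+2p)h(\tfrac{2p}{1+2p})\big)$ up to the stated constants, and taking the supremum over $\varphi$ contributes the $2(1+2p)h(\tfrac{2p}{1+2p})$ term. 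Combining Step 2 and Step 3 produces, for every $n$, an $(n,m(n),\epsilon(n),R_{ea})$-coding scheme for communication under AVP with $\epsilon(n)\to0$ and $m(n)/n\to C^{ea}(T)-f(p)$, so that $C^{ea}_{AVP}(p,T)\ge C^{ea}(T)-f(p)$ with the four contributions identified above reassembling into the displayed $f(p)$; finally $\eta\to0$ removes the slack from the choice of $\varphi$.
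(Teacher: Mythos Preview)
Your overall architecture --- reduce the AVP problem to i.i.d.\ coding for a nearby depolarised channel and then pay for the replacement by continuity of $I(A':B)$ --- matches the paper. But Step~2 as written has a genuine gap. On a \emph{fixed} pattern with $w_0\approx 2pn$ corrupted coordinates, the output state does \emph{not} differ from the honest one by $O(p)$ in trace distance; the adversary controls those $w_0$ subsystems completely (with an arbitrary correlated syndrome), so the per-pattern trace distance is $O(1)$, and a gentle-measurement/smoothing argument cannot turn an $O(1)$ perturbation into an $o(1)$ change in success probability. The phrase ``the decoder reads only the typical subspace'' does not help either, because the adversarial output on the corrupted coordinates need not lie anywhere near that subspace. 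In short, the step you flag as the crux is exactly where your sketch breaks.

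The paper sidesteps this entirely with a one-line cp-map domination (the ``postselection-type'' \cite[Lemma~IV.10]{CMH20}): since any channel output is $\le d_B\cdot\tfrac{\mathbbm 1}{d_B}$, one has $T_{p,N}^{\otimes n}(\cdot\otimes\sigma_S)\le d_B^{n(p+\tilde\delta)}\,T_p^{\otimes n}+e^{-n\tilde\delta^2/(3p)}S$ as cp maps, where $T_p=(1-p)T+p\,\tfrac{\mathbbm 1}{d_B}\Tr$. Because $1-F(\ketbra{x}{x},\cdot)$ is linear and monotone under cp-domination, the AVP fidelity deficit is at most $d_B^{n(p+\tilde\delta)}$ times the i.i.d.\ deficit for $T_p$, and that exponential prefactor is absorbed by sacrificing $p\log d_B$ rate --- no decoder surgery needed. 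The specific constants you tried to attribute to gentle measurement have a different origin in the paper: the $\sqrt{2\log(d_B)p}$ comes from the choice of the typicality parameter $\delta$ needed so that the packing-lemma error $e^{-n\delta^2/(2\log\lambda_{\min})^2}$ beats the $d_B^{np}$ prefactor, and the $|\log(p^2/(d_Ad_B))|$ comes from lower-bounding the smallest nonzero eigenvalue of $(T_p\otimes\id)(\varphi_p)$ after the auxiliary perturbation $\varphi_p=(1-p)\varphi^*+p\,\phi_+$ of the optimal resource state. Your Step~3 continuity argument is essentially the paper's, up to this extra $\varphi\mapsto\varphi_p$ trick.
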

\begin{IEEEproof}[Proof of Theorem~\ref{thm-avp-coding-thm}]
In this proof, we find a coding scheme for entanglement-assisted communication under AVP with strength $p\in\left[ 0,1\right]$ by constructing an encoder channel $\mathcal{E}$ and a decoder channel $\mathcal{D}$. Achievable rates are rates for which the fidelity in Definition~\ref{def-avp-coding-scheme} goes to $1$, which corresponds to rates $R$ for which the following expression goes to zero (which is a consequence of Fuchs-van-de-Graaf-inequality \cite{FvdG97}):

\begin{align}\begin{split}
    \label{this-has-to-go-to-zero-avp}
\sup&\{%_{d_S,\sigma_S,N}
\| \id_{cl}^{\otimes nR} - \mathcal{D} \circ  (T_{p,N}^{\otimes n} \circ (\mathcal{E} \otimes \sigma_S) \otimes  \id_2^{\otimes nR_{ea}}) \circ \Compactcdots \\& \Compactcdots \circ   (\id_{cl}^{\otimes nR} \otimes \varphi^{\otimes nR_{ea}}) \|_{1 \rightarrow 1}  \} 
 \overset{n\rightarrow \infty}{\rightarrow}  0 \end{split}
\end{align}
where the supremum goes over the dimension $d_S\in \mathbbm{N}$, quantum states $\sigma_S\in\mathcal{M}_{d_S}^{\otimes n}$, and quantum channels $N:\mathcal{M}_{d_A}\otimes \mathcal{M}_{d_S} \rightarrow \mathcal{M}_{d_B}$.

Our construction makes use of a particular coding scheme for entanglement-assisted communication: For any quantum channel $T$, we consider the quantum channel $T_p=(1-p) T + p \frac{\mathbbm{1}}{d_B} \Tr(\cdot )$ with $p$ being the strength of the AVP. Using the coding scheme from \cite[Theorem~21.4.1]{Wilde13} (see also Appendix~\ref{appendix-codingerror}), for any quantum channel $T_p$, and any pure bipartite quantum state $\varphi\in \mathcal{M}_{d_A}\otimes \mathcal{M}_{d_A}$, there exists an encoder $\mathcal{E}$ and a decoder $\mathcal{D}$ for $T_p$ such that
\begin{align}\begin{split}  \label{eq-fid} &\Xi( T_p^{\otimes n })\\&:= F\Big(X,\mathcal{D} \circ  \big( ( T_p^{\otimes n } \circ \mathcal{E} ) \otimes \id_{2}^{\otimes nR_{ea}} \big) (X \otimes \varphi^{\otimes nR_{ea} })\Big)  \\&\geq 1-{\epsilon_{ea}} \end{split}\end{align}
for any classical message $x$ with the corresponding quantum state $X=\ketbra{x}$ of length $nR'$, where 
\begin{IEEEeqnarray}{lCl}\label{eq-coding-error} \epsilon_{ea}&\leq &
12 e^{-\frac{n\delta^2}{2(\log(\lambda_{\min}))^2}}\\&&  + 16 \cdot 2^{-n(I(A':B)_{(T_p\otimes \id_2)(\varphi)} -\eta(\delta,d_A,d_B)-d_Ad_B\frac{\log(n+1)}{n} -R')}\IEEEnonumber\end{IEEEeqnarray}
with the function $\eta(\delta,d_A,d_B)=2 (d_Ad_B\log(d_Ad_B) +4)  \delta+2h(d_Ad_B\delta)$ and with the smallest non-vanishing eigenvalue $\lambda_{\min}=\min\{\lambda\in \Spec((T_p\otimes\id_2)(\varphi))| \lambda>0 \}$. Here, $h(x)=-x\log(x)-(1-x)\log(1-x)$ denotes the binary entropy and $I(A:B)_{\rho}=H(A)_{\rho}+H(B)_{\rho}-H(AB)_{\rho}$ denotes the quantum mutual information.

To apply this coding scheme to the original channel $T$ under AVP, we will use the postselection-type result in \cite[Lemma~IV.10]{CMH20}: For any $\tilde{\delta}>0$, we have
\[ T_{p,N}^{\otimes n} (\cdot \otimes \sigma_S) \leq d_B^{n(p+\tilde{\delta})} T_p^{\otimes n} + e^{-\frac{n\tilde{\delta}^2}{3p}} S \]
for some quantum channel $S:\mathcal{M}_{d_A}^{\otimes n}\rightarrow \mathcal{M}_{d_B}^{\otimes n}  $. Here, we write $S_1\leq S_2$ for completely positive maps $S_1$ and $S_2$ if the difference $S_2-S_1$ is completely positive. Using a simple monotonicity property of the fidelity, we have: 
\begin{IEEEeqnarray*}{Cl}
 \IEEEeqnarraymulticol{2}{l}{\|\id_{cl}^{\otimes nR}- \Big( \mathcal{D} \circ (T_{p,N}^{\otimes n} \circ (\mathcal{E} \otimes \sigma_S) \otimes  \id_2^{\otimes nR_{ea}}) \circ \Compactcdots} \\& \Compactcdots \circ  (\id_{cl}^{\otimes nR} \otimes {\varphi}^{\otimes  nR_{ea}} ) \Big) \|_{1 \rightarrow 1}  \\
 \leq& 2 \sqrt{1-\Xi( {T}_{p,N}^{\otimes n }) }\\
\leq & 2 \sqrt{ d_B^{(p+\tilde{\delta})n} \Big(1- \Xi(  {T}_p^{\otimes n }) \Big)- e^{-\frac{n \tilde{\delta}^2}{3p}} } \\
\leq & 2\sqrt{d_B^{(p+\tilde{\delta})n} \epsilon_{ea}-e^{-\frac{n \tilde{\delta}^2}{3p}}  }\\
\end{IEEEeqnarray*}
where we make use of \cite[Proposition~4.3]{KW04} for the first inequality, \cite[Lemma~IV.10]{CMH20} for the second inequality, and equation Eq.~\eqref{eq-fid} in the last inequality. Clearly, we have Eq.~\eqref{this-has-to-go-to-zero-avp} if

\begin{IEEEeqnarray*}{ll} d_B^{(p+\tilde{\delta})n} \epsilon_{ea} & \overset{n\rightarrow \infty}{\rightarrow}  0\end{IEEEeqnarray*}
with $\epsilon_{ea}$ from Eq.~\eqref{eq-coding-error}.

% \begin{IEEEeqnarray*}{ll} d_B^{(p+\tilde{\delta})n} (12 e^{-\frac{n\delta^2}{2(\log(\lambda_{\min}))^2}}+ 8 \cdot 2^{-n(I(A':B)_{(T_p\otimes \id_2)(\varphi)} -\eta(\delta,d_A,d_B)-\frac{d_Ad_B}{n}\log(n+1) -R)} ) & \overset{n\rightarrow \infty}{\rightarrow}  0.\end{IEEEeqnarray*}

For any  $\tilde{\delta}>0$ sufficiently small, we thus obtain a bound on the choices of $\delta$ and $R$, where the choice of $\delta$ should guarantee that
\begin{equation}\label{eq:deltaErrorBound}
d_B^{(p+\tilde{\delta})n} e^{-\frac{n\delta^2}{2(\log(\lambda_{\min}))^2}}\rightarrow 0 ,
\end{equation}
while the bound on $R$ should guarantee that
\begin{align}\begin{split} \label{eq:restErrorBound}
d_B^{(p+\tilde{\delta})n}  2^{-n(I(A':B)_{(T_p\otimes \id_2)(\varphi)} -\eta(\delta,d_A,d_B)-\frac{d_Ad_B}{n}\log(n+1) -R)}\\ \overset{n\rightarrow \infty}{\rightarrow} 0.\end{split}
\end{align}
To guarantee that Eq.~\eqref{eq:deltaErrorBound} holds, we choose $\delta$ as
\[\delta =\sqrt{2{\log(d_B)p}} |\log(\lambda_{min})| ,\]
and $\tilde{\delta}>0$ sufficiently small. We then find the bound
\begin{align}\begin{split} \label{eq-r-avp}
    R <&  I(A':B)_{( T_p \otimes \id_2)(\varphi)} - \log(d_B)p
     \\&-\eta(\sqrt{2{\log(d_B)p}} |\log(\lambda_{min})|,d_A,d_B)%- d_Ad_B \frac{\log(n+1)}{n} 
    \end{split}
\end{align}
such that Eq.~\eqref{eq:restErrorBound} holds. Thereby, we obtain
\begin{IEEEeqnarray*}{Cl}
C^{ea}_{AVP}(p,T) \geq& 
I(A':B)_{(T_p \otimes \id_2)(\varphi)}- \log(d_B)p   \\&
-\eta(\sqrt{2{\log(d_B)p}} |\log(\lambda_{min}) |,d_A,d_B)  
\end{IEEEeqnarray*}
for any pure quantum state $\varphi$, where $\lambda_{\min}=\min\{\lambda\in \Spec((T_p\otimes\id_2)(\varphi))| \lambda>0 \} \}$.

To get an expression in terms of the usual entanglement-assisted capacity $C^{ea}(T)$, we use continuity estimates in the following way: Consider the pure quantum state $\varphi^*=\text{argmax}_{\varphi} I(A':B)_{( T\otimes \id_2)(\varphi)}$ which achieves the maximum for the quantum mutual information for the channel $T$.
 Then, consider a pure quantum state $\varphi_p=(1-p)\varphi^* + p\phi_+$.
 For this state, we have that $\|(T_p\otimes \id_2)(\varphi_p)-(T_p\otimes \id_2)(\varphi^*) \|_{\Tr} \leq p$. 
Furthermore, we have that $( T_p\otimes \id_2)(\varphi_p)=(1-p)( T_p\otimes \id_2)(\varphi^*)+p( T_p\otimes \id_2)(\phi_+)=(1-p)( T_p\otimes \id_2)(\varphi^*)+p (1-p) ( T\otimes \id_2)(\phi_+)+ p^2  \frac{\mathbbm{1}}{d_Ad_B}$. Because of this, and because all summands are positive semi-definite, the minimum eigenvalue of $( T_p\otimes \id_2)(\varphi_p)$ is lower bounded as $\lambda_{\min} \geq   \frac{p^2}{d_Ad_B}$. In addition, we know that $ \|T(\rho)-T_p(\rho) \|_{\Tr} \leq p $ for all quantum states $\rho$.
  By triangle inequality, we therefore find
  \[  \|(T_p\otimes \id_2)(\varphi_p)-(T\otimes \id_2)(\varphi^*) \|_{\Tr} \leq 2p \]
 Then, using the continuity of mutual information \cite[Corollary~1]{Shirokov17}, we find that
 \begin{align*}
    & | I(A':B)_{( T_p \otimes \id_2)(\varphi_p)} - I(A':B)_{( T \otimes \id_2)(\varphi*)} | \\
     &\leq 4p\log(d_B)+ 2(1+2p)h\Big(\frac{2p}{1+2p}\Big) 
 \end{align*} 
In total, we thereby obtain the bound \[C^{ea}_{AVP}(p,T) \geq  C^{ea}(T)-g(p) \]
where
\begin{IEEEeqnarray*}{lCl}
g(p)&=&     \eta(\sqrt{2{\log(d_B)p}} |\log(\frac{p^2}{d_Ad_B})|,d_A,d_B) 
\\&&+5p\log(d_B)+ (1+2p)h\Big(\frac{2p}{1+2p}\Big) \\& =&
2(d_Ad_B\log(d_Ad_B) +1)\sqrt{2{\log(d_B)p}} |\log( \frac{p^2}{d_Ad_B} )|\\&& +  2h\Big(d_Ad_B\sqrt{2{\log(d_B)p}} |\log(\frac{p^2}{d_Ad_B})|\Big) \\&& 
+5p\log(d_B)+ 2(1+2p)h\Big(\frac{2p}{1+2p}\Big) 
\end{IEEEeqnarray*}
\end{IEEEproof}

As a consequence of this result, we find the following continuity in perturbation strength $p$ of the entanglement-assisted capacity under AVP. Moreover, the usual notion of entanglement-assisted capacity is recovered for vanishing perturbation probability $p$.

\begin{theorem}
For every $\eta>0$ and $d_A,d_B\in\mathbbm{N}$ there exists a $p(\eta,d_A,d_B)\in \left[ 0,1\right]$ such that 
\[
C^{ea}_{AVP}(p,T) \geq C^{ea}(T) - \eta,
\]
for every $p\leq p(\eta,d_A,d_B)$ and every quantum channel $T:\mathcal{M}_{d_A} \rightarrow \mathcal{M}_{d_B}$.
\end{theorem}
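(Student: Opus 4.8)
The plan is to deduce the statement directly from Theorem~\ref{thm-avp-coding-thm}. That theorem gives, for every $0\le p\le 1$ and \emph{every} quantum channel $T:\mathcal{M}_{d_A}\rightarrow\mathcal{M}_{d_B}$, the bound $C^{ea}_{AVP}(p,T)\ge C^{ea}(T)-f(p)$, and the crucial observation is that the correction term $f(p)$ displayed there depends only on $p$, $d_A$ and $d_B$, and not on the channel $T$ itself. Since $C^{ea}(T)$ enters the right-hand side with coefficient $1$, it cancels out of the quantity we must control, so it suffices to show $\lim_{p\to 0^+}f(p)=0$. This produces a single threshold $p(\epsilon,d_A,d_B)$ that works simultaneously for all channels with the prescribed input and output dimensions.

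First I would check that each of the four summands making up $f(p)$ vanishes as $p\to 0^+$. The expressions of the form $\sqrt{p}\,\bigl|\log(p^2/(d_Ad_B))\bigr|$ tend to $0$ because $\bigl|\log(p^2/(d_Ad_B))\bigr|=\bigl|2\log p-\log(d_Ad_B)\bigr|$ grows only logarithmically and is therefore dominated by $\sqrt{p}$; this disposes of the first summand, and, since $h$ is continuous on $[0,1]$ with $h(0)=0$ and the argument of $h$ in the second summand is just $d_Ad_B$ times this same vanishing quantity, also of the second. The term $5p\log(d_B)$ clearly vanishes, and $2(1+2p)\,h\bigl(2p/(1+2p)\bigr)\to 2\cdot 1\cdot h(0)=0$ because $2p/(1+2p)\to 0$. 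Hence $f(p)\to 0$ as $p\to 0^+$, and extending $f$ by setting $f(0):=0$ gives a function that is continuous at $0$ with value $0$ there.

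Then, given $\epsilon>0$ and $d_A,d_B\in\mathbbm{N}$, I would choose $p(\epsilon,d_A,d_B)\in(0,1]$ small enough that $f(p)\le\epsilon$ for all $p\in[0,p(\epsilon,d_A,d_B)]$; such a choice exists by the limit just established (note that $f$ need not be monotone, but the limit at $0$ still guarantees an interval $[0,p(\epsilon,d_A,d_B)]$ on which $f\le\epsilon$, together with $f(0)=0\le\epsilon$). For any $p\le p(\epsilon,d_A,d_B)$ and any quantum channel $T:\mathcal{M}_{d_A}\rightarrow\mathcal{M}_{d_B}$, Theorem~\ref{thm-avp-coding-thm} then yields $C^{ea}_{AVP}(p,T)\ge C^{ea}(T)-f(p)\ge C^{ea}(T)-\epsilon$, which is the claim.

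I do not expect any substantive obstacle: all the real work is already contained in Theorem~\ref{thm-avp-coding-thm}, and what remains is an elementary limit computation. The only point worth emphasizing is the uniformity in $T$, which is immediate once one notes that $f$ was deliberately written so as to depend only on $p$, $d_A$ and $d_B$, so that the bound degrades by exactly the same amount for every channel with the given dimensions.
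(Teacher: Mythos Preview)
Your proposal is correct and takes essentially the same approach as the paper: the theorem is stated as an immediate consequence of Theorem~\ref{thm-avp-coding-thm}, using that the correction term $f(p)$ depends only on $p,d_A,d_B$ and satisfies $f(p)\to 0$ as $p\to 0$ (the paper records this as $f(p)=\mathcal{O}(p\log p)$). Your explicit verification of the four limits is a faithful unpacking of that remark.
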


\begin{corollary}
Let $p\geq0$. Then, for every quantum channel $T:\mathcal{M}_{d_A}\rightarrow \mathcal{M}_{d_B}$, we have 
\[\lim_{p\rightarrow 0} C^{ea}_{AVP}(p,T) = C^{ea}(T).\]
\end{corollary}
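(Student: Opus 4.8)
The plan is to derive the corollary as an immediate consequence of the preceding theorem, which is itself a packaging of the lower bound in Theorem~\ref{thm-avp-coding-thm}. First I would observe that the function $f(p)$ appearing in Theorem~\ref{thm-avp-coding-thm} satisfies $\lim_{p\to 0} f(p) = 0$: each of its four summands tends to $0$ as $p\to 0^+$, since $\sqrt{p}\,|\log(p^2/(d_Ad_B))| \to 0$, the binary entropy $h$ is continuous with $h(0)=0$, the term $5p\log(d_B)\to 0$, and $2(1+2p)h(2p/(1+2p))\to 0$. (At $p=0$ one simply has $f(0)=0$ and $C^{ea}_{AVP}(0,T)=C^{ea}(T)$ trivially, since AVP of strength $0$ reduces to the faultless setting.) This gives the one-sided limit $\liminf_{p\to 0} C^{ea}_{AVP}(p,T) \geq C^{ea}(T)$.

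Next I would establish the matching upper bound $C^{ea}_{AVP}(p,T)\leq C^{ea}(T)$ for every $p\geq 0$. This direction is essentially definitional: an $(n,m,\epsilon,R_{ea})$-coding scheme for entanglement-assisted communication under AVP of strength $p$ must, by Definition~\ref{def-avp-coding-scheme}, achieve fidelity at least $1-\epsilon$ against the \emph{infimum} over all choices of syndrome dimension $d_S$, syndrome state $\sigma_S$, and perturbing channel $N$. In particular it must work for the trivial choice $N = T\otimes \Tr_S$ (take $d_S=1$, $\sigma_S=1$), for which $T_{p,N} = T\otimes\Tr_S$ and the scheme reduces to an ordinary entanglement-assisted coding scheme for $T$. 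Hence any rate achievable under AVP is achievable in the faultless sense, so $C^{ea}_{AVP}(p,T)\leq C^{ea}(T)$. Combining the two bounds yields $\lim_{p\to 0} C^{ea}_{AVP}(p,T) = C^{ea}(T)$.

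I do not anticipate any genuine obstacle here; the corollary is a soft consequence of Theorem~\ref{thm-avp-coding-thm} together with the trivial upper bound. The only point requiring minor care is verifying that the term $\sqrt{p}\,|\log(p^2/(d_Ad_B))|$ — which contains a factor blowing up logarithmically — still vanishes as $p\to 0$; this is immediate because $\sqrt{p}\,|\log p|\to 0$, but it is worth stating explicitly so the reader sees that $f(p)=\mathcal{O}(\sqrt{p}\,\log(1/p))\to 0$. One should also note that $d_A,d_B$ are held fixed throughout (the channel $T$ is fixed), so there is no uniformity issue with the constants in $f$.
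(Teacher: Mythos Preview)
Your proposal is correct and is exactly the argument the paper has in mind: the corollary is stated without proof, as an immediate consequence of Theorem~\ref{thm-avp-coding-thm} (the lower bound, together with $f(p)\to 0$) and the trivial upper bound you describe.

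One definitional wrinkle worth flagging in your upper-bound step: the AVP coding scheme in Definition~\ref{def-avp-coding-scheme} is allowed to use an \emph{arbitrary} pure bipartite state $\varphi$ as its entanglement resource, whereas the standard scheme in Definition~\ref{def-ea-coding-scheme} uses maximally entangled pairs $\phi_+$. So after specializing to $N=T\otimes\Tr_S$, $d_S=1$, you obtain a scheme for $T^{\otimes n}$ assisted by $\varphi^{\otimes nR_{ea}}$, which is not \emph{literally} a scheme in the sense of Definition~\ref{def-ea-coding-scheme}. This is harmless --- the remark following Definition~\ref{def-ea-coding-scheme} explains that allowing arbitrary pure entangled assistance does not change $C^{ea}(T)$ (via entanglement dilution with sublinear classical communication) --- but you should cite that remark so the reduction is airtight.
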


\section{Fault-tolerant entanglement distillation} \label{sec-ft-ent-dist}

\noindent As outlined in Section~\ref{sec-ea-cap}, the entanglement-assisted capacity considers encoders and decoders as general quantum channels that have access to entanglement.
In a fault-tolerant setup, framing the encoder and decoder as circuits with an implementation in a fault-tolerant code means that the entanglement has to be transferred into the code space through an interface as explained in Section~\ref{sec-ft-cap-setup}. Naturally, this interface is in itself a fault-affected circuit, and can produce a noisy mixed state in the code space. 

\begin{figure}[htbp]
  \centering
       \includegraphics[width=0.5\textwidth]{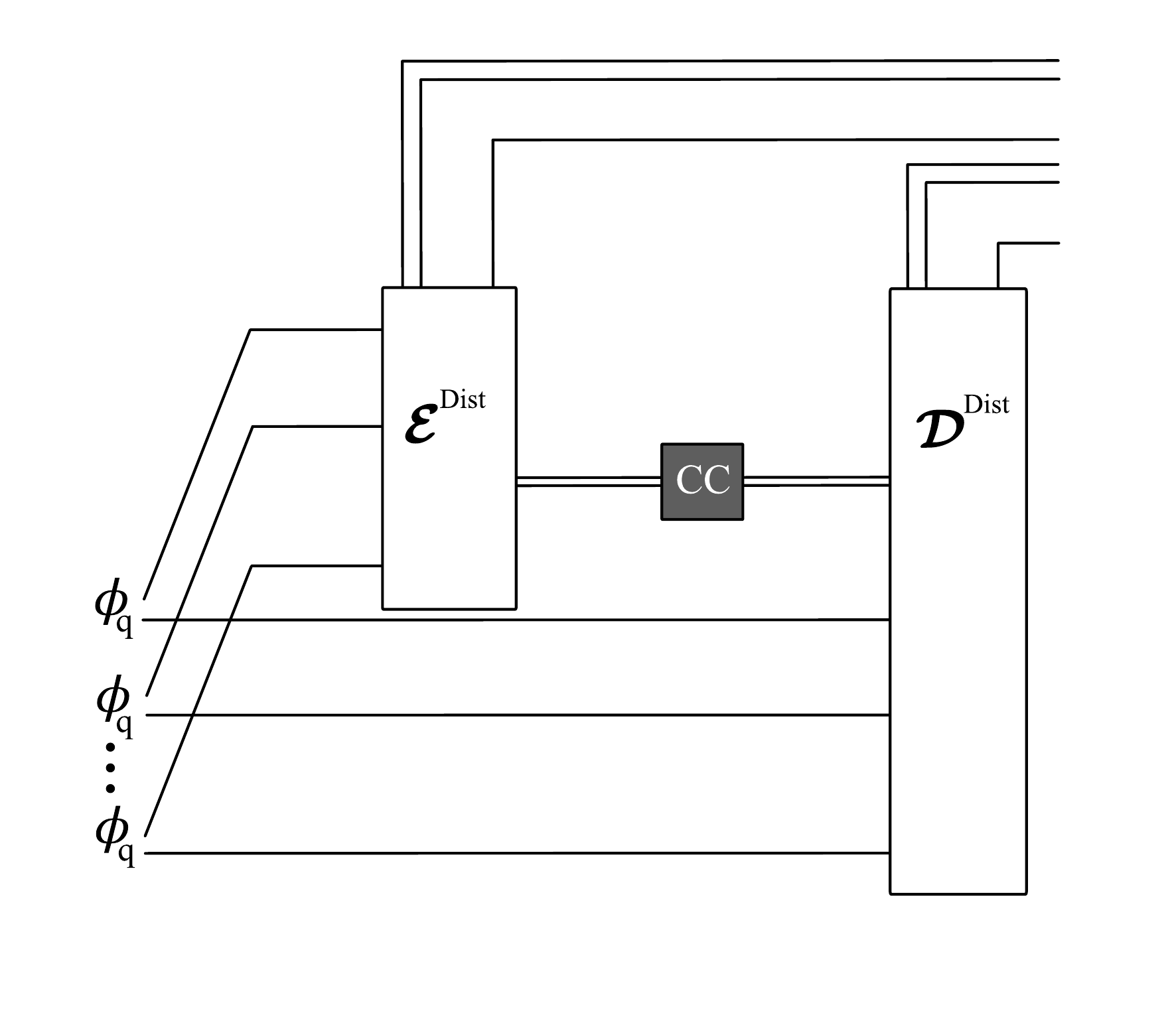}
\caption{\textbf{Setup for entanglement distillation based on the protocol in \cite{DW03}.} Two parties each have access to one part of $k$ noisy entangled states $\phi_q$. One party performs local operations ${\mathcal{E}}^{\Dist}$ and sends one-way classical communication to the other, who performs local operations ${\mathcal{D}}^{\Dist}$. The output state of this scheme is close in fidelity to $(1-H(\phi_q))k$ copies of the maximally entangled state (cf. Theorem~\ref{thm-ent-dist-original}).
}
\label{fig-ent-dist-normally}
\end{figure}

Fortunately, we can asymptotically carry out entanglement distillation with one-way classical communication \cite{DW03}, thereby transforming many copies of a noisy entangled state into fewer copies of a perfectly maximally entangled state.

\begin{theorem}[{Entanglement distillation, \cite[Theorem~10]{DW03}}] \label{thm-ent-dist-original}
Let $\{\phi_+,\phi_-,\psi_+,\psi_-\}$ be the Bell basis of the space $\mathcal{M}_2^{\otimes 2}$.

For sufficiently large $k\in\mathbbm{N}$, there exists a $\delta\geq0$ and a quantum channel $\Dist:\mathcal{M}_2^{\otimes 2k}\rightarrow \mathcal{M}_2^{\otimes 2(1-H(\phi_q)))k}$ consisting of local operations and $H(\phi_q) k$ bits of one-way classical communication, such that $k$ copies of the state $\phi_{q}=(1-q) \phi_+ +\frac{q}{3} (\phi_- + \psi_+ + \psi_-)$ are mapped to $(1-H(\phi_q))k$ copies of the maximally entangled state $\phi_+$
with the following fidelity:
\[ F\big((\phi_+)^{\otimes (1-H(\phi_{q}))k}, \Dist(\phi_q^{\otimes k}) \big) \geq 1- \epsilon_{dist}(q)\]
with 
\[\epsilon_{dist}(q) \leq 
2e^{-\frac{k \delta^2}{\log(q/3)^2}}+\sqrt{2\sqrt{3}e^{-k \frac{\delta^2}{2\log(q/3)^2}}}\]
\end{theorem}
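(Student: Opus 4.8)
The plan is to prove this via the one-way \emph{hashing} protocol for Bell-diagonal states, in the spirit of Bennett--DiVincenzo--Smolin--Wootters and Devetak--Winter, while tracking the error terms explicitly. The starting observation is that $\phi_q$ is diagonal in the Bell basis, so $\phi_q^{\otimes k} = \sum_{z} p(z)\,\beta_z$ where $z$ ranges over $\{0,1,2,3\}^k$, $\beta_z=\bigotimes_{i=1}^k\beta_{z_i}$ is a product of pure Bell states, and $p(z)=\prod_i p(z_i)$ with $p(0)=1-q$ and $p(1)=p(2)=p(3)=q/3$. Identifying each Bell state with a pair of amplitude/phase bits turns $z$ into a biased i.i.d.\ string in $\mathbbm{F}_2^{2k}$, and $-\frac1k\log p(z)$ concentrates around $H(\phi_q)=-(1-q)\log(1-q)-q\log(q/3)$. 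The distillation task then reduces to Alice and Bob \emph{identifying} $z$ using only bilateral Clifford gates, local computational-basis measurements, and one-way classical communication from Alice to Bob; once $z$ is known, Bob applies a local Pauli on each unmeasured pair to convert $\beta_{z_i}$ into $\phi_+$.

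For the protocol itself I would pick $m:=\lceil k(H(\phi_q)+\delta)\rceil$ uniformly random vectors $r^{(1)},\dots,r^{(m)}\in\mathbbm{F}_2^{2k}$; for each $r^{(j)}$, Alice and Bob use a fixed pattern of bilateral CNOTs to coherently collect the parity $r^{(j)}\!\cdot z$ onto one designated Bell pair, measure that pair bilaterally, and Alice transmits her outcome to Bob. This consumes $m$ of the $k$ pairs and reveals the syndrome $Lz$ of the random linear map $L$ whose rows are the $r^{(j)}$. Bob decodes to the generically unique $\delta$-typical $\hat z$ with $L\hat z=Lz$. Since for a suitable slack the $\delta$-typical set $T^k_\delta$ has size at most $2^{k(H(\phi_q)+\delta/2)}$, the expected number of typical strings other than $z$ colliding with it under the random $L$ is at most $2^{k(H(\phi_q)+\delta/2)}2^{-m}\le 2^{-k\delta/2}$, exponentially small by a first-moment bound; on the complementary event $\hat z=z$, so after Bob's Pauli corrections the $k-m\ge(1-H(\phi_q)-\delta)k$ surviving pairs are exactly $\phi_+^{\otimes(k-m)}$. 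The rate $1-H(\phi_q)-\delta$ in the statement is then recovered by letting $\delta$ shrink slowly with $k$.

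It remains to assemble the two error contributions into $\epsilon_{dist}(q)$. Atypicality of $z$ costs $\prob[z\notin T^k_\delta]\le 2\exp(-k\delta^2/\log(q/3)^2)$ by a Hoeffding-type bound on the i.i.d.\ sum $-\sum_i\log p(z_i)$, the constant $\log(q/3)^2$ being, up to factors, the squared range of the summands. The second contribution is the source of the square root in the statement: because the parity checks are performed \emph{coherently}, they act essentially as a projection onto the typical subspace, so by the gentle-measurement lemma the disturbance of the ideal post-protocol pure state is of amplitude order --- the square root of an exponentially small probability --- of the form $\sqrt{2\sqrt3\,e^{-k\delta^2/(2\log(q/3)^2)}}$; converting this disturbance together with the decoding-failure probability into the fidelity of the statement via Fuchs--van de Graaf yields $\epsilon_{dist}(q)\le 2e^{-k\delta^2/\log(q/3)^2}+\sqrt{2\sqrt3\,e^{-k\delta^2/(2\log(q/3)^2)}}$. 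The main obstacle is exactly this last step: one must organise the bilateral-CNOT parity extraction so that it is simultaneously coherent and strictly one-way, so that conditioned on correct identification the leftover pairs are genuinely in the \emph{pure} maximally entangled state rather than merely Bell-diagonal, and so that both the atypicality and the collision events enter the fidelity only at amplitude level --- this coherence bookkeeping is what forces the particular square-root shape of $\epsilon_{dist}$ instead of a naive linear-in-probability estimate.
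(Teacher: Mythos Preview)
The paper does not actually prove this theorem: it is stated with the attribution ``see Theorem~10 in \cite{DW03}'' and used as a black box, so there is no proof in the paper to compare against. Your outline via the one-way hashing protocol --- Bell-diagonal state as a classical mixture of product Bell states, random parity extraction by bilateral CNOTs and one-way communication, Hoeffding for the typicality error, a first-moment collision bound for decoding, and gentle measurement to convert the coherent disturbance into a square-root-of-probability correction --- is precisely the structure of the Devetak--Winter argument the paper cites, so in that sense your approach matches the intended source.

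One caveat: your derivation of the \emph{exact} constants in the stated $\epsilon_{dist}(q)$ is more asserted than derived. You correctly identify $\log(q/3)^2$ as (essentially) the squared range of the summands $-\log p(z_i)$ entering Hoeffding's inequality, and the square-root shape of the second term as the amplitude-level gentle-measurement contribution, but the specific prefactors $2$ and $2\sqrt{3}$ and the factor-of-two difference in the two exponents are not actually produced by your argument as written --- you would need to be explicit about which version of Hoeffding you use (the range here is $|\log(q/3)|-|\log(1-q)|$, not $|\log(q/3)|$) and about exactly which probability the gentle-measurement lemma is applied to. This is bookkeeping rather than a conceptual gap, and since the paper only needs that $\epsilon_{dist}$ is exponentially small in $k$, the precise constants are immaterial downstream.
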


\begin{remark}
The von Neumann entropy of the state $\phi_{q}$ is $ H(\phi_{q})=-(1-q)\log(1-q) - q \log(\frac{q}{3})=h(q)+q\log (3)$ where $h$ denotes the binary entropy. We restrict ourselves to using states of this form because we consider the Pauli i.i.d. fault model, but similar considerations can be made for general noisy input states. In that case, the amount of maximally entangled states that can be obtained from copies of the state $\rho$ is given by its distillable entanglement $H(A)_{\rho}-H(AB)_{\rho}$ per copy, and the amount of classical communication that has to be performed amounts to $H(A)_{\rho}-H(B)_{\rho}+H(R)_{\rho}$ bits per copy, where $R$ denotes a purifying system \cite[Remark~11]{DW03}. In other words, our results extend to fault-tolerant entanglement distillation from arbitrary states, where the noisy state $\rho$ is additionally transformed by the noisy effective interfaces such that the entanglement effectively has to be distilled from the state $(N_{enc,p,l} \otimes N_{enc,p,l})(\rho)$. Fault-tolerant entanglement distillation could still be performed, but may require a higher number of copies of $T$ and may lead to fewer perfect maximally entangled pairs in the code space.
\end{remark}

\begin{remark}
Assuming faultless encoder and decoder circuits, but noisy mixed entangled states, using a subset of the channel copies for entanglement distillation implies a notion of classical capacity with assistance by noisy states, which may be of independent interest. 
For $R_{ea}\geq \frac{\log(2)}{1-H(\phi_q)}$, we see that the capacity $C_{\phi_q}^{ea}$ with assistance by $nR_{ea}$ copies of the state $\phi_{q}=(1-q) \phi_+ +\frac{q}{3} (\phi^- + \psi^+ + \psi^-)$ is given by $C_{\phi_q}^{ea}(T)\geq C_{\phi_+}^{ea}(T)- %\frac{H(\phi_q) \log d_B } {1-H(\phi_q) } 
H(\phi_q) R_{ea} \frac{C_{\phi_+}^{ea}(T)}{C(T)} $.
Note that this is related to the scheme of special dense coding, a generalized version of superdense coding where the sender and receiver have access to arbitrary pairs of qubits and are connected by a noiseless, perfect quantum channel. Using purification procedures \cite{BPV98,Bowen01,ZZYS17} or directly finding a coding scheme \cite{HHHLT01}, coding protocols have been proposed and achievable rates have been computed for this scenario, where the latter also shows that states with bounded (i.e. non-distillable) entanglement do not enhance communication via a perfect quantum channel at all.
\end{remark}

Implementing the circuits for the distillation machines fault-tolerantly requires physical states to be inserted into the code space via an interface.
This interface is also subject to the fault model and only correct with a certain probability which cannot be made arbitrarily small by increasing the concatenation level of the concatenated $7$-qubit Steane code. Effectively, this leads to noisy states in the code space, which the distillation tries to counteract.

Note that this means that the input state into the whole protocol, the original sea of maximally entangled states can still be assumed to be noiseless; the input to our protocol for entanglement-assisted communication will be in the form of perfectly maximally entangled physical qubits that become noisy because of the fault-affected interface, as sketched in Figure~\ref{fig-ft-ea-cap-setup}. In summary, this proposed scheme for fault-tolerant distillation takes perfectly maximally entangled physical qubits as an input, and the desired output is in the form of perfectly maximally entangled states in the code space.

In order to show that a fault-tolerant distillation protocol with the concatenated $7$-qubit Steane code can transform physical maximally entangled states such that they are very close to maximally entangled states in the code space, we are going to make use of Lemma~\ref{thm-effective-encoder}. Because of the fault-affected interface, any quantum state that serves as input into a fault-tolerantly implemented circuit is transformed by the interface into an effective input state which is a mixture of the original state (with weight of approximately $1-4cp$, where $c$ is the constant from Theorem~\ref{thm-correct-interfaces}) and a noisy state (with weight of $4cp$), serving as input into the perfect circuit. This is true in particular for $k$ copies of the maximally entangled state. Then, we can employ fault-tolerant circuits implementing the protocol from \cite{DW03} to fault-tolerantly restore maximal entanglement for $(1-h(4cp)-4cp\log(3))k$ qubit states in the code space. Our setup for fault-tolerant distillation is sketched in Figure~\ref{fig-ft-ent-dist}.

\begin{figure}[htbp]
  \centering
      \includegraphics[width=0.5\textwidth]{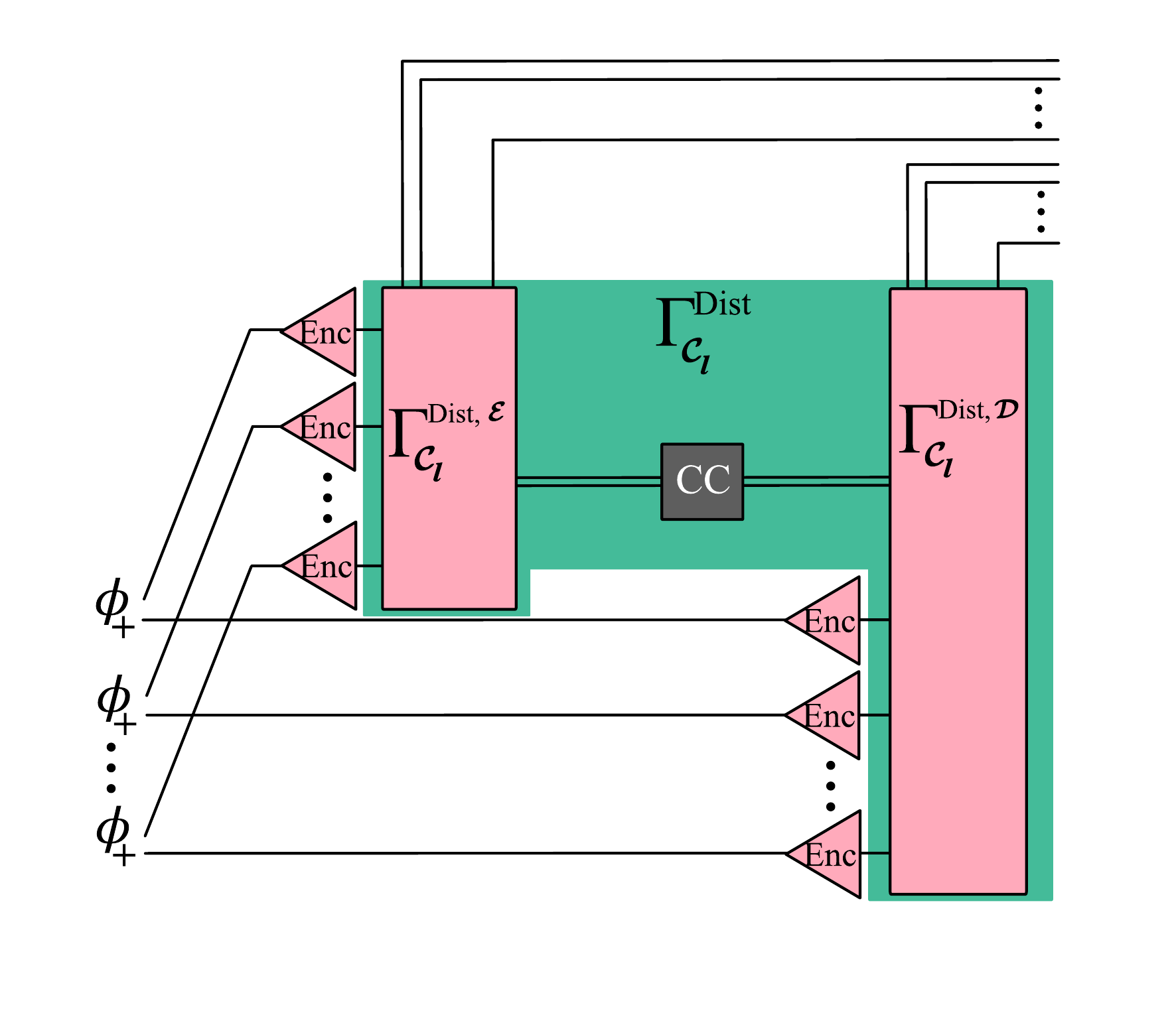}
\caption{\textbf{Setup for fault-tolerant entanglement distillation.} The local operations performed in Figure~\ref{fig-ent-dist-normally} are implemented in an error correcting code $\mathcal{C}_l$, and interfaces map the $k$ logical states into effective mixed states in the code-space (cf. Theorem~\ref{thm-ft-ent-dist-normally}).}
\label{fig-ft-ent-dist}
\end{figure}

\begin{theorem}[Fault-tolerant entanglement distillation] 
\label{thm-ft-ent-dist-normally}
For each $l\in \mathbbm{N}$, let $\mathcal{C}_l$ denote the $l$-th level of the concatenated 7-qubit Steane code with threshold $p_0$.
For any $0\leq p \leq \frac{p_0}{2}$ and for all $k\in \mathbbm{N}$ large enough, there exists a circuit %$\Delta:\mathcal{M}_2^{\otimes 2k}\rightarrow \mathcal{M}_2^{\otimes 2\beta(4cp)k}$
$\Gamma^{\Dist}:\mathcal{M}_2^{\otimes 2k}\rightarrow \mathcal{M}_2^{\otimes 2\beta(4cp)k}$
 using $(1-\beta(4cp))k$ bits of classical communication, and two quantum states $\sigma^{\mathcal{E}}_S \in \mathcal{M}_{d_{S_E}}$ and $\sigma^{\mathcal{D}}_S \in \mathcal{M}_{d_{S_D}}$ such that
    \begin{IEEEeqnarray*}{Cl}
            \IEEEeqnarraymulticol{2}{l}{\|(\DecI_l^*)^{\otimes 2\beta(4cp)k}\circ \big[ \Gamma^{\Dist}_{\mathcal{C}_l}  \circ \EncI_l^{\otimes 2k} \big]_{\mathcal{F}(p)} (\phi_+)^{\otimes k} } \\& -(\phi_+)^{\otimes \beta(4cp)k}  \otimes \sigma^{\mathcal{E}}_S \otimes \sigma^{\mathcal{D}}_S\|_{\Tr}   \\ \leq& p_0 \left(\frac{p}{p_0}\right)^{2^l} |\Loc(\Gamma^{\Dist})|  +  \sqrt{ \epsilon_{dist}(4cp) } + \frac{2}{k} 
        \end{IEEEeqnarray*}
with the constant $c$ from Theorem~\ref{thm-correct-interfaces}, $\epsilon_{dist}(q)$ the function from Theorem~\ref{thm-ent-dist-original}, and $\beta(q)=1-h(q)-q\log(3)$.
\end{theorem}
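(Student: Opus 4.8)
The plan is to implement the one-way entanglement distillation protocol of~\cite{DW03} fault-tolerantly, feeding the $2k$ physical halves of the maximally entangled pairs into the code through the encoding interfaces $\EncI_l$, and to analyse the resulting noisy circuit in two stages: first replacing the fault-affected implementation by an \emph{ideal} circuit acting on suitably noisy effective inputs, and then recognising that ideal circuit as ordinary distillation of $k$ copies of the standard noisy Bell state $\phi_{4cp}$ of Theorem~\ref{thm-ent-dist-original}. Concretely, I would let $\Gamma^{\Dist}$ be the ideal circuit (local operations on the two sides joined by one-way classical communication) that first \emph{standardises} each incoming pair to $\phi_{4cp}$ and then runs the hashing protocol of Theorem~\ref{thm-ent-dist-original} on $\phi_{4cp}^{\otimes k}$, with $\Gamma^{\Dist}_{\mathcal{C}_l}$ its fault-tolerant implementation. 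By the quantum-output analogue of Lemma~\ref{thm-effective-encoder} applied to the two sides separately --- i.e.\ the threshold theorem of~\cite{AGP05} combined with the interface-correctness bound of Theorem~\ref{thm-correct-interfaces}, the $2cp$ interface-failure weight being absorbed into $N_{enc,p,l}=(1-2cp)\id_2+2cp\,N_l$ --- composed along the noiseless classical channel, the state $(\DecI_l^*)^{\otimes 2\beta(4cp)k}\circ[\Gamma^{\Dist}_{\mathcal{C}_l}\circ\EncI_l^{\otimes 2k}]_{\mathcal{F}(p)}(\phi_+^{\otimes k})$ is within trace distance $p_0(p/p_0)^{2^l}|\Loc(\Gamma^{\Dist})|$ of $\Gamma^{\Dist}\big(N_{enc,p,l}^{\otimes 2k}(\phi_+^{\otimes k})\big)\otimes\sigma^{\mathcal{E}}_F\otimes\sigma^{\mathcal{D}}_F$, where $\sigma^{\mathcal{E}}_F,\sigma^{\mathcal{D}}_F$ are fault-pattern-dependent syndrome states appearing, by the structure of the effective-circuit analysis of~\cite{CMH20}, in product with the decoded logical output on the two parties' output blocks.

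\emph{Effective input and its standardisation.} Since the $2k$ interfaces act independently, the effective input factorises as $\rho_{\mathrm{eff}}^{\otimes k}$ with $\rho_{\mathrm{eff}}:=(N_{enc,p,l}\otimes N_{enc,p,l})(\phi_+)$, and because each copy of $N_{enc,p,l}$ retains weight $1-2cp$ on the identity one gets the \emph{$N_l$-independent} bound $\langle\phi_+|\rho_{\mathrm{eff}}|\phi_+\rangle\geq(1-2cp)^2$. The standardisation block of $\Gamma^{\Dist}$ then maps $\rho_{\mathrm{eff}}$ to $\phi_{4cp}$ using only transversal single-qubit Clifford gates and shared randomness: first the isotropic twirl $\rho\mapsto\int(U\otimes\overline{U})\,\rho\,(U\otimes\overline{U})^\dagger\,dU$, which is reproduced exactly by the single-qubit Clifford group (a unitary $2$-design) and sends $\rho_{\mathrm{eff}}$ to the isotropic state $\phi_{q'}$ with $q'=1-\langle\phi_+|\rho_{\mathrm{eff}}|\phi_+\rangle\leq 4cp$; then admixture of isotropic noise $\sigma\mapsto(1-\mu)\sigma+\mu\tfrac{\mathbbm{1}}{4}$ with $\mu=\frac{4cp-q'}{3/4-q'}\in[0,1]$ (legitimate whenever $4cp<3/4$, the theorem being vacuous otherwise since then $\beta(4cp)\leq 0$), which produces $\phi_{4cp}$ on each pair. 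Thus the ideal circuit $\Gamma^{\Dist}$ effectively distils $\phi_{4cp}^{\otimes k}$.

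\emph{Distillation and assembly.} For $k$ large enough, Theorem~\ref{thm-ent-dist-original} applied to $\phi_{4cp}^{\otimes k}$ supplies a one-way LOCC channel using $H(\phi_{4cp})k=(1-\beta(4cp))k$ bits of classical communication whose output has fidelity at least $1-\epsilon_{dist}(4cp)$ with $\phi_+^{\otimes\beta(4cp)k}$; the Fuchs--van de Graaf inequality~\cite{FvdG97} turns this into a trace-distance bound $\sqrt{\epsilon_{dist}(4cp)}$ on the logical part. A triangle inequality combining this with the reduction estimate $p_0(p/p_0)^{2^l}|\Loc(\Gamma^{\Dist})|$, and absorbing an $\mathcal{O}(1/k)$ term coming from rounding the output length and the communication count, then yields the asserted bound, the syndrome contributions $\sigma^{\mathcal{E}}_F\otimes\sigma^{\mathcal{D}}_F$ being carried along unchanged.

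\emph{Main obstacle.} I expect the technical heart to be the first step: formulating the quantum-output, two-party variant of Lemma~\ref{thm-effective-encoder} and verifying that the decoded logical output genuinely factorises off the syndrome states $\sigma^{\mathcal{E}}_F$ and $\sigma^{\mathcal{D}}_F$, which requires tracking carefully how the threshold theorem and Theorem~\ref{thm-correct-interfaces} compose across two separately encoded circuits joined by a classical channel. The conceptual point that makes the rate come out exactly as $\beta(4cp)$ is the uniform-in-$N_l$ bound $q'\leq 4cp$: however adversarial the perturbation channel $N_l$ is, the identity-weight $1-2cp$ surviving in $N_{enc,p,l}$ keeps the effective pair close enough to $\phi_+$ to be standardised to $\phi_{4cp}$ and distilled at the claimed rate.
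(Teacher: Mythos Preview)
Your proposal is correct and follows essentially the same route as the paper's own argument: reduce the fault-affected implementation to the ideal circuit acting on the effective input $(N_{enc,p,l}\otimes N_{enc,p,l})(\phi_+)$ via the threshold theorem together with Theorem~\ref{thm-correct-interfaces} (the quantum-output, two-party version of Lemma~\ref{thm-effective-encoder} you identify as the main technical step), twirl each pair to Bell-diagonal form, and then invoke Theorem~\ref{thm-ent-dist-original}. Your treatment of the standardisation step is in fact slightly more careful than the paper's sketch: the paper writes the effective pair directly as $(1-4cp)\phi_+ + 4cp\,\tilde{\sigma}_l$ and asserts the twirl lands on $\phi_{4cp}$, whereas you correctly observe that the twirl yields $\phi_{q'}$ with $q'\leq 4cp$ and insert an explicit noise-admixture to hit $\phi_{4cp}$ exactly, which is the cleaner way to make the rate come out as $\beta(4cp)$ uniformly in $N_l$.
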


The proof employs techniques from \cite{AGP05} to relate the fault-affected circuit implementations sketched in Figure~\ref{fig-ft-ent-dist} to the ideal circuits via the threshold theorem and choosing a high enough concatenation level $l$. Due to Lemma~\ref{thm-effective-encoder}, the physical input states (which are maximally entangled states here) are acted upon by the effective interface.
Thereby, they are effectively transformed into the noisy mixed states of the form $(N_{enc,p,l}\otimes N_{enc,p,l}) (\phi_+)= (1-4cp) \phi_+ + 4cp \tilde{\sigma}_l $ for some quantum state $\sigma_l$, which are twirled into Bell-diagonal form $\phi_{4cp}=(1-4cp)\phi_+ + \frac{4cp}{3} (\phi_- +\psi_+ + \psi_-)$ by the first step of the distillation protocol. For states of this form, the results from Theorem~\ref{thm-ent-dist-original} apply. The detailed proof is given in Appendix~\ref{appendix-entdistproof}.

%\textcolor{red}{THIS PROOF WAS ADDED. Should we put it in the appendix instead?}

While the apparatus described in Theorem~\ref{thm-ft-ent-dist-normally} performs fault-tolerant encoding and decoding, it still requires one-way classical communication between two parties, which is not allowed in the communication setup we investigate in the next section. For the purposes of fault-tolerant entanglement-assisted capacity, we therefore combine fault-tolerant implementation of these circuits with fault-tolerant classical communication via the channel $T$ to distill perfect maximal entanglement in the code space.
In this process, a fraction of the available channel copies is used to transmit classical communication. The protocol for this is essentially the same as the protocol from Theorem~\ref{thm-ft-ent-dist-normally}, where the classical communication between sender and receiver is not modelled by transmission over copies of the channel $\id_{cl}$, but instead transmitted by using the coding scheme from \cite[Theorem~V.8]{CMH20} as a subroutine on $\frac{h(4cp)+4cp\log(3)}{C_{\mathcal{F}(p)}(T)}$ copies of the channel $T$. For completeness, this process is sketched in Figure~\ref{fig-ft-ent-dist-via-t}.

\begin{figure}[htpb]
    \centering
    \includegraphics[width=0.5\textwidth]{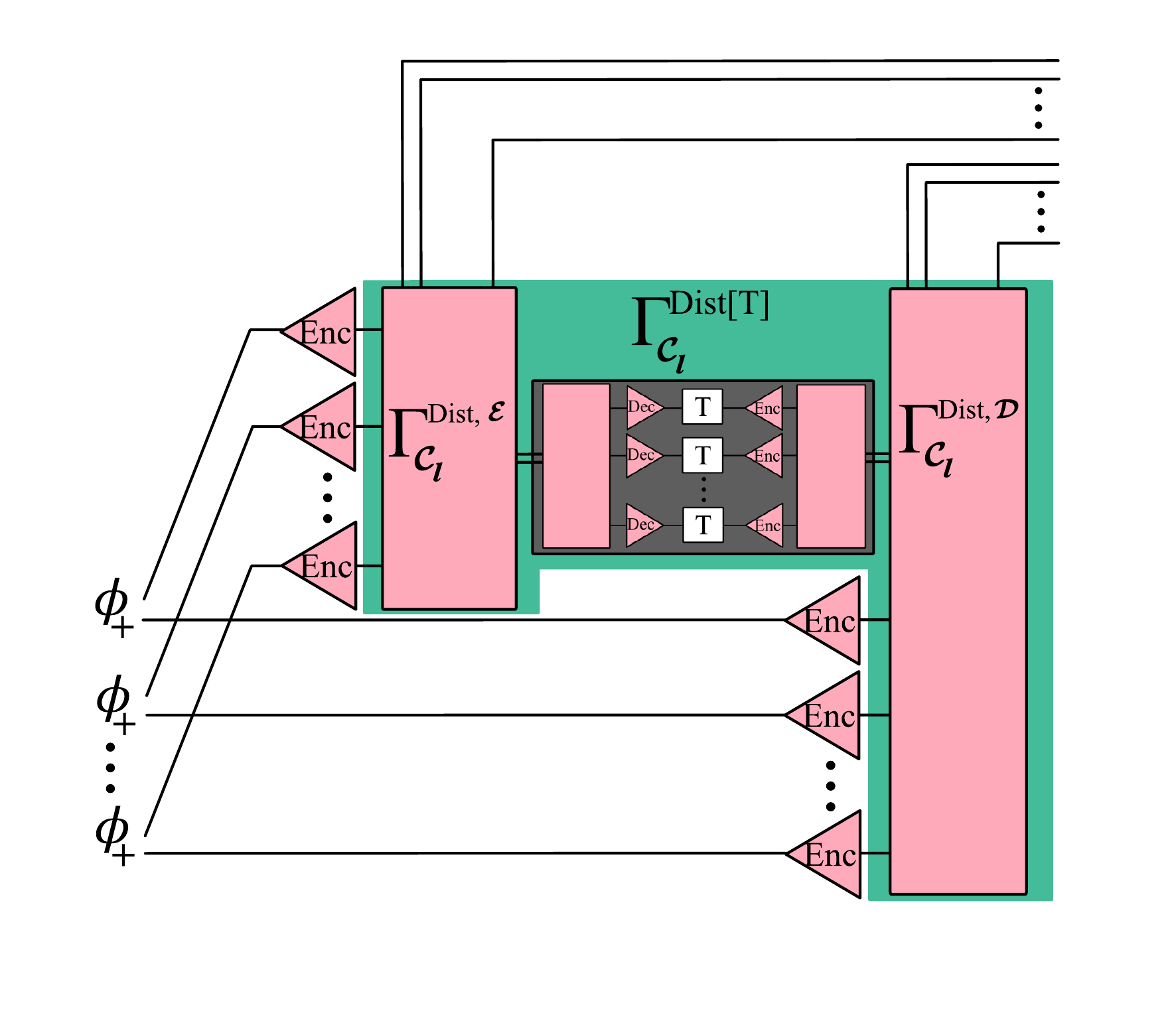}
    \caption{\textbf{Setup for fault-tolerant entanglement distillation with communication via a channel $T$.} The local operations performed in Figure~\ref{fig-ent-dist-normally} are implemented in an error correcting code $\mathcal{C}_l$, and interfaces map the $k$ logical states into effective mixed states in the code-space. The classical communication is performed as a subroutine using the coding scheme of \cite{CMH20}. Since the transmitted information is only classical, the syndrome states of these subroutines will not be correlated, allowing us to treat them as separate. 
    }
    \label{fig-ft-ent-dist-via-t}
\end{figure}

\section{A coding theorem for fault-tolerant entanglement-assisted communication} \label{sec-coding-thm}

\noindent By performing fault-tolerant entanglement distillation as a subroutine and using a subset of the channels to convert entanglement, we can thus obtain the resource entanglement for entanglement-assisted communication in the code space.
Then, the remainder of channel copies and the recovered pure state entanglement can be used for information transfer via a coding scheme for entanglement-assisted communication, contributing to the channel capacity. This subroutine can be analyzed separately from the distillation part, as sketched in Figure~\ref{fig-allparts}. If the combined coding scheme is fault-tolerant, then the information transfer is fault-tolerant.

As outlined in Section~\ref{sec-ft-cap-setup}, the coding scheme we will use after the distillation in the fault-tolerant setting is based on the scheme used for an effective noisier channel in the faultless setting which takes a correlated syndrome state as part of its input. More precisely, we use the coding scheme for the effective channel to prepare the codeword states in the logical subspace. Then, our results on entanglement-assisted capacity under AVP apply in order to obtain bounds on achievable rates in the presence of correlated syndrome states.

%Notably, the upper bound \[C^{ea}(T)\geq C^{ea}_{\mathcal{F}(p)}(T) \] trivially holds for any channel $T$.

Note that we have the following upper bound for fault-tolerantly achievable rates:

\begin{theorem}\label{thm-upper-bound} Let $p\geq0$. For every quantum channel $T:\mathcal{M}_{d_A}\rightarrow \mathcal{M}_{d_B}$, we have 
    \[C^{ea}_{\mathcal{F}(p)}(T) \leq(1-p) C^{ea}(T).\]
\end{theorem}

\begin{IEEEproof}[Proof of Theorem~\ref{thm-upper-bound}]
   Notably, the bound $C^{ea}(T)\geq C^{ea}_{\mathcal{F}(p)}(T)$ trivially holds for any channel $T$. Let $\mathcal{F}'(p)$ be a fault-model where no gate is affected by an error except for the gates applied right after the communication channel $T$. Clearly, we have $C^{ea}_{\mathcal{F}(p)}(T)\leq C^{ea}_{\mathcal{F}'(p)}(T) =C^{ea}(\tilde{T}_p)$ where $\tilde{T}_p=D_p\circ T= (1-p)T+p \mathbbm{1}/d \Tr$ with the depolarizing channel $D_p$ and depolarizing probability $p$. Then, we have $C^{ea}(\tilde{T}_p)\leq (1-p)C^{ea}(T)+p C^{ea} (\mathbbm{1}/d \Tr )=(1-p)C^{ea}(T)$ because of the convexity of mutual information and because the entanglement-assisted capacity of the channel $\mathbbm{1}/d \Tr$ (a replacer channel which always outputs the maximally mixed state) is zero. %$C^{ea}(T)-C^{ea}(\tilde{T}_p)\leq 2p\log(d_B)+2(1+p)h(\frac{p}{1+p})$ by continuity of the quantum mutual information \cite{Shirokov17}.
\end{IEEEproof}

% \begin{IEEEproof}[Proof of Theorem~\ref{thm-upper-bound}]
%    Notably, the bound $C^{ea}(T)\geq C^{ea}_{\mathcal{F}(p)}(T)$ trivially holds for any channel $T$.
%    Let $\mathcal{F}'(p)$ be a fault-model where no gate is affected by an error except for the gates applied right before the communication channel $T$. %Then, the input into $T$ is given by $D_p(\rho)$ 
%    Clearly, we have $C^{ea}_{\mathcal{F}(p)}(T)\leq C^{ea}_{\mathcal{F}'(p)}(T) =C^{ea}(\tilde{T}_p)$ where $\tilde{T}_p=T\circ D_p$ with the depolarizing channel $D_p$ and depolarizing probability $p$. Then, we have $\| T\circ(D_p-\id) \|_{1\rightarrow 1} \leq \| D_p-\id\|_{1\rightarrow 1}\leq p$ by data processing. Then, we have $C^{ea}(T)-C^{ea}(\tilde{T}_p)\leq 2p\log(d_B)+2(1+p)h(\frac{p}{1+p})$ by continuity of the quantum mutual information \cite{Shirokov17}.
% \end{IEEEproof}

Here, we derive a lower bound in the form of a threshold theorem for fault-tolerant entanglement-assisted communication for any quantum channel $T$, where the fault-tolerant entanglement-assisted capacity approaches the usual, faultless case for vanishing gate error probability:

\begin{theorem}[Threshold theorem for fault-tolerant entanglement-assisted communication] \label{thm-main-result-1} 
For every quantum channel $T:\mathcal{M}_{d_A}\rightarrow \mathcal{M}_{d_B}$, and any $\eta>0$, there exists a threshold $p(\eta,T)> 0$ such that, for any $0\leq p\leq p(\eta,T)$, we have
\[C^{ea}_{\mathcal{F}(p)}(T) \geq C^{ea}(T)-\eta\]
\end{theorem}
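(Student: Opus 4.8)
The plan is to assemble the building blocks developed in the preceding sections into a single fault-tolerant coding scheme and then optimize the allocation of channel uses. First I would start from the faultless entanglement-assisted capacity $C^{ea}(T)$ and fix a target rate $R < C^{ea}(T)$. The idea is to build a coding scheme for the fault-tolerant entanglement-assisted setup of Definition~\ref{defn:FTEACS} in three stages: (i) use a sublinear-in-$n$ fraction of the channel copies together with fault-tolerant classical communication from \cite{CMH20} to run the fault-tolerant entanglement distillation protocol of Theorem~\ref{thm-ft-ent-dist-normally}, converting the $\lfloor nR_{ea}\rfloor$ physical maximally entangled pairs (which become the mixed states $\phi_{4cp}$ once pushed through the noisy interfaces) into $\beta(4cp)$ times as many \emph{logical} maximally entangled pairs in the code space, up to a small fidelity error; (ii) approximate an optimal faultless entanglement-assisted coding scheme for $T$ achieving rate $R$ by quantum circuits $\Gamma^{\mathcal{E}},\Gamma^{\mathcal{D}}$, implement them in a sufficiently high concatenation level $\mathcal{C}_l$ of the Steane code with interfaces, and invoke Theorem~\ref{thm-eff-channel} to reduce the fault-affected circuit to a faultless coding scheme for the effective channel $T_{p,N_l}$ acting on a (possibly correlated) syndrome state $\sigma_S$; (iii) observe that this is exactly a coding scheme under arbitrarily varying perturbation, so Theorem~\ref{thm-avp-coding-thm} gives $C^{ea}_{AVP}(p',T)\ge C^{ea}(T)-f(p')$ for the relevant perturbation strength $p' = 2(j_1+j_2)cp$, and the distilled logical pairs supply exactly the kind of (nearly) pure entanglement resource that coding scheme consumes.

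Next I would carefully track the error terms and rate losses. The total error is a sum of: the threshold-theorem term $2p_0(p/p_0)^{2^l}(|\Loc(\Gamma^{\mathcal{E}})|+|\Loc(\Gamma^{\mathcal{D}})|)$ plus the interface term $2p_0|\Loc(\EncI_1)|(p/p_0)^{2^l-1}j_1 n$ from Theorem~\ref{thm-eff-channel}; the distillation error $p_0(p/p_0)^{2^l}|\Loc(\Gamma^{\Dist})| + \sqrt{\epsilon_{dist}(4cp)} + 2/k$ from Theorem~\ref{thm-ft-ent-dist-normally}; the error from the fault-tolerant classical communication subroutine of \cite{CMH20}; and the intrinsic AVP coding error $f(p')$ together with the finite-$n$ packing-lemma error from \cite{Wilde13}. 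The key bookkeeping point, as in \cite{CMH20}, is that the number of locations $|\Loc(\Gamma^{\mathcal{E}})|,|\Loc(\Gamma^{\mathcal{D}})|,|\Loc(\Gamma^{\Dist})|$ grows only polynomially in $n$, so choosing the concatenation level $l = l(n)$ to grow like $\log\log n$ makes the doubly-exponential factor $(p/p_0)^{2^l}$ beat the polynomial growth and drives those terms to zero; the troublesome interface term scales like $n(p/p_0)^{2^l-1}$, which still vanishes for the same choice of $l(n)$ provided $p<p_0$, exactly as in \cite[Theorem~V.8]{CMH20}. Taking $k = k(n)\to\infty$ sublinearly handles the $2/k$ and the $\sqrt{\epsilon_{dist}}$ terms, and the classical-communication subroutine error vanishes by the fault-tolerant classical coding theorem of \cite{CMH20}.

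Then I would account for the rate. The distillation consumes $\tfrac{h(4cp)+4cp\log 3}{C_{\mathcal{F}(p)}(T)}$ times $nR_{ea}$ channel uses (call this fraction $f_1(p)\,R_{ea}$ of the channel uses, vanishing as $p\to 0$ since $h(4cp)+4cp\log 3\to 0$ and $C_{\mathcal{F}(p)}(T)$ stays bounded below near $p=0$ by the fault-tolerant classical coding theorem), leaving $n' = n(1 - f_1(p) R_{ea})$ channel uses and $\beta(4cp)\lfloor nR_{ea}\rfloor$ logical entangled pairs for the main transmission. On those $n'$ uses the AVP coding theorem delivers rate per use at least $C^{ea}(T) - f(2(j_1+j_2)cp)$, so the overall rate is at least $(1-f_1(p)R_{ea})\big(C^{ea}(T)-f(2(j_1+j_2)cp)\big) =: C^{ea}(T) - \epsilon(p)$ with $\epsilon(p)\to 0$ as $p\to 0$; I would also need to check that $\beta(4cp)R_{ea}$ still exceeds the entropy bound $\sup_\varphi H(A)_\varphi$ required to run the entanglement-assisted scheme, which holds for $p$ small since $\beta(4cp)\to 1$. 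Finally, given $\epsilon>0$, one picks $p_{th}(\epsilon,T)$ small enough that $\epsilon(p)<\epsilon$ for all $p\le p_{th}$, and also below $\min\{p_0/2,1/4c\}$ so that Theorem~\ref{thm-eff-channel}, Theorem~\ref{thm-correct-interfaces}, and Theorem~\ref{thm-ft-ent-dist-normally} all apply. I expect the main obstacle to be the simultaneous tuning of the two growing parameters $l(n)$ and $k(n)$ against the channel-use split: one must ensure that the interface error term $\sim j_1 n (p/p_0)^{2^l-1}$ vanishes while $l(n)$ does not grow so fast that the circuit size (hence $|\Loc|$, which itself depends on $l$ through the gadget blowup $7^l$) outpaces it — this is exactly the delicate balance handled in \cite{CMH20}, and the contribution here is checking that the extra distillation and entanglement-transfer layers do not break it, in particular that the syndrome states from the classical-communication subroutines remain uncorrelated with the syndrome state fed into the AVP coding scheme so that the AVP framework genuinely applies.
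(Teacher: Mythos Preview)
Your proposal is correct and follows essentially the same three-stage architecture the paper uses (fault-tolerant distillation via a fraction of the channel copies, reduction to the effective channel via Theorem~\ref{thm-eff-channel}, and then the AVP coding theorem), with the same rate accounting $(1-\alpha(p,T)R_{ea})\,C^{ea}_{AVP}(2(j_1+j_2)cp,T)$; the paper simply packages the explicit lower bound as a separate result (Theorem~\ref{thm-final-coding-thm}) and then obtains Theorem~\ref{thm-main-result-1} in one line by letting $p\to 0$. One small slip to fix: the channel uses spent on distillation and the number $k$ of input pairs are both \emph{linear} in $n$ (namely $\alpha(p,T)R_{ea}\,n$ and $k=\lfloor nR_{ea}\rfloor$), not sublinear---your later rate computation already treats them this way, so just drop the word ``sublinear'' in stages (i) and in the handling of $2/k$.
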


We note that the theorem is threshold-like in that the traditional capacity can be arbitrarily approximated when the gate noise is below a threshold that is derived from a fault-tolerant threshold. In the according formulation of this capacity approximation, however, the threshold $p(\eta,T)$ depends not only on the channel but also on the required approximation $\eta$.

\begin{corollary} \label{thm-main-result-2}
Let $p\geq0$. Then, for every quantum channel $T:\mathcal{M}_{d_A}\rightarrow \mathcal{M}_{d_B}$, we have 
\[\lim_{p\rightarrow 0} C^{ea}_{\mathcal{F}(p)}(T) = C^{ea}(T) \]
\end{corollary}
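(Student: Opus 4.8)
The plan is to assemble the three ingredients developed above---the effective-channel reduction (Theorem~\ref{thm-eff-channel}), the coding theorem under arbitrarily varying perturbation (Theorem~\ref{thm-avp-coding-thm}), and fault-tolerant entanglement distillation (Theorem~\ref{thm-ft-ent-dist-normally})---into a single fault-tolerant coding scheme, and then to bound its rate loss by a continuous function of $p$ vanishing at $p=0$. First I would fix, for $\epsilon>0$, a faultless entanglement-assisted coding scheme achieving a rate within $\epsilon/3$ of $C^{ea}(T)$, approximate its encoder and decoder by quantum circuits $\Enc$ and $\Dec$ to within arbitrarily small $1\!\to\!1$-error (circuits being dense among channels), and implement these in the $l$-th level $\mathcal{C}_l$ of the concatenated $7$-qubit Steane code, inserting $\DecI_l$ before each channel use and $\EncI_l$ after each channel use. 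By Theorem~\ref{thm-eff-channel}, the fault-affected composite circuit equals, up to a $1\!\to\!1$-error of order $p_0(p/p_0)^{2^l}(|\Loc(\Enc)|+|\Loc(\Dec)|)+p_0|\Loc(\EncI_1)|(p/p_0)^{2^l-1}j_1 n$, the faultless scheme $(\Dec,\Enc)$ connected by the effective channel $T_{p,N_l}=(1-p')(T\otimes\Tr_S)+p'N_l$ fed with an arbitrary, possibly correlated, syndrome state, where $p'=2(j_1+j_2)cp$. Hence the residual task is exactly entanglement-assisted communication under AVP of strength $p'$, for which Theorem~\ref{thm-avp-coding-thm} gives $C^{ea}_{AVP}(p',T)\geq C^{ea}(T)-f(p')$ with $f(p')=\mathcal{O}(p'\log p')$.

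Next I would supply the entanglement that the AVP scheme consumes. The physical resource is $\lfloor nR_{ea}\rfloor$ pairs $\phi_+$; each half crosses an interface that is incorrect with probability at most $2cp$, so by Lemma~\ref{thm-effective-encoder} each pair effectively becomes the noisy state $(N_{enc,p,l}\otimes N_{enc,p,l})(\phi_+)=(1-4cp)\phi_+ + 4cp\,\tilde{\sigma}_l$. I would then run the fault-tolerant distillation of Theorem~\ref{thm-ft-ent-dist-normally} in the variant of Figure~\ref{fig-ft-ent-dist-via-t}, where the one-way classical side information is itself transmitted fault-tolerantly over a fraction $\frac{h(4cp)+4cp\log 3}{C_{\mathcal{F}(p)}(T)}$ of the channel uses via the classical coding scheme of \cite[Theorem~V.8]{CMH20} (valid for small $p$, where $C_{\mathcal{F}(p)}(T)>0$). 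This turns $k=\lfloor nR_{ea}\rfloor$ noisy pairs into $\beta(4cp)k$ perfect maximally entangled pairs \emph{in the code space}, up to trace-distance error $p_0(p/p_0)^{2^l}|\Loc(\Gamma^{\Dist})|+\sqrt{\epsilon_{dist}(4cp)}+2/k$. Since the distilled pairs already live in the code space, they feed directly into the fault-tolerant encoder and decoder from the first step; I would pick $R_{ea}$ just above $\sup_\varphi H(A)_\varphi$ (and $k\sim n$) so that, after the $\beta(4cp)$ loss and the reduction in the number of usable channels, at least $\sup_\varphi H(A)_\varphi$ distilled pairs per remaining channel use remain, as required to run the AVP scheme.

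Combining the blocks, for each $n$ this produces an $(n,m(n),\epsilon(n),R_{ea})$-coding scheme with $m(n)/n\to\big(1-\tfrac{h(4cp)+4cp\log 3}{C_{\mathcal{F}(p)}(T)}\big)\big(C^{ea}(T)-f(p')\big)=:C^{ea}(T)-\epsilon(p)$ and with total error $\epsilon(n)$ bounded by the sum of the circuit-approximation error, the two Steane-code threshold terms, the distillation error $\sqrt{\epsilon_{dist}(4cp)}+2/k$, the AVP coding error, and the fault-tolerant classical coding error. Choosing a slowly growing concatenation level $l=l(n)\to\infty$ makes the threshold terms $p_0(p/p_0)^{2^{l(n)}}|\Loc(\cdot)|$ vanish even though the circuit sizes grow polynomially with $n$, while the other error terms vanish as $n\to\infty$ (using $k\sim n$); hence $\epsilon(n)\to 0$ and $C^{ea}_{\mathcal{F}(p)}(T)\geq C^{ea}(T)-\epsilon(p)$. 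Finally $\epsilon(p)\to 0$ as $p\to 0$, since $4cp\to 0$ forces $h(4cp)+4cp\log 3\to 0$ and $\beta(4cp)\to 1$ while $f(p')\to 0$ because $p'\to 0$; picking $p_{th}(\epsilon,T)$ so that $\epsilon(p)\leq\epsilon$ for $p\leq p_{th}$ proves Theorem~\ref{thm-main-result-1}, and letting $\epsilon\to 0$ yields Corollary~\ref{thm-main-result-2}.

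The main obstacle I anticipate is the simultaneous bookkeeping in the last step: one has to let $l=l(n)$ grow just fast enough that the doubly-exponential suppression $(p/p_0)^{2^{l(n)}}$ beats the growing circuit sizes $|\Loc(\Enc)|$, $|\Loc(\Dec)|$, $|\Loc(\Gamma^{\Dist})|$; and one must verify that the two subroutines can be run on disjoint blocks of channel uses---entanglement-assisted communication over $T_{p,N_l}$ with an arbitrarily correlated syndrome state on one block, fault-tolerant classical communication carrying the distillation side information on the other---so that the adversarial syndrome state, while correlated within the AVP block, does not couple the two analyses (the point flagged in the caption of Figure~\ref{fig-ft-ent-dist-via-t}). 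One also has to be careful that the assistance rate $R_{ea}$ remains admissible ($R_{ea}\geq\sup_\varphi H(A)_\varphi$) after the losses. Once these structural points are settled, the continuity inputs ($f$ from Theorem~\ref{thm-avp-coding-thm}, and the behaviour of $\beta$ and $\epsilon_{dist}$) and the final limit are routine.
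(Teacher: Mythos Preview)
Your proposal is correct and follows essentially the same route as the paper: you assemble the fault-tolerant coding scheme from the effective-channel reduction (Theorem~\ref{thm-eff-channel}), fault-tolerant entanglement distillation with classical side information sent over a sublinear fraction of the channel uses (Theorem~\ref{thm-ft-ent-dist-normally} together with \cite[Theorem~V.8]{CMH20}), and the AVP coding theorem (Theorem~\ref{thm-avp-coding-thm}), then choose $l=l(n)$ so that the threshold terms vanish, arriving at $C^{ea}_{\mathcal{F}(p)}(T)\geq C^{ea}(T)-\epsilon(p)$ with $\epsilon(p)\to 0$. The only cosmetic difference is that the paper additionally inserts an entanglement-dilution step so that the distilled resource matches the specific pure state $\varphi$ used by the AVP scheme, and it fixes $R_{ea}=\log(2)/\beta(4cp)$ explicitly rather than ``just above $\sup_\varphi H(A)_\varphi$''; neither point affects the structure of your argument.
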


This is a consequence of the following result (noting that $C(T)=0$ implies $C^{ea}(T)=0$):

\begin{theorem}[Lower bound on the fault-tolerant entanglement-assisted capacity] \label{thm-final-coding-thm}
For $0\leq p \leq 1$, let $\mathcal{F}(p)$ denote the
i.i.d. Pauli noise model and let $0\leq p_0 \leq 1$ denote the threshold of the concatenated 7-qubit Steane code.
For any quantum channel $T:\mathcal{M}_2^{\otimes j_1} \rightarrow \mathcal{M}_2^{\otimes j_2}$ with classical capacity $C(T)>0$ and for any $0\leq p \leq \min\{p_0/2,1/(2c(j_1+j_2)\}$, we have a fault-tolerant entanglement-assisted capacity with 
\[C^{ea}_{\mathcal{F}(p)}(T) \geq  C^{ea}(T)- 4f_1(p)\frac{C^{ea}(T)}{C(T)} -f_2(p) \]
where 
\begin{equation*}\label{f_1}
    f_1(p)= \frac{(h(4cp)+4cp\log(3))j_2 }{1-h(4cp)-4cp\log(3)},
\end{equation*}
and

\begin{IEEEeqnarray*}{lCl}
f_2(p)
& = &2\sqrt{2{j_2p}} \big(2^{j_1+j_2} (j_1+j_2) +1\big) \big|2\log( \frac{2(j_1+j_2)cp }{2^{j_1j_2}})\big|  \\&&  +2h\big(\sqrt{2{j_2p}} 2^{j_1+j_2} |2\log( \frac{2(j_1+j_2)cp }{2^{j_1j_2}})| \big) \\&&
 +(1+4(j_1+j_2)cp)h\big(\frac{4(j_1+j_2)cp}{1+4(j_1+j_2)cp}\big) \\&&   +10 (j_1+j_2)cpj_2 ,
\end{IEEEeqnarray*}
and with $c$ being the constant from Theorem~\ref{thm-correct-interfaces}.
\end{theorem}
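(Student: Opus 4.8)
The plan is to build the claimed fault-tolerant coding scheme by composing three results already established — the AVP coding theorem (Theorem~\ref{thm-avp-coding-thm}), the effective-channel reduction (Theorem~\ref{thm-eff-channel}, together with Lemma~\ref{thm-effective-encoder}), and fault-tolerant entanglement distillation (Theorem~\ref{thm-ft-ent-dist-normally}) — and then to bound how the error terms and rate losses of the three pieces accumulate. Fix the channel $T:\mathcal{M}_2^{\otimes j_1}\rightarrow\mathcal{M}_2^{\otimes j_2}$ and a concatenation level $l$ (to be sent to infinity together with the blocklength $n$). I would start from an encoder/decoder pair that is near-optimal for entanglement-assisted communication over $T$ under arbitrarily varying perturbation of strength $p'=2(j_1+j_2)cp$: by Theorem~\ref{thm-avp-coding-thm}, applied with $d_A=2^{j_1}$ and $d_B=2^{j_2}$, such a pair exists for every rate strictly below $C^{ea}(T)-f(p')$, it uses a linear-in-$n$ supply of entangled pairs and only a sublinear amount of classical communication, and — crucially — it tolerates an arbitrary, possibly correlated syndrome state fed into the perturbed channel. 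Evaluating the function $f$ of Theorem~\ref{thm-avp-coding-thm} at perturbation strength $2(j_1+j_2)cp$ and at these dimensions produces, after simplification, the term $f_2(p)$, so this piece is exactly responsible for the $-f_2(p)$ loss.

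The second ingredient deals with the entanglement. In the fault-tolerant setup the $\phi_+$ pairs supplied to the scheme must be pushed through two encoding interfaces (one half each) before they live in the code space, and by Lemma~\ref{thm-effective-encoder} each pair is thereby effectively acted on by $N_{enc,p,l}\otimes N_{enc,p,l}$; after the depolarizing twirl opening the distillation protocol each becomes the Bell-diagonal state $\phi_{4cp}$, so Theorem~\ref{thm-ft-ent-dist-normally} turns $k$ of them into $\beta(4cp)k$ genuinely maximally entangled logical pairs in the code space, up to a trace-distance error that vanishes once $l$ and $k$ grow with $l$ large relative to $k$. I would split the $n$ available copies of $T$ into $n_1$ copies reserved for classical communication and $n_2=n-n_1$ copies for the entanglement-assisted transmission, and run the fault-tolerant classical coding scheme of \cite{CMH20} on the $n_1$ copies to carry the one-way classical communication that distillation requires (about $1-\beta(4cp)$ bits per noisy pair), together with the sublinear classical communication needed for the final entanglement-assisted scheme and for any entanglement dilution. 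Matching $\beta(4cp)k$ to the number of logical ebits consumed by the AVP scheme on $n_2$ copies fixes $k$ and hence the ratio $n_1/n$; multiplying this ratio by the per-use rate $C^{ea}(T)$ and bounding the fault-tolerant classical capacity $C_{\mathcal{F}(p)}(T)$ below in terms of $C(T)$ yields the remaining loss term $4f_1(p)\,C^{ea}(T)/C(T)$, with the constant absorbing the crude capacity estimate and the bound $\tfrac{1}{1+x}\ge 1-x$.

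Finally I would wire everything together: the fault-tolerant implementations of the circuit approximations $\Gamma^{\mathcal{E}},\Gamma^{\mathcal{D}}$ of the AVP encoder and decoder, connected to $T^{\otimes n_2}$ through the decoding and encoding interfaces and fed with the distilled logical entanglement, alongside the distillation subroutine on the $n_1$ copies. Theorem~\ref{thm-eff-channel} replaces the fault-affected wiring by a faultless wiring through $T_{p',N_l}^{\otimes n_2}(\cdot\otimes\sigma_S)$ with an arbitrary syndrome state $\sigma_S$ — precisely the input the AVP guarantee tolerates — and Theorem~\ref{thm-ft-ent-dist-normally} replaces the logical entanglement by true $\phi_+$ pairs (modulo a state confined to decoupled syndrome registers). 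A triangle inequality over these two substitutions, followed by the Fuchs--van de Graaf inequality to pass between the $1\rightarrow1$ distance and the fidelity of Definition~\ref{defn:FTEACS}, shows that the composite scheme is an $(n,m(n),\epsilon(n),R_{ea})$-coding scheme with $\epsilon(n)\rightarrow 0$ once one chooses $l=l(n)\rightarrow\infty$ so that every threshold-type contribution $p_0(p/p_0)^{2^l}|\Loc(\cdot)|$, every $\sqrt{\epsilon_{dist}(4cp)}$ and every $O(1/k)$ term goes to zero while the rates $n_2/n$ and the AVP rate converge. The main obstacle I anticipate is exactly this final bookkeeping: one must choose the partition $(n_1,n_2)$, the number $k$ of distilled pairs, and the concatenation level $l$ as mutually consistent functions of $n$ so that all error contributions vanish simultaneously while the achievable rate converges from below to $C^{ea}(T)-4f_1(p)C^{ea}(T)/C(T)-f_2(p)$; one also has to check that the syndrome states produced by the distillation's classical-communication subroutine genuinely decouple (they do, since that subroutine transmits only classical data on a disjoint block of channel uses) so that they cannot conspire with the correlated syndrome state affecting the communication copies, which is then absorbed wholesale by the ``arbitrarily varying'' quantifier of Theorem~\ref{thm-avp-coding-thm}.
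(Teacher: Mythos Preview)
Your proposal is correct and follows essentially the same approach as the paper: both build the fault-tolerant scheme from the same three subroutines (AVP coding at perturbation strength $2(j_1+j_2)cp$, the effective-channel reduction of Theorem~\ref{thm-eff-channel}, and fault-tolerant distillation via Theorem~\ref{thm-ft-ent-dist-normally} with classical communication carried by a reserved block of channel copies using the fault-tolerant classical scheme of \cite{CMH20}), and both obtain $f_2$ by evaluating the AVP loss function and $f_1$ from the fraction of copies diverted to distillation together with the lower bound on $C_{\mathcal{F}(p)}(T)$. The paper's proof differs only in presentation---it makes the Solovay--Kitaev circuit approximation explicit, fixes the concatenation level by the concrete inequality~\eqref{eq-level-choice}, and writes the rate loss as a multiplicative factor $(1-\alpha(p,T)R_{ea})$ on $C^{ea}_{AVP}$---but the logical architecture and the sequence of substitutions are the same as what you outline.
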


\begin{IEEEproof}[Proof of Theorem~\ref{thm-final-coding-thm}]
In this proof, we construct a fault-tolerant coding scheme for entanglement-assisted communication affected by the Pauli i.i.d. noise model $\mathcal{F}(p)$ by proposing an encoder circuit $\Enc$ and a decoder circuit $\Dec$ which are implemented in the concatenated 7-qubit Steane code ${\mathcal{C}_l}$ with threshold $p_0$ and for some level $l$. With a rate of entanglement assistance $R_{ea}$, we will obtain a bound on rates $R$ that fulfill
\begin{IEEEeqnarray}{ll}
    \IEEEeqnarraymulticol{2}{l}{\|\id_{cl}^{\otimes nR}-   \big[ \Gamma_{\mathcal{C}_l}^{\mathcal{D}} \circ  ((\EncI_l \circ T \circ \DecI_l)^{\otimes n}\otimes \id_2^{*\otimes nR_{ea}}) \circ \Compactcdots} \IEEEnonumber\\ \Compactcdots \circ (\Gamma_{\mathcal{C}_l}^{\mathcal{E}} \otimes \id_2^{*\otimes nR_{ea}} ) \circ \Compactcdots &  \label{this-has-to-go-to-zero} \\ \Compactcdots \circ  (\id_{cl}^{\otimes nR} \otimes (\EncI^{\otimes 2}(\phi_+))^{\otimes nR_{ea}}) \big]_{\mathcal{F}(p)}\|_{1 \rightarrow 1} & \overset{n\rightarrow \infty}{\rightarrow}  0 ,\IEEEnonumber
\end{IEEEeqnarray}
showing which rates $R$ are achievable.

Our proof will progress according the following strategy:

\begin{enumerate}
\item Construct the coding scheme out of the relevant subcircuits for distillation and coding under arbitrarily varying perturbations, as illustrated in Figure~\ref{fig-allparts}.
\item Choose the concatenation level $l$ corresponding to the number of locations in the entire coding scheme, including all subcircuits, in Eq.~\eqref{eq-level-choice}.
\item Bound expression Eq.~\eqref{this-has-to-go-to-zero} in terms of the effective channel and the distilled state using Theorem~\ref{thm-eff-channel} and Theorem~\ref{thm-ft-ent-dist-normally}.
\item Invoke our results on entanglement-assisted capacity under arbitrarily varying perturbations of strength $2(j_1+j_2)cp$ from Theorem~\ref{thm-avp-coding-thm} to obtain a bound on the achievable rates.
\end{enumerate}

The coding scheme for fault-tolerant communication is based on the coding scheme for communication  at a rate $R_{AVP}$ under arbitrarily varying perturbation.
For each $n\in\mathbbm{N}$, using the Solovay-Kitaev theorem \cite{NC00}, we may choose specific quantum circuits $\Gamma^{\AVP,\mathcal{E}}$ and $\Gamma^{\AVP,\mathcal{D}}$ implementing the encoder $\mathcal{E}$ and decoder $\mathcal{D}$ used for communication under AVP, such that
\begin{equation}\label{eq-Enc}
\|\Gamma^{\AVP,\mathcal{E}} - \mathcal{E}\|_{1 \rightarrow 1} \leq \frac{1}{ n} \end{equation}
\begin{equation}\label{eq-Dec} \|\Gamma^{\AVP,\mathcal{D}}- \mathcal{D}\|_{1 \rightarrow 1} \leq \frac{1}{ n}\end{equation}

In addition, let $\Gamma^{\Dist[T]}$ be the circuit performing entanglement distillation, which is based on the circuit of Corollary~\ref{thm-ent-dist-and-dil}. This circuit is based on the circuit from Theorem~\ref{thm-ft-ent-dist-normally} with classical communication via a subset of the copies of the channel $T$, and an additional step of entanglement dilution to distill the bipartite pure entangled state $\varphi$ using the protocol from \cite{LP99}, which requires additional one-way classical communication at an asymptotically negligible rate so long as the state is pure (see also \cite[Footnote~1]{BSST02}). Thereby, similar to the sketch in Figure~\ref{fig-ft-ent-dist-via-t}, the fault-tolerant implementation of this distillation circuit distills $\varphi$ in the code space, whereby it is made available for the fault-tolerantly implemented communication setup. This is formalized in Appendix~\ref{appendix-entdistproof} and Corollary~\ref{thm-ent-dist-and-dil}.

Let $\Gamma^{\AVP,\mathcal{E}}_{\mathcal{C}_l} $, $\Gamma^{\AVP,\mathcal{D}}_{\mathcal{C}_l} $ and $\Delta_{\mathcal{C}_l}^{[T]}$ denote the implementations of these circuits in the 7-qubit Steane code with concatenation level $l$. 
The circuits $ \Gamma_{\mathcal{C}_l}^{\mathcal{E}}$ and $ \Gamma_{\mathcal{C}_l}^{\mathcal{D}}$ which implement our proposed fault-tolerant coding scheme are then constructed by the local parts of fault-tolerant entanglement distillation, followed by the fault-tolerant implementation of the coding scheme for arbitrarily varying perturbations, as sketched in Figure~\ref{fig-allparts}.

In total, the maximally entangled resource states are transformed by the noisy interface into effective noisy states in the code space. Thereafter, entanglement distillation is performed to restore pure state entanglement in the code space, using up a subset of the copies of $T$ for classical communication. Subsequently, the remaining copies of $T$ are used for entanglement-assisted communication.

By our construction, we obtain the following expression for the coding error Eq.~\eqref{this-has-to-go-to-zero} in terms of the subroutines sketched in Figure~\ref{fig-allparts}, corresponding to the entanglement distillation (which uses $\alpha_p n$ channel copies and produces $\beta_p$ entangled states) and the coding scheme under AVP via the remaining $(1-\alpha_p)n$ channel copies:

\begin{IEEEeqnarray}{ll}
\IEEEeqnarraymulticol{2}{l}{ \| \id_{cl}^{\otimes nR } -  \big[ \Gamma_{\mathcal{C}_l}^{\mathcal{D}} \circ \Compactcdots}   \label{eq-the-very-long-calculation1}\\& \Compactcdots \circ  \Big( \big( (\EncI_l \circ T \circ \DecI_l)^{\otimes n} \circ \Gamma_{\mathcal{C}_l}^{\mathcal{E}} \big)\otimes \id_2^{*\otimes nR_{ea}} \Big)\circ \Compactcdots \IEEEnonumber\\& \Compactcdots \circ  \big(\id_{cl}^{\otimes nR} \otimes ( \EncI_l^{\otimes 2}(\phi_+))^{\otimes nR_{ea}} \big) \big]_{\mathcal{F}(p)}\|_{1 \rightarrow 1} 
\IEEEnonumber\\
   =& \| \id_{cl}^{\otimes nR }-     \big[ \Gamma^{\AVP,\mathcal{D}}_{\mathcal{C}_l}  \circ  \Big( \big( (\EncI_l \circ T \circ \DecI_l)^{\otimes (1-\alpha_p)n} \circ \Compactcdots \IEEEnonumber\\& \Compactcdots \circ \Gamma^{\AVP,\mathcal{E}}_{\mathcal{C}_l} \big)\otimes \id_2^{*\otimes \beta_{p}n} \Big) \circ \Compactcdots \IEEEnonumber\\& \Compactcdots \circ \big(\id_{cl}^{\otimes nR} \otimes \Gamma^{\Dist[T]}_{\mathcal{C}_l}  ( \EncI_l^{\otimes 2}(\phi_+))^{\otimes nR_{ea}} \big) \big]_{\mathcal{F}(p)}\|_{1 \rightarrow 1} .\IEEEnonumber
\end{IEEEeqnarray} 
Here, we use \[\alpha_{p}=\frac{h(4cp)+4cp\log(3)}{C_{\mathcal{F}(p)}(T)} R_{ea}\] in order to distribute the $n$ total channel copies such that $\alpha_p n$ copies of $T$ are used in the distillation subroutine, and $(1-\alpha_p)n$ copies of $T$ are used for the communication subroutine. We furthermore use \[\beta_{p}=\frac{1-h(4cp)-4cp\log(3)}{H(A)_{\varphi}}R_{ea} \] 
for the number of maximally entangled states that the distillation produces, in reference to the notation in Theorem~\ref{thm-ft-ent-dist-normally} and \ref{thm-ent-dist-and-dil}.

\begin{figure}[htbp] 
  \centering
       \includegraphics[width=0.5\textwidth]{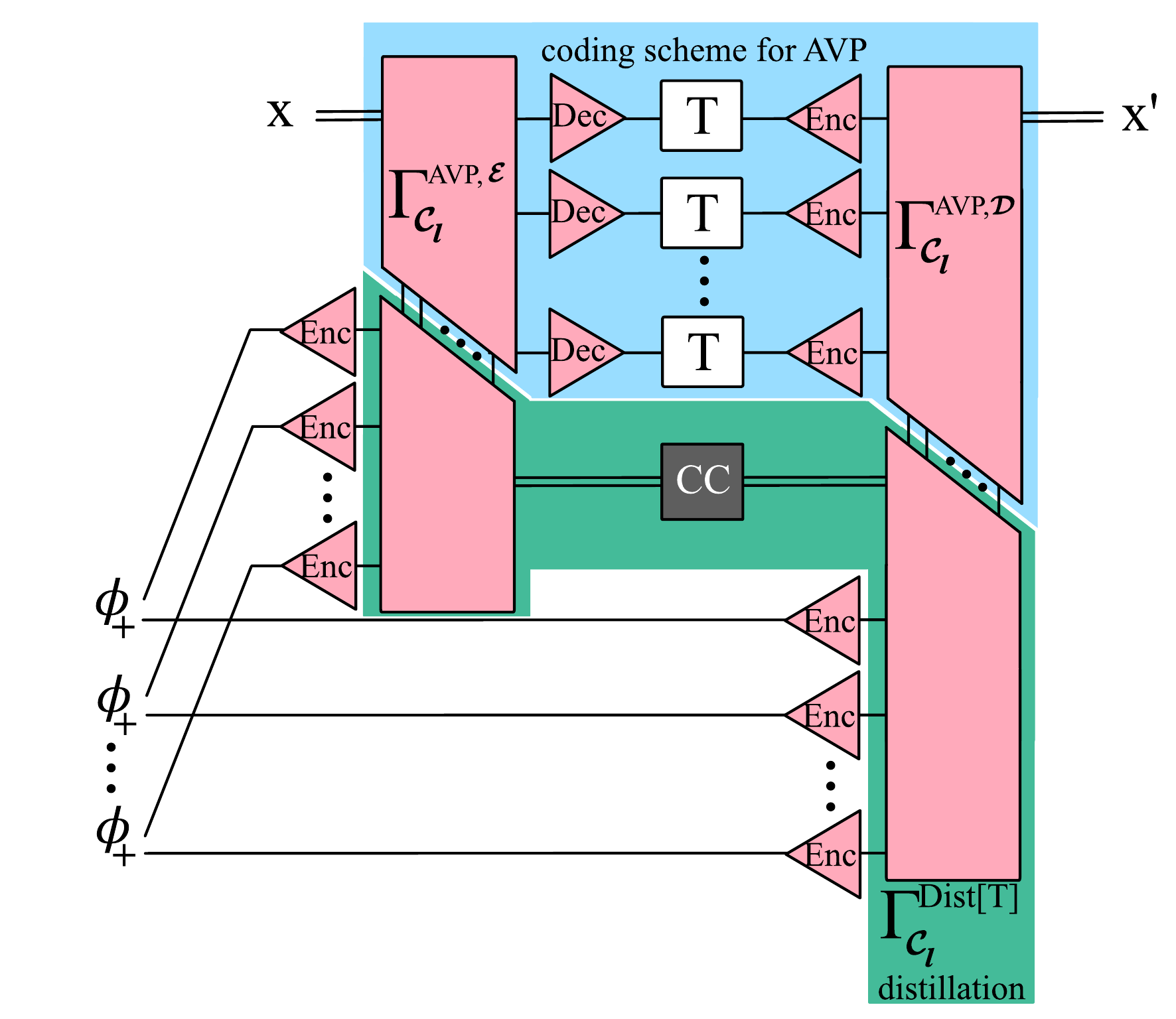}
\caption{\textbf{An illustration of the building blocks in our analysis.} The circuits for encoding and decoding in our fault-tolerant entanglement-assisted communication setup, as illustrated in Figure~\ref{fig-ft-ea-cap-setup}, are constructed out of 3 individual subroutines. Firstly, one subroutine (indicated in green) consists of the local operations performed during entanglement distillation. This part of the analysis will be based on our results in Theorem~\ref{thm-ft-ent-dist-normally}. The classical communication between the local distillation parts makes up another subroutine (indicated by a black box), and is performed using the coding scheme proposed in \cite{CMH20}, as sketched in Figure~\ref{fig-ft-ent-dist-via-t}.
The final subroutine (indicated in blue) is given by the fault-tolerant implementation of the coding scheme for 
entanglement-assisted communication under AVP, as described in Theorem~\ref{thm-avp-coding-thm}. Note that the distillation subroutine and the classical communication subroutine are only connected by classical information, while the coding scheme and the distillation subroutine are connected by quantum states in the code. In total, these subroutines and their analysis are combined in Eq.~\eqref{eq-the-very-long-calculation3}.}
\label{fig-allparts}
\end{figure}
For any sequence of circuits, we can choose the Steane code concatenation level $l=l_{ n}$ high enough such that the implementations above fulfill
\begin{align}\label{eq-level-choice}\begin{split}
    \left(\frac{p}{p_0}\right)^{2^{l_{n}-1}} \big(|\Loc(\Gamma^{\AVP,\mathcal{E}})|+|\Loc(\Gamma^{\AVP,\mathcal{D}})|&\\+j_1  n + |\Loc(\Gamma^{\Dist})|\big) & \leq \frac{1}{n}. 
\end{split}
\end{align}
We emphasize that this choice of $l$ does not impose a restriction on the circuits we consider; rather, the sizes of the circuits for a given channel limit how low $l$ can be chosen to be.

Using Theorem~\ref{thm-eff-channel}, Eq.~\eqref{eq-the-very-long-calculation1} can be bounded in terms of the effective channel $T_{p,N_l}=(1-2(j_1+j_2) cp) (T\otimes \Tr_S) + 2(j_1+j_2)cpN_l$ for some quantum channel $N_l$, and thereby connected to our results on capacity under AVP for perturbation probability $2(j_1+j_2) cp$:
\begin{IEEEeqnarray*}{Cl}
 \IEEEeqnarraymulticol{2}{l}{
    \| \id_{cl}^{\otimes nR }-     \big[ \Gamma^{\AVP,\mathcal{D}}_{\mathcal{C}_l}  
\circ  \Big( \big( (\EncI_l \circ T \circ \DecI_l)^{\otimes (1-\alpha_p)n} \circ\Compactcdots }\IEEEnonumber\\&\Compactcdots \circ  \Gamma^{\AVP,\mathcal{E}}_{\mathcal{C}_l} \big)\otimes \id_2^{*\otimes \beta_{p}n} \Big)  
\circ \Compactcdots \IEEEnonumber\\& \Compactcdots \circ \big(\id_{cl}^{\otimes nR} \otimes \Gamma^{\Dist[T]}_{\mathcal{C}_l}  ( \EncI_l^{\otimes 2}(\phi_+))^{\otimes nR_{ea}} \big) \big]_{\mathcal{F}(p)}\|_{1 \rightarrow 1} 
\IEEEnonumber\\
 \leq&\| \id_{cl}^{\otimes nR } -   (\Gamma^{\AVP,\mathcal{D}}\otimes \Tr_S)  \circ  \Big( T_{p,N_l}^{\otimes (1-\alpha_p)n} \circ\Compactcdots \\& \Compactcdots \circ  (\Gamma^{\AVP,\mathcal{E}} \otimes S_S )\big)\otimes \id_2^{*\otimes \beta_{p}n} \Big) \circ \Compactcdots \IEEEnonumber\\& \Compactcdots \circ  \big(\id_{cl}^{\otimes nR} \otimes (\DecI_l^*)^{\otimes 2\beta_{p}n} \circ\Compactcdots \IEEEnonumber\\& \Compactcdots \circ  \big[ \Gamma^{\Dist[T]}_{\mathcal{C}_l} ( \EncI_l^{\otimes 2}(\phi_+))^{\otimes nR_{ea}}\big]_{\mathcal{F}(p)}  \big) \|_{1 \rightarrow 1}  \IEEEnonumber\\ & +
2p_0 \left(\frac{p}{p_0}\right)^{2^{l_{n}-1}} (|\Loc(\Gamma^{\AVP,\mathcal{E}})|+|\Loc(\Gamma^{\AVP,\mathcal{D}})|+j_1  n) \IEEEnonumber\label{eq-the-very-long-calculation2}
\end{IEEEeqnarray*}
Note that the distillation part of the circuit $\Gamma^{\Dist[T]}_{\mathcal{C}_l}$ ends in an error correction gadget in the lines where there is quantum output. This error correction gadget features in the effective channel theorem in Lemma~\ref{thm-eff-channel}, where it is used for the transformation rules linking the fault-affected circuit with a noiseless version. While this error correction gadget is important for the transformation rules and fault-analysis, it remains unchanged by the process, which is why it appears in both expressions in Lemma~\ref{thm-eff-channel}. Then, it can be recombined with the rest of the distillation circuit in order to recover  $\Gamma^{\Dist[T]}_{\mathcal{C}_l}$, which is why the error correction gadget does not explictly appear in the inequality above. Note also that $\Tr_S(\sigma_S)=1$ for any syndrome state.

Then, Theorem~\ref{thm-ft-ent-dist-normally} is employed to perform entanglement distillation of the bipartite entangled states $\varphi$ in the code space, leading to the following transformation:
\begin{IEEEeqnarray}{Cl}
 \IEEEeqnarraymulticol{2}{l}{ \| \id_{cl}^{\otimes nR } -   (\Gamma^{\AVP,\mathcal{D}}\otimes \Tr_S)  \circ \Compactcdots}\IEEEnonumber\\&\Compactcdots \circ   \Big( T_{p,N_l}^{\otimes (1-\alpha_p)n} \circ (\Gamma^{\AVP,\mathcal{E}} \otimes S_S )\big)\otimes \id_2^{*\otimes \beta_{p}n} \Big) \circ \Compactcdots \IEEEnonumber\\&\Compactcdots \circ  \big(\id_{cl}^{\otimes nR} \otimes (\DecI_l^*)^{\otimes 2\beta_{p}n} \circ \Compactcdots \IEEEnonumber\\&\Compactcdots \circ   \big[ \Gamma^{\Dist[T]}_{\mathcal{C}_l} ( \EncI_l^{\otimes 2}(\phi_+))^{\otimes nR_{ea}}\big]_{\mathcal{F}(p)}  \big) \|_{1 \rightarrow 1}  \IEEEnonumber\\&+
2p_0 \left(\frac{p}{p_0}\right)^{2^{l_{n}-1}} (|\Loc(\Gamma^{\AVP,\mathcal{E}})|+|\Loc(\Gamma^{\AVP,\mathcal{D}})|+j_1  n) \IEEEnonumber\\
 \leq &\| \id_{cl}^{\otimes nR } -   (\Gamma^{\AVP,\mathcal{D}}\otimes \Tr_S) \circ  \Big( T_{p,N_l}^{\otimes (1-\alpha_p)n} \circ \Compactcdots \IEEEnonumber\\& \Compactcdots \circ   (\Gamma^{\AVP,\mathcal{E}} \otimes S_S )\big)\otimes \id_2^{*\otimes\beta_{p}n } \Big)\circ \Compactcdots \IEEEnonumber\\&\Compactcdots \circ (\id_{cl}^{\otimes nR} \otimes \varphi^{\otimes \beta_{p}n} \otimes \sigma_S)  \|_{1 \rightarrow 1}  \IEEEnonumber\\& +
2p_0 (\frac{p}{p_0})^{2^{l_{n}-1}} (|\Loc(\Gamma^{\AVP,\mathcal{E}})|+|\Loc(\Gamma^{\AVP,\mathcal{D}})|+j_1  n)  \IEEEnonumber\\&+2p_0 (\frac{p}{p_0})^{2^{l_{n}-1}} |\Loc(\Gamma^{\Dist})|)+ \frac{2}{nR_{ea}}+\sqrt{\epsilon'_{dist}(4cp)} 
\IEEEnonumber\\
 \leq &\| \id_{cl}^{\otimes nR } -    \Gamma^{\AVP,\mathcal{D}}  \circ  \Big( T_{p,N_l}^{\otimes (1-\alpha_p)n} \circ \Compactcdots \IEEEnonumber\\&\Compactcdots \circ   (\Gamma^{\AVP,\mathcal{E}} \otimes S_S(\sigma_S) )\big) \otimes \id_2^{*\otimes \beta_{p}n} \Big) \circ \Compactcdots \IEEEnonumber\\&\Compactcdots \circ  (\id_{cl}^{\otimes nR} \otimes \varphi^{\otimes \beta_{p}n} ) \|_{1 \rightarrow 1}  \IEEEnonumber\\& +
\frac{1}{n}+\frac{2}{nR_{ea}}+\sqrt{\epsilon'_{dist}(4cp)}.     \label{eq-the-very-long-calculation3}
\end{IEEEeqnarray}
For the first inequality, we used Corollary~\ref{thm-ent-dist-and-dil} to perform entanglement distillation on the $k=nR_{ea}$ entangled states $\big[ \EncI_l^{\otimes 2}\big]_{\mathcal{F}(p)} (\phi_+)
$ that have been affected by the noisy interface, obtaining the bipartite pure entangled state $\varphi$ in the code space. In total, this distillation process uses up $\alpha_{p}n$ copies of $T$ in the process to perform classical communication and produces $\beta_{p}n$ copies of $\varphi$ in the code space, where the explicit expressions for $\alpha_{p}$ and $\beta_p$ can be found to originate from Corollary~\ref{thm-ent-dist-and-dil}.

The second inequality is a consequence of our choice of concatenation level in Eq.~\eqref{eq-level-choice}.

We now use Eq.~\eqref{eq-Enc} and Eq.~\eqref{eq-Dec} to relate the circuits for the coding scheme in Eq.~\eqref{eq-the-very-long-calculation3} to the ideal operations:

\begin{IEEEeqnarray*}{Cl}
 \IEEEeqnarraymulticol{2}{l}{ \| \id_{cl}^{\otimes nR } -    \Gamma^{\AVP,\mathcal{D}}  \circ  \Big( T_{p,N_l}^{\otimes (1-\alpha_p)n} \circ\Compactcdots  }  \\& \Compactcdots \circ  (\Gamma^{\AVP,\mathcal{E}} \otimes S_S(\sigma_S) )\big)\otimes \id_2^{*\otimes \beta_{p}n} \Big)  \circ\Compactcdots \\& \Compactcdots \circ (\id_{cl}^{\otimes nR} \otimes \varphi^{\otimes \beta_{p}n} ) \|_{1 \rightarrow 1} \\& +
\frac{1}{n}+\frac{2}{nR_{ea}}+\sqrt{\epsilon'_{dist}(4cp)}\\
\leq & \| \id_{cl}^{\otimes nR } -  \mathcal{D}  \circ \Big( \big( T_{p,N_l}^{\otimes (1-\alpha_p)n}\circ\Compactcdots \\& \Compactcdots \circ  (\mathcal{E} \otimes S_S(\sigma_S) )\big)\otimes \id_2^{*\otimes \beta_{p}n} \Big)  \circ (\id_{cl}^{\otimes nR} \otimes \varphi^{\otimes \beta_{p}n} )  \|_{1 \rightarrow 1}   \\&+\frac{2}{nR_{ea}}+\sqrt{\epsilon'_{dist}(4cp)}+\frac{3}{n}  
\end{IEEEeqnarray*}
Finally, we note that $\epsilon'_{dist}(4cp)$ goes to zero as $n\rightarrow \infty$, and so do $\frac{3}{ n}$ and $\frac{2}{nR_{ea}}$. The remaining summand goes to zero for all achievable rates $R_{AVP}$ of entanglement-assisted communication under AVP with perturbation probability $2(j_1+j_2)cp$ and quantum channel $N_l$. For any entanglement-assistance at rate $R_{ea}'\geq 1$, these achievable rates are described by the expression in Eq.~\eqref{eq-r-avp}. Since a fraction of the copies of $T$ are used in the distillation, the communication rate is thus reduced. In total, we find the following bound on the achievable rate of fault-tolerant entanglement-assisted communication:
\[R < (1-\alpha_p) R_{AVP}\]
The bound on $R_{ea}'$ also automatically implies that $R_{ea}\geq \frac{1}{\beta_{p}}= \frac{H(A)_{\varphi}}{1-h(4cp)-4cp\log(3)}$. In order to simplify notation in the main theorem, and since additional entanglement does not increase capacity, we will henceforth set this rate to  \[R_{ea} := \frac{\log(2)}{1-h(4cp)-4cp\log(3)}\]
This leads to a fault-tolerant entanglement-assisted capacity of
\begin{IEEEeqnarray*}{lCl}
C^{ea}_{\mathcal{F}(p)}(T) &\geq &(1-\alpha_p) C_{AVP}((j_1+j_2)cp,T)\\& \geq & C^{ea}(T)- f_1(p)\frac{C^{ea}(T)}{C_{\mathcal{F}(p)}(T)} -f_2(p) \end{IEEEeqnarray*}
where
\begin{equation*}
    f_1(p)= \frac{(h(4cp)+4cp\log(3))\log(2) }{1-h(4cp)-4cp\log(3)}
\end{equation*}
and
\begin{IEEEeqnarray*}{lCl}
f_2(p)
& = &2\sqrt{2{j_2p}} \big(2^{j_1+j_2} (j_1+j_2) +1\big) \big|2\log( \frac{2(j_1+j_2)cp }{2^{j_1j_2}})\big|  \\&&  +2h\big(\sqrt{2{j_2p}} 2^{j_1+j_2} |2\log( \frac{2(j_1+j_2)cp }{2^{j_1j_2}})| \big) \\& &
 +(1+4(j_1+j_2)cp)h\big(\frac{4(j_1+j_2)cp}{1+4(j_1+j_2)cp}\big) \\ &&   +10 (j_1+j_2)cpj_2 ,
\end{IEEEeqnarray*}
Using \cite[Theorem~V.8]{CMH20}, for any channel $T$ with classical capacity $C(T)>0$, we find an explicit function $0\leq f(p)\leq C(T)$ such that we have
\begin{IEEEeqnarray*}{lCl}
    C^{ea}_{\mathcal{F}(p)}(T)& \geq&  C^{ea}(T)- f_1(p)\frac{C^{ea}(T)}{C(T)-f(p)} -f_2(p)\\ & 
     \geq& C^{ea}(T)- f_1(p)\frac{C^{ea}(T)}{C(T)} \frac{1}{1-\frac{f(p)}{C(T)}} -f_2(p)\\&
    \geq& C^{ea}(T)- 2f_1(p)\frac{C^{ea}(T)}{C(T)}\Big(1+\frac{f(p)}{C(T)}\Big) -f_2(p)\\&
    \geq& C^{ea}(T)- 4f_1(p)\frac{C^{ea}(T)}{C(T)} -f_2(p)
\end{IEEEeqnarray*}
\end{IEEEproof}

Theorem~\ref{thm-main-result-1} is obtained as a direct consequence of this result:

\begin{IEEEproof}[Proof of Theorem~\ref{thm-main-result-1}]
For a given quantum channel $T$, we have 
$C^{ea}_{\mathcal{F}(p)}(T) \geq  C^{ea}(T)- 4f_1(p)\frac{C^{ea}(T)}{C(T)} -f_2(p) $ with the functions from Theorem~\ref{thm-final-coding-thm}.

Then, for any $\epsilon>0$, we can find a $p_0(T,\epsilon)$ such that $4f_1(p)\frac{C^{ea}(T)}{C(T)} +f_2(p)\leq\epsilon $ for all $0\leq p\leq p_0(T,\epsilon)$.
\end{IEEEproof}

It should be noted that the bound in Theorem~\ref{thm-final-coding-thm} is dependent on the individual channel $T$ and not only on its dimension, which would lead to a uniform convergence statement. Uniformity would follow if the quotient of classical and entanglement-assisted capacity were bounded for a given dimension, as has been conjectured in \cite{BSST02}.

\section{Conclusion and open problems}

\noindent The usual notion of capacity of a channel considers a perfect encoding of information into the channel, transfer via the (noisy) channel, and subsequent decoding. In real-world devices, this process of encoding and decoding the information cannot be assumed to be free of faults, which suggests the necessity of a modified notion of capacity.
Here, we show that entanglement-assisted transfer of information is possible at almost the same rates for fault-affected devices as long as the probability for gate error is below a threshold. %, and we give achievable rates of fault-tolerant entanglement-assisted communication. 

Coding theorems can be understood as a conversion between resources, where quantum channels, entangled states and classical channels are used to simulate one another. Based on the notation from \cite{DHW08}, we say $\alpha \geq_{FT(p)} \beta$ if there exists a fault-tolerant transformation from a resource $\alpha$ to a resource $\beta$ at gate error $p$ with asymptotically vanishing overall error. 
Then, our coding theorem in Theorem~\ref{thm-final-coding-thm} for fault-tolerant entanglement-assisted communication via a quantum channel $T:\mathcal{M}_{d_{A}} \rightarrow \mathcal{M}_{d_{B}}$ corresponds to the following resource inequality: for any pure state $\varphi$ on $\mathcal{M}_{d_A}\otimes \mathcal{M}_{d_{A'}}$, we have
\begin{IEEEeqnarray*}{lC} \IEEEeqnarraymulticol{2}{l}{ \langle T \rangle + \big(H(A)_{(T\otimes \id_{A'})(\varphi)}+ \mathcal{O}(p) \big) [qq] }\\ &\geq_{FT(p)} \big(I(A':B)_{(T\otimes \id_{A'})(\varphi)}+\mathcal{O}(p \log(p)) \big) [c\rightarrow c]  \end{IEEEeqnarray*}
which specifies the asymptotic resource trade-off for fault-tolerant entanglement-assisted communication.
For vanishing gate error $p$, this reduces to the standard resource inequality from \cite[Eq.~54]{DHW08}.

The presented results can be understood as a further development of the toolbox of quantum communication with noisy encoding and decoding devices. Even though we chose to present our work in the frame of an explicitly chosen setup for fault-tolerant computation (i.e. 7-qubit Steane code and i.i.d. Pauli noise), the buildup is modular in nature, allowing for the adaptation to other fault-tolerant scenarios. 

We envision that our treatment of entanglement distillation with noisy devices will find application in other quantum communication contexts (e.g. connecting quantum computers, quantum repeaters, multiparty quantum communication and quantum cryptography).

As it is not covered by the presented techniques, we leave the study of fault-tolerant communication via infinite-dimensional quantum channels for future work. 

% \section*{Acknowledgements}
% PB and MC acknowledge financial support from the European Research Council (ERC Grant Agreement no.\  81876), VILLUM FONDEN via the QMATH Centre of Excellence (grant no.\ 10059) and the Novo Nordisk Foundation (grant NNF20OC0059939 ‘Quantum for Life’). PB acknowledges funding from the European Union’s Horizon 2020 research and innovation programme under the Marie Skłodowska-Curie TALENT Doctoral fellowship (grant no.\ 801199). AMH acknowledges funding from the European Union’s Horizon 2020 
% research and innovation programme under the Marie
% Skłodowska-Curie Action TIPTOP (grant no.\ 843414) and by The Research 
% Council of Norway (project\ 324944).

% \section{A Final Checklist}
% \begin{enumerate}{}{}
% \item{Make sure that your equations are numbered sequentially and there are no equation numbers missing or duplicated. Avoid hyphens and periods in your equation numbering. Stay with IEEE style, i.e., (1), (2), (3) or for sub-equations (1a), (1b). For equations in the appendix (A1), (A2), etc.}. 
% \item{Are your equations properly formatted? Text, functions, alignment points in cases and arrays, etc. }
% \item{Make sure all graphics are included.}
% \item{Make sure your references are included either in your main LaTeX file or a separate .bib file if calling the external file.}
% \end{enumerate}

\appendices
\counterwithin*{equation}{section}
\renewcommand\theequation{\thesection\arabic{equation}}

\section{Coding error for entanglement-assisted communication} \label{appendix-codingerror}

\begin{theorem}
For any quantum channel $T_p: \mathcal{M}_{d_A} \rightarrow \mathcal{M}_{d_B}$, and any pure bipartite quantum state $\varphi\in \mathcal{M}_{d_A}\otimes \mathcal{M}_{d_{A'}}$, there exists an encoder $\mathcal{E}:\mathbbm{C}^{2^m}\otimes\mathcal{M}_2^{ \otimes \lfloor nR_{ea} \rfloor }\rightarrow\mathcal{M}_{d_A}^{\otimes n}$ and a decoder $\mathcal{D}:\mathcal{M}_{d_B}^{\otimes n}\otimes \mathcal{M}_2^{ \otimes \lfloor nR_{ea} \rfloor } \rightarrow \mathbbm{C}^{2^m}$ for $T_p$ such that
\begin{IEEEeqnarray*}{ll}
   &  \Xi( T_p^{\otimes n })\\&:= F\Big(X,\mathcal{D} \circ  \big( ( T_p^{\otimes n } \circ \mathcal{E} ) \otimes \id_{2}^{\otimes nR_{ea}} \big) (X \otimes \varphi^{\otimes nR_{ea} })\Big)  \\&\geq 1-{\epsilon_{ea}} 
\end{IEEEeqnarray*} 
for any classical message $x$ with the corresponding quantum state $X=\ketbra{x}$ of length $nR'$, where 
\begin{IEEEeqnarray*}{lCl}\epsilon_{ea}&\leq &%12 e^{-\frac{n\delta^2}{2(\log(\lambda_{\min}))^2}} \\& \hspace{0.25cm} + 16 \cdot 2^{-n(-H(A'B)_{(T_p\otimes \id_2)(\varphi)} -2\delta+H(A')_{(T_p\otimes \id_2)(\varphi)} -H(B)_{(T_p\otimes \id_2)(\varphi)} -\omega(n,\delta)-2\delta -R')} \\&= 
12 e^{-\frac{n\delta^2}{2(\log(\lambda_{\min}))^2}}\\&&  + 16 \cdot 2^{-n(I(A':B)_{(T_p\otimes \id_2)(\varphi)} -4\delta-\omega(n,\delta) -R')}\end{IEEEeqnarray*}
% \begin{align}\begin{split} \label{eq-coding-error} \epsilon_{ea}&\leq 12 e^{-\frac{n\delta^2}{2(\log(\lambda_{\min}))^2}} + 8 \cdot 2^{-n(I(A':B)_{(T_p\otimes \id_2)(\varphi)} -\eta(\delta,d_A,d_B)-d_Ad_B\frac{\log(n+1)}{n} -R')} \end{split}\end{align}
with the function $\eta(\delta,d_A,d_B)= (d_Ad_B\log(d_Ad_B) +4)  \delta+2h(d_Ad_B\delta)$ and with the smallest non-vanishing eigenvalue $\lambda_{\min}=\min\{\lambda\in \Spec((T_p\otimes\id_2)(\varphi))| \lambda>0 \}$.
\end{theorem}

\begin{IEEEproof}[Proof sketch]
In the proof of \cite[Theorem~21.4.1]{Wilde13}, we construct a coding scheme such that the conditions \cite[Eq.~(21.61)-(21.64))]{Wilde13} are fulfilled. These conditions correspond to the conditions in \cite[Eq.~(16.70)-(16.73)]{Wilde13} with $\epsilon=e^{-\frac{n\delta^2}{(\log(\lambda_{\min}))^2}}$ (by Hoeffding's bound \cite{Hoeffding63}), $d=2^{n (H(A'B)_{(T_p\otimes \id_2)(\varphi)}  +2\delta)}$ (\cite[Property~15.1.2]{Wilde13}) and $D=2^{n (H(A')_{(T_p\otimes \id_2)(\varphi)} +H(B)_{(T_p\otimes \id_2)(\varphi)} -\omega(n,\delta)-2\delta}) $ (\cite[Property~15.1.3]{Wilde13}) with some function $\omega(n,\delta)$. If these conditions are fulfilled, the Packing Lemma \cite[Corollary~16.5.1]{Wilde13} (non-randomized version) applies and can be used to bound the error probability as
\begin{IEEEeqnarray*}{lCl}
   \epsilon_{ea}&\leq& 4 (\epsilon+2\sqrt{\epsilon}) +16 \frac{d}{D} 2^{n R'} \\&\leq& 12 \sqrt{\epsilon} + 16 \frac{d}{D} 2^{n R'}. 
\end{IEEEeqnarray*}
 Inserting the corresponding $\epsilon$, $d$ and $D$ from the coding scheme into the bound from the Packing Lemma leads to the following expression for the coding error:
\begin{IEEEeqnarray*}{lCl}\epsilon_{ea}&\leq &%12 e^{-\frac{n\delta^2}{2(\log(\lambda_{\min}))^2}} \\& \hspace{0.25cm} + 16 \cdot 2^{-n(-H(A'B)_{(T_p\otimes \id_2)(\varphi)} -2\delta+H(A')_{(T_p\otimes \id_2)(\varphi)} -H(B)_{(T_p\otimes \id_2)(\varphi)} -\omega(n,\delta)-2\delta -R')} \\&= 
12 e^{-\frac{n\delta^2}{2(\log(\lambda_{\min}))^2}}\\&&  + 16 \cdot 2^{-n(I(A':B)_{(T_p\otimes \id_2)(\varphi)} -4\delta-\omega(n,\delta) -R')}.\end{IEEEeqnarray*}

Furthermore, we have that $\omega(n,\delta)=d_Ad_B \delta \log(d_Ad_B) +h(d_Ad_B \delta  ) + d_Ad_B\frac{\log(n+1)}{n} $ as a consequence of \cite[Property~14.7.5]{Wilde13} and \cite[Eq.~14.119]{Wilde13} (note also that we are using $2\delta$ instead of just $\delta$ in \cite{Wilde13}). Grouping together terms, we define a function $\eta(\delta,d_A,d_B)=({d_Ad_B}\log(d_Ad_B) + 4)\delta +h(d_Ad_B \delta  )$.
\end{IEEEproof}

\section{Fault-tolerant entanglement distillation} \label{appendix-entdistproof}

Here, we give the proof of Theorem~\ref{thm-ft-ent-dist-normally} and discuss the version of the distillation protocol we use in the proof of our main result.

\begin{IEEEproof}[Proof of Theorem~\ref{thm-ft-ent-dist-normally}]
%The proof employs techniques from \cite{AGP05} to relate the fault-affected circuit implementations sketched in Figure~\ref{fig-ft-ent-dist} to the ideal circuits via the threshold theorem and choosing a high enough concatenation level $l$. Due to Lemma~\ref{thm-effective-encoder}, the physical input states (which are maximally entangled states here) are acted upon by the effective interface.
%Thereby, they are effectively transformed into the noisy mixed states of the form $(N_{enc,p,l}\otimes N_{enc,p,l}) (\phi_+)= (1-4cp) \phi_+ + 4cp \tilde{\sigma}_l $ for some quantum state $\sigma_l$, which are twirled into Bell-diagonal form $\phi_{4cp}=(1-4cp)\phi_+ + \frac{4cp}{3} (\phi_- +\psi_+ + \psi_-)$ by the first step of the distillation protocol. For states of this form, the results from Theorem~\ref{thm-ent-dist-original} apply. The remainder of this section is devoted to the details of these steps.
The distillation protocol $\Dist:\mathcal{M}_{2}^{\otimes 2k}\rightarrow \mathcal{M}_{2}^{\otimes 2\beta(q) k}$ is constructed as follows: For any quantum state $\sigma$, the first step of the distillation protocol transforms a quantum state $\psi_q=(1-q)\phi_+ +q\sigma$ to a Bell-diagonal quantum state $\phi_{q}=(1-q)\phi_+ + \frac{q}{3} (\phi_- +\psi_+ + \psi_-)$. Then, the protocol executes the steps from \cite{DW03} applied to this Bell-diagonal state. We denote the local operations performed by the sender by a quantum channel $\Dist_{\mathcal{E}}:\mathcal{M}_{2}^{\otimes k}\rightarrow \mathcal{M}_{2}^{\otimes  \beta(q) k} \otimes \mathbbm{C}^{2^{(1-\beta(q)) k }}$, and the local operations on the receiver's side are described by a quantum channel $\Dist_{\mathcal{D}}:\mathcal{M}_{2}^{\otimes k}\otimes \mathbbm{C}^{2^{(1-\beta(q)) k }}\rightarrow \mathcal{M}_{2}^{\otimes  \beta(q) k} $ where
\[\beta(q)=1-h_2(q)-q\log(3) .\]
They perform classical communication of $(1-\beta(q)) k$ bits between them, which is modelled by a connection via the classical identity channel $\id_{cl}^{\otimes (1-\beta(q))  k}$.
%$\alpha(q) k$ bits between them, which is modelled by a connection via $\id_{cl}^{\otimes \alpha(q)  k}$ with
%\[\alpha(q)=1-\beta(q)+4\delta= h_2(q)+q\log(3)+4\delta\]
In total, as sketched in Figure \ref{fig-ft-ent-dist}, we have

\begin{IEEEeqnarray*}{lCl}
    \Dist &=& \Big(\id_2^{\otimes \beta(q) k} \otimes \Dist_{\mathcal{D}}\Big)\circ\Compactcdots \\&& \Compactcdots \circ  \Big( \id_{cl}^{\otimes (1-\beta(q)) k} \otimes \id_2^{\otimes k} \Big) \circ\Compactcdots \\&& \Compactcdots \circ \Big( {{\Dist_{\mathcal{E}}}}  \otimes \id_2^{\otimes k} \Big) . 
\end{IEEEeqnarray*}

These maps can be implemented in terms of specific gates using the Solovay-Kitaev theorem \cite{NC00}, with
\[\| \Dist_{\mathcal{E}}-\Gamma^{\Dist_{\mathcal{E}}} \|_{1\rightarrow1} \leq \frac{1}{k},\]
\[\| \Dist_{\mathcal{D}}-\Gamma^{\Dist_{\mathcal{D}}} \|_{1\rightarrow1}\leq \frac{1}{k}.\]

With these circuits as building blocks, we construct a circuit performing the distillation protocol from \cite{DW03} as
\begin{IEEEeqnarray*}{lCl}
\Gamma^{\Dist}&=  &\Big(\id_2^{\otimes \beta(q) k} \otimes \Gamma^{\Dist_{\mathcal{D}}} \Big) \circ\Compactcdots \\&& \Compactcdots \circ \Big( \id_{cl}^{\otimes (1-\beta(q)) k} \otimes \id_2^{\otimes k} \Big) \circ\Compactcdots \\&& \Compactcdots \circ \Big( \Gamma^{\Dist_{\mathcal{E}}}   \otimes \id_2^{\otimes k} \Big) .    
\end{IEEEeqnarray*}

This circuit has a fault-tolerant implementation using the concatenated 7-qubit Steane code $\mathcal{C}_l$ with concatenation level $l$ as introduced in Section \ref{sec-ft-circuits}, which we denote by $\Gamma^{\Dist}_{\mathcal{C}_l}$.

Because of \cite[Theorem~10]{DW03}, and using the Fuchs-van de Graaf inequalities \cite{FvdG97} to transform between fidelity and trace distance, we have
\begin{equation} \label{eq-dist0}
\|\Dist  (\psi_q^{\otimes k}) - \phi_+^{\otimes \beta(q) k}  \|_{\Tr} \leq  \sqrt{ \epsilon_{dist}(q) } \end{equation}
for quantum states of the form $\psi_q=(1-q)\phi_+ +q\sigma$ for some $\sigma$, where $\epsilon_{dist}(q)$ is the function from Theorem~\ref{thm-ent-dist-original}.

By triangle inequality, this implies that the circuit implementing the distillation protocol fulfills

\begin{equation}\label{eq-dist1} \|\Gamma^{\Dist} (\psi_q^{\otimes k}) - \phi_+^{\otimes \beta(q) k}  \|_{\Tr} \leq  \sqrt{ \epsilon_{dist}(q) } +\frac{2}{k}.
\end{equation}

%Now, we note that Lemma \ref{thm-effective-encoder} implies the following, because both parts have cl. input &output...
Now, we make use of Lemma \ref{thm-effective-encoder} to justify that we can find fault-tolerant implementations of the distillation circuits in the concatenated 7-qubit Steane code with concatenation level $l$. For any $0\leq p \leq \frac{p_0}{2}$ and any $l\in\mathbbm{N}$, there exist quantum states $\sigma_S^{\mathcal{E}}$ and $\sigma_S^{\mathcal{D}}$ and a quantum channel $N_l:\mathcal{M}_2\rightarrow \mathcal{M}_2$ which only depends on $l$ and the interface circuit $\EncI_l$, such that
\begin{IEEEeqnarray*}{Cl}
            \IEEEeqnarraymulticol{2}{l}{ \|  ( (\DecI_l^*)^{\otimes \beta(4cp) k}\otimes \id_{cl}^{\otimes (1-\beta(4cp)) k} ) \circ \big[\Gamma_{\mathcal{C}_l}^{\Dist_{\mathcal{E}}} \circ (\EncI_l^{\otimes k}) \big]_{\mathcal{F}(p)} }\\&- (\Gamma^{\Dist_{\mathcal{E}}} \circ N_{enc,l,p}^{\otimes k})\otimes \sigma_S^{\mathcal{E}} \|_{1\rightarrow 1} \\\leq &2 p_0 (\frac{p}{p_0})^{2^l} |\Loc(\Gamma^{\Dist_{\mathcal{E}}})|     \end{IEEEeqnarray*}  
and
\begin{IEEEeqnarray*}{Cl}
            \IEEEeqnarraymulticol{2}{l}{ 
        \| (\DecI_l^*)^{\otimes \beta(4cp) k}) \circ \big[\Gamma_{\mathcal{C}_l}^{\Dist_{\mathcal{D}}} \circ (\EncI_l^{\otimes k} \otimes \id_{cl}^{\otimes (1-\beta(4cp))k} ) \big]_{\mathcal{F}(p)} } \\&- (\Gamma^{\Dist_{\mathcal{D}}}\circ (N_{enc,l,p}^{\otimes k}\otimes \id_{cl}^{\otimes (1-\beta(4cp))k}) ) \otimes \sigma_S^{\mathcal{D}} \|_{1\rightarrow 1}  \\ \leq &2 p_0 (\frac{p}{p_0})^{2^l} |\Loc(\Gamma^{\Dist_{\mathcal{D}}})|
    \end{IEEEeqnarray*}   
with
\[N_{enc,l,p} = (1-2cp) \id_2 +2cpN_l\]
where $c=p_0 \max \{|\Loc(\EncI_1)|,|\Loc(\DecI_1\circ EC)|\}$.

In combination, we thereby obtain
\begin{IEEEeqnarray}{Cl}
  \IEEEeqnarraymulticol{2}{l}{\|(\DecI_l^*)^{\otimes 2\beta(q)k}\circ \big[ \Gamma^{\Dist}_{\mathcal{C}_l} \circ \EncI_l^{\otimes 2k} \big]_{\mathcal{F}(p)} (\phi_+^{\otimes k} )}\IEEEnonumber\\&-\Gamma^{\Dist}
    \circ (N_{enc,l,p}^{\otimes 2k}  (\phi_+^{\otimes k})) \otimes \sigma_S^{\mathcal{E}}\otimes \sigma_S^{\mathcal{D}}
    )\|_{\Tr} \label{eq-dist2}\\ \leq &  2 p_0 (\frac{p}{p_0})^{2^l} (|\Loc(\Gamma^{\Dist_{\mathcal{E}}})|+|\Loc(\Gamma^{\Dist_{\mathcal{D}}})|) \IEEEnonumber\\\leq &2 p_0 (\frac{p}{p_0})^{2^l} |\Loc(\Gamma^{\Dist})| \IEEEnonumber
   \end{IEEEeqnarray}

with a syndrome state $\sigma_S=\sigma_S^{\mathcal{E}}\otimes \sigma_S^{\mathcal{D}}$ that is a product state in the cut between the sender and the receiver.

Effectively, the physical input states, which are maximally entangled states here, are transformed as follows:
 
\begin{IEEEeqnarray*}{lCl}
             N_{enc,l,p}^{\otimes2}(\phi_+) &=& (1-2cp)^2 \phi_+ +(1-(1-2cp)^2) \sigma_l \\&\leq &(1-4cp) \phi_+ + 4cp \sigma_l\\& =: &\psi_{4cp} \end{IEEEeqnarray*}
with some quantum state $\sigma_l$, where we use Bernoulli's inequality. %\textcolor{red}{The first step of the distillation protocol twirls $\psi_{p}$ into Bell-diagonal form $\phi_{4cp}=(1-4cp)\phi_+ + \frac{4cp}{3} (\phi_- +\psi_+ + \psi_-)$. This needs clarity!}

In total, combining Eq.~\eqref{eq-dist1} and Eq.~\eqref{eq-dist2}, and using the triangle inequality for trace distance (where we use that $\sigma_S^{\mathcal{E}}\otimes \sigma_S^{\mathcal{D}}$ is a normalized quantum state), we find 
\begin{IEEEeqnarray*}{Cl}
            \IEEEeqnarraymulticol{2}{l}{ \|(\DecI_l^*)^{\otimes 2\beta(4cp)k}\circ \big[ \Gamma^{\Dist}_{\mathcal{C}_l} \circ \EncI_l^{\otimes 2k} \big]_{\mathcal{F}(p)} (\phi_+^{\otimes k})} \\& -\phi_+^{\otimes \beta(4cp)k}\otimes \sigma_S^{\mathcal{E}}\otimes \sigma_S^{\mathcal{D}}\|_{\Tr} \\
   \leq &
    \|(\DecI_l^*)^{\otimes 2\beta(4cp)k}\circ \big[ \Gamma^{\Dist}_{\mathcal{C}_l} \circ \EncI_l^{\otimes 2k} \big]_{\mathcal{F}(p)} (\phi_+^{\otimes k}) \\&  - (\Gamma^{\Dist}
    \circ (N_{enc,l,p}^{\otimes 2k} (\phi_+^{\otimes k})) \otimes \sigma_S^{\mathcal{E}}\otimes \sigma_S^{\mathcal{D}}
  \|_{\Tr} \\&  + \|\Gamma^{\Dist} 
    \circ (N_{enc,l,p}^{\otimes 2k} (\phi_+^{\otimes k})) - \phi_+^{\otimes \beta(4cp)k}  \|_{\Tr} 
    \\\leq& 2 p_0 (\frac{p}{p_0})^{2^l} |\Loc(\Gamma^{\Dist})|  + \|\Gamma^{\Dist}
    (\psi_{4cp}^{\otimes k}) - (\phi_+)^{\otimes \beta(4cp)k} \|_{\Tr}
    \\
   \leq &  2 p_0 (\frac{p}{p_0})^{2^l} |\Loc(\Gamma^{\Dist})|  +  \sqrt{ \epsilon_{dist}(4cp) } + \frac{2}{k}.
 \end{IEEEeqnarray*}

\end{IEEEproof}

% \section{Fault-tolerant preparation of an arbitrary pure entangled state in the code space}
% \label{appendix-entdilution}

 The proof of our main theorem \ref{thm-final-coding-thm} actually employs a modification of Theorem~\ref{thm-ft-ent-dist-normally}, where the entanglement distillation is performed with classical communication via a subset of the copies of the channel $T$, and an additional step of entanglement dilution to distill the bipartite pure entangled state $\varphi$ using the protocol from \cite{LP99}. Thereby, the fault-tolerant implementation of this distillation circuit distills a pure state $\varphi$ in the code space, where it is later used for communication based on the protocol from \cite{Wilde13}.

\begin{theorem}[{Entanglement dilution, \cite{LP99}, \cite[Eq.~(19.113)]{Wilde13}}] \label{thm-ent-dil-original}
For sufficiently large $k\in\mathbbm{N}$, there exists a $\zeta\geq0$ and a quantum channel $\Dil:\mathcal{M}_2^{\otimes H(A)_{\varphi} 2k}\rightarrow \mathcal{M}_2^{\otimes 2k}$ consisting of local operations and $\propto \zeta$ bits of one-way classical communication, such that $H(A)_{\varphi}k$ copies of the state $\phi_{+}\in \mathcal{M}_{2}\otimes \mathcal{M}_{2}$ are mapped to $k$ copies of the state $\varphi\in \mathcal{M}_{2}\otimes \mathcal{M}_{2}=:\mathcal{M}_{d_A}\otimes \mathcal{M}_{d_{A'}}$
with the following fidelity:
\[ F\big(\varphi^{\otimes k}, \Dil(\phi_+^{\otimes  H(A)_{\varphi} k}) \big) \geq 1- \epsilon_{dil}(q)\]
with 
\[\epsilon_{dil}(q) \leq 2 e^{-\frac{k \zeta^2}{2 \log(\lambda_{min}(\varphi) )^2} } + \sqrt{ 2 \cdot 2^{-k \zeta }}.\]
\end{theorem}

With this, we have the following adaptation of Theorem~\ref{thm-ft-ent-dist-normally} with an additional dilution step performed together with the entanglement distillation:

\begin{corollary}[Entanglement distillation with an extra dilution step]\label{thm-ent-dist-and-dil}
    For each $l\in \mathbbm{N}$, let $\mathcal{C}_l$ denote the $l$-th level of the concatenated 7-qubit Steane code with threshold $p_0$.
For any $0\leq p \leq \frac{p_0}{2}$ and for all $k\in \mathbbm{N}$ large enough, there exists a circuit %$\Delta:\mathcal{M}_2^{\otimes 2k}\rightarrow \mathcal{M}_2^{\otimes 2\beta(4cp)k}$
$\Gamma^{\Dist}:\mathcal{M}_2^{\otimes 2k}\rightarrow \mathcal{M}_2^{\otimes 2\beta(4cp)k}$
 using $(1-\beta(4cp))k$ bits of classical communication, and two quantum states $\sigma^{\mathcal{E}}_S \in \mathcal{M}_{d_{S_E}}$ and $\sigma^{\mathcal{D}}_S \in \mathcal{M}_{d_{S_D}}$ such that
         \begin{IEEEeqnarray*}{ll}
            \IEEEeqnarraymulticol{2}{l}{\|(\DecI_l^*)^{\otimes 2\beta(4cp)k}\circ \big[ \Gamma^{\Dist}_{\mathcal{C}_l}  \circ \EncI_l^{\otimes 2k} \big]_{\mathcal{F}(p)} (\phi_+)^{\otimes k}} \\ & -(\varphi)^{\otimes \beta(4cp)k}  \otimes \sigma^{\mathcal{E}}_S \otimes \sigma^{\mathcal{D}}_S\|_{\Tr}   \\ \leq&  p_0 \left(\frac{p}{p_0}\right)^{2^l} |\Loc(\Gamma^{\Dist})|  +  \sqrt{ \epsilon'_{dist}(4cp) } + \frac{2}{k} 
        \end{IEEEeqnarray*}
with the constant $c$ from Theorem~\ref{thm-correct-interfaces}, and $\beta(q)=\frac{1-h(q)-q\log(3)}{H(A)_{\varphi}}$, and 
\[\epsilon'_{dist}(4cp)\leq  \epsilon_{dist}(4cp) + \epsilon_{dil}(4cp) \]
where $\epsilon_{dist}(q)$ is the function from Theorem~\ref{thm-ent-dist-original} and $\epsilon_{dil}(q)$ is the function from Theorem~\ref{thm-ent-dil-original}.
\end{corollary}

%bibliography

\IEEEtriggeratref{18}
\bibliographystyle{IEEEtran} \bibliography{references} 

% Generated by IEEEtran.bst, version: 1.14 (2015/08/26)
\begin{thebibliography}{10}
\providecommand{\url}[1]{#1}
\csname url@samestyle\endcsname
\providecommand{\newblock}{\relax}
\providecommand{\bibinfo}[2]{#2}
\providecommand{\BIBentrySTDinterwordspacing}{\spaceskip=0pt\relax}
\providecommand{\BIBentryALTinterwordstretchfactor}{4}
\providecommand{\BIBentryALTinterwordspacing}{\spaceskip=\fontdimen2\font plus
\BIBentryALTinterwordstretchfactor\fontdimen3\font minus
  \fontdimen4\font\relax}
\providecommand{\BIBforeignlanguage}[2]{{%
\expandafter\ifx\csname l@#1\endcsname\relax
\typeout{** WARNING: IEEEtran.bst: No hyphenation pattern has been}%
\typeout{** loaded for the language `#1'. Using the pattern for}%
\typeout{** the default language instead.}%
\else
\language=\csname l@#1\endcsname
\fi
#2}}
\providecommand{\BIBdecl}{\relax}
\BIBdecl

\bibitem{BCMH23SHORT}
\BIBentryALTinterwordspacing
P.~Belzig, M.~Christandl, and A.~Müller-Hermes, ``Fault-tolerant coding for
  entanglement-assisted communication,'' in \emph{2023 IEEE International
  Symposium on Information Theory (ISIT)}, 2023, pp. 84--89. [Online].
  Available: \url{https://doi.org/10.1109/ISIT54713.2023.10206950}
\BIBentrySTDinterwordspacing

\bibitem{Shannon48}
C.~E. Shannon, ``A mathematical theory of communication,'' \emph{The Bell
  System Technical Journal}, vol.~27, no.~3, pp. 379--423, 1948.

\bibitem{Holevo96}
A.~S. Holevo, ``The capacity of the quantum channel with general signal
  states,'' \emph{IEEE Transactions on Information Theory}, vol.~44, no.~1, pp.
  269--273, 1998.

\bibitem{SW97}
B.~{Schumacher} and M.~D. {Westmoreland}, ``{Sending classical information via
  noisy quantum channels},'' \emph{Physical Review A}, vol.~56, no.~1, pp.
  131--138, 1997.

\bibitem{Lloyd97}
S.~Lloyd, ``Capacity of the noisy quantum channel,'' \emph{Physical Review A},
  vol.~55, no.~3, pp. 1613--1622, 1997.

\bibitem{Shor02}
P.~W. Shor, ``The quantum channel capacity and coherent information,'' Lecture
  notes, MSRI Workshop on Quantum Computation, available at
  www.msri.org/programs/53, 2002.

\bibitem{Devetak03}
I.~Devetak, ``The private classical capacity and quantum capacity of a quantum
  channel,'' \emph{IEEE Transactions on Information Theory}, vol.~51, no.~1,
  pp. 44--55, 2005.

\bibitem{BSST02}
C.~H. Bennett, P.~W. Shor, J.~A. Smolin, and A.~Thapliyal,
  ``Entanglement-assisted capacity of a quantum channel and the reverse
  {Shannon} theorem,'' \emph{Information Theory, IEEE Transactions on},
  vol.~48, pp. 2637--2655, 2002.

\bibitem{Nicolaidis11}
M.~Nicolaidis, \emph{Soft Errors in Modern Electronic Systems}.\hskip 1em plus
  0.5em minus 0.4em\relax Springer, Boston, MA, 2011.

\bibitem{Preskill18}
J.~Preskill, ``Quantum computing in the {NISQ} era and beyond,''
  \emph{Quantum}, vol.~2, p.~79, 2018.

\bibitem{CMH20}
M.~Christandl and A.~Müller-Hermes, ``Fault-tolerant coding for quantum
  communication,'' \emph{IEEE Transactions on Information Theory}, vol.~70,
  no.~1, pp. 282--317, 2024.

\bibitem{AB99}
\BIBentryALTinterwordspacing
D.~Aharonov and M.~Ben-Or, ``Fault-tolerant quantum computation with constant
  error rate,'' \emph{SIAM Journal on Computing}, vol.~38, no.~4, pp.
  1207--1282, 2008. [Online]. Available:
  \url{https://doi.org/10.1137/S0097539799359385}
\BIBentrySTDinterwordspacing

\bibitem{KLZ98}
E.~Knill, R.~Laflamme, and W.~H. Zurek, ``Resilient quantum computation: error
  models and thresholds,'' \emph{Proceedings of the Royal Society of London.
  Series A: Mathematical, Physical and Engineering Sciences}, vol. 454, no.
  1969, p. 365–384, 1998.

\bibitem{Kitaev03}
A.~Y. Kitaev, ``Fault-tolerant quantum computation by anyons,'' \emph{Annals of
  Physics}, vol. 303, no.~1, p. 2–30, 2003.

\bibitem{AGP05}
P.~Aliferis, D.~Gottesman, and J.~Preskill, ``Quantum accuracy threshold for
  concatenated distance-3 codes,'' \emph{Quantum information and computation},
  vol.~6, 2005.

\bibitem{Wilde13}
M.~M. Wilde, \emph{Quantum Information Theory}.\hskip 1em plus 0.5em minus
  0.4em\relax Cambridge University Press, 2013.

\bibitem{NC00}
M.~A. Nielsen and I.~L. Chuang, \emph{Quantum Computation and Quantum
  Information: 10th Anniversary Edition}.\hskip 1em plus 0.5em minus
  0.4em\relax Cambridge University Press, 2010.

\bibitem{Google21}
Z.~Chen, K.~J. Satzinger, J.~Atalaya \emph{et~al.}, ``Exponential suppression
  of bit or phase flip errors with repetitive error correction,''
  \emph{Nature}, vol. 595, no. 7867, pp. 383--387, 2021.

\bibitem{Steane96}
A.~Steane, ``Multiple-particle interference and quantum error correction,''
  \emph{Proceedings of the Royal Society of London. Series A: Mathematical,
  Physical and Engineering Sciences}, vol. 452, no. 1954, p. 2551–2577, 1996.

\bibitem{KL96}
E.~Knill and R.~Laflamme, ``Concatenated quantum codes,''
  \emph{arXiv:quant-ph/9608012}, 1996.

\bibitem{MGHHLPP14}
P.~Mazurek, A.~Grudka, M.~Horodecki, P.~Horodecki, J.~Łodyga, L.~Pankowski,
  and A.~Przysiężna, ``Long-distance quantum communication over noisy
  networks without long-time quantum memory,'' \emph{Physical Review A},
  vol.~90, no.~6, 2014.

\bibitem{BW92}
C.~H. Bennett and S.~J. Wiesner, ``Communication via one- and two-particle
  operators on {Einstein-Podolsky-Rosen} states,'' \emph{Phys. Rev. Lett.},
  vol.~69, pp. 2881--2884, 1992.

\bibitem{BBCJPW93}
C.~H. Bennett, G.~Brassard, C.~Cr\'epeau, R.~Jozsa, A.~Peres, and W.~K.
  Wootters, ``Teleporting an unknown quantum state via dual classical and
  {Einstein-Podolsky-Rosen} channels,'' \emph{Phys. Rev. Lett.}, vol.~70, pp.
  1895--1899, 1993.

\bibitem{LP99}
H.-K. Lo and S.~Popescu, ``Classical communication cost of entanglement
  manipulation: Is entanglement an interconvertible resource?'' \emph{Physical
  Review Letters}, vol.~83, no.~7, pp. 1459--1462, 1999.

\bibitem{HL04}
A.~Harrow and H.-K. Lo, ``A tight lower bound on the classical communication
  cost of entanglement dilution,'' \emph{IEEE Transactions on Information
  Theory}, vol.~50, pp. 319--327, 2004.

\bibitem{BDHSW14}
C.~H. Bennett, I.~Devetak, A.~W. Harrow, P.~W. Shor, and A.~Winter, ``The
  quantum reverse {Shannon} theorem and resource tradeoffs for simulating
  quantum channels,'' \emph{{IEEE} Transactions on Information Theory},
  vol.~60, no.~5, pp. 2926--2959, 2014.

\bibitem{HDW08}
M.-H. Hsieh, I.~Devetak, and A.~Winter, ``Entanglement-assisted capacity of
  quantum multiple-access channels,'' \emph{IEEE Transactions on Information
  Theory}, vol.~54, no.~7, pp. 3078--3090, 2008.

\bibitem{Hoeffding63}
W.~Hoeffding, ``Probability inequalities for sums of bounded random
  variables,'' \emph{Journal of the American Statistical Association}, vol.~58,
  no. 301, pp. 13--30, 1963.

\bibitem{BDNW18}
\BIBentryALTinterwordspacing
H.~Boche, C.~Deppe, J.~N\"{o}tzel, and A.~Winter, ``Fully quantum arbitrarily
  varying channels: Random coding capacity and capacity dichotomy,'' in
  \emph{2018 IEEE International Symposium on Information Theory (ISIT)}.\hskip
  1em plus 0.5em minus 0.4em\relax IEEE Press, 2018, p. 2012–2016. [Online].
  Available: \url{https://doi.org/10.1109/ISIT.2018.8437610}
\BIBentrySTDinterwordspacing

\bibitem{ABBN12}
R.~Ahlswede, I.~Bjelakovi{\'{c}}, H.~Boche, and J.~Nötzel, ``Quantum capacity
  under adversarial quantum noise: Arbitrarily varying quantum channels,''
  \emph{Communications in Mathematical Physics}, vol. 317, no.~1, pp. 103--156,
  2012.

\bibitem{FvdG97}
C.~Fuchs and J.~van~de Graaf, ``Cryptographic distinguishability measures for
  quantum-mechanical states,'' \emph{Information Theory, IEEE Transactions on},
  vol.~45, pp. 1216 -- 1227, 1999.

\bibitem{KW04}
D.~Kretschmann and R.~F. Werner, ``Tema con variazioni: quantum channel
  capacity,'' \emph{New Journal of Physics}, vol.~6, pp. 26--26, 2004.

\bibitem{Shirokov17}
M.~E. Shirokov, ``Tight uniform continuity bounds for the quantum conditional
  mutual information, for the {Holevo} quantity, and for capacities of quantum
  channels,'' \emph{Journal of Mathematical Physics}, vol.~58, no.~10, p.
  102202, 2017.

\bibitem{DW03}
I.~Devetak and A.~Winter, ``Distillation of secret key and entanglement from
  quantum states,'' \emph{Proceedings of the Royal Society A: Mathematical,
  Physical and Engineering Sciences}, vol. 461, 2003.

\bibitem{BPV98}
S.~Bose, M.~B. Plenio, and V.~Vedral, ``Mixed state dense coding and its
  relation to entanglement measures,'' \emph{arXiv:quant-ph/9810025}, 1998.

\bibitem{Bowen01}
G.~Bowen, ``Classical information capacity of superdense coding,'' \emph{Phys.
  Rev. A}, vol.~63, p. 022302, 2001.

\bibitem{ZZYS17}
Q.~Zhuang, E.~Y. Zhu, and P.~W. Shor, ``Additive classical capacity of quantum
  channels assisted by noisy entanglement,'' \emph{Phys. Rev. Lett.}, vol. 118,
  p. 200503, 2017.

\bibitem{HHHLT01}
M.~Horodecki, P.~Horodecki, R.~Horodecki, D.~W. Leung, and B.~M. Terhal,
  ``Classical capacity of a noiseless quantum channel assisted by noisy
  entanglement,'' \emph{Quantum Info. Comput.}, vol.~1, no.~3, p. 70–78,
  2001.

\bibitem{DHW08}
I.~Devetak, A.~W. Harrow, and A.~J. Winter, ``A resource framework for quantum
  {Shannon} theory,'' \emph{{IEEE} Transactions on Information Theory},
  vol.~54, no.~10, pp. 4587--4618, 2008.

\end{thebibliography}
\newpage
% if you will not have a photo at all:
\begin{IEEEbiographynophoto}{Paula Belzig}
 received the Ph.D. degree from the University of Copenhagen in 2023. She is currently a post-doctoral researcher at the Institute for Quantum Computing at the University of Waterloo. Her research focuses on the theory of quantum communication with and without entanglement, and with and without noise assumptions on the communication setup.
\end{IEEEbiographynophoto}
\begin{IEEEbiographynophoto}{Matthias Christandl}
 received the Ph.D. degree from the University of Cambridge. He is currently a Professor with the Department of Mathematical Sciences, University of Copenhagen, Denmark, where he leads the Quantum for Life Center. He has previously held faculty positions in Munich and at ETH Zürich. His research interests include quantum information theory and quantum computation. He is a member of the Royal Danish Academy of Sciences and Letters.
\end{IEEEbiographynophoto}
\begin{IEEEbiographynophoto}{Alexander Müller-Hermes}
 received the Ph.D. degree from the Technical University of Munich in 2015. After being in a post-doctoral position at the Centre of Mathematics in Quantum Theory (QMATH), University of Copenhagen, he obtained a Marie Skłodowska-Curie Fellowship with the Institute Camille Jordan, Université Claude Bernard Lyon 1. Since 2021, he has been an Associate Professor with the Department of Mathematics, University of Oslo. His research interests include the mathematical aspects of quantum information theory, quantum Shannon theory, cake cutting, and entanglement theory.
\end{IEEEbiographynophoto}

\end{document}